\title{Multi-graded Featherweight Java} 
\author{Riccardo Bianchini}{DIBRIS, Universit\`a di Genova, Italy}{riccardo.bianchini@edu.unige.it}{https://orcid.org/0000-0003-0491-7652}{}
\author{Francesco Dagnino}{DIBRIS, Universit\`a di Genova, Italy}{francesco.dagnino@dibris.unige.it}{https://orcid.org/0000-0003-3599-3535}{}
\author{Paola Giannini}{DiSSTE, Universit\`a del Piemonte Orientale, Italy}{paola.giannini@uniupo.it}{https://orcid.org/0000-0003-2239-9529}{}
\author{Elena Zucca}{DIBRIS, Universit\`a di Genova, Italy}{elena.zucca@unige.it}{https://orcid.org/0000-0002-6833-6470}{}
\authorrunning{R. Bianchini, F. Dagnino, P. Giannini, and E. Zucca} 
\keywords{Graded modal types, Java} 
\definecolor{violet}{rgb}{0.54, 0.17, 0.89}
\newif\ifsubmit
\newcommand{\EZComm}[1]{} 
\newcommand{\FDComm}[1]{} 
  \def\bez{} \def\eez{} 
  \def\bfd{} \def\efd{} 
\newcommand{\RBComm}[1]{} 
  \def\brb{} \def\erb{} 
\newcommand{\PGComm}[1]{} 
  \def\bpa{} \def\epa{} 
\newcommand{\EZComm}[1]{{\scriptsize \textcolor{blue}{[Elena{:} #1]}}}
\newcommand{\FDComm}[1]{{\scriptsize \textcolor{magenta}{[Francesco{:} #1]}}}
  \def\bez{\begin{color}{blue}} 
  \def\eez{\end{color}} 
\newcommand{\RBComm}[1]{{\scriptsize \textcolor{red}{[Riccardo{:} #1]}}}
  \def\bfd{\begin{color}{magenta}} 
  \def\efd{\end{color}} 
  \def\brb{\begin{color}{red}} 
  \def\erb{\end{color}} 
\newcommand{\PGComm}[1]{{\scriptsize \textcolor{violet}{[Paola{:} #1]}}}
  \def\bpa{\begin{color}{violet}} 
  \def\epa{\end{color}} 
\newcommand{\refToRule}[1]{\textsc{\small (#1)}}
\newcommand{\refItem}[2]{\cref{#1}(\ref{#1:#2})}
\newcommand{\Space}{\hskip 0.7em}
\newcommand{\BigSpace}{\hskip 1.5em}
\newcommand{\HugeSpace}{\hskip 10em}
\newcommand{\meta}[1]{\colorbox{lightgray}{$#1$}}
\newcommand{\Tuple}[1]{\ple{#1}} 
\newcommand{\Pair}[2]{\Tuple{#1,#2}}
\newcommand{\Subst}[3]   {#1[#2/#3]}
\newcommand{\SubstDots}[5]   {#1[#2/#3\ldots#4/#5]}
\newcommand{\AddToMap}[3]   {#1,#2\mapsto#3}
\newcommand{\AddToMapDots}[5]   {#1,#2\mapsto#3,\ldots,#4\mapsto#5}
\newcommand{\dom}[1]{\aux{dom}(#1)}
\newcommand{\NamedRule}[4]{\scriptstyle{\textsc{(#1)}}\Space
\displaystyle                  
\frac{#2}{#3}         
\begin{array}{l}
#4     
\end{array}
}
\newenvironment{grammatica}{$\begin{array}{lcll}}{\end{array}$}
\newcommand{\produzione}[3]{#1&::=&#2&\mbox{#3}}
\newcommand{\seguitoproduzione}[2]{&&#1&\mbox{#2}}
\newcommand{\terminale}[1]{\texttt{#1}}
\newcommand{\aux}[1]{\textsf{#1}}
\newcommand{\metavariable}[1]{\mathit{#1}}
\newcommand{\produzioneinline}[2]{#1::=#2}
\newcommand{\LambdaExpWithType}[3]{\lambda #1{:}#2.#3}
\newcommand{\LambdaExpWithTypeAndCoeffect}[4]{\lambda #1{:}#2[#3].#4}
\newcommand{\INT}{\aux{int}}
\newcommand{\FJ}{\textsc{FJ}}
\newcommand{\C}{\metavariable{C}}
\newcommand{\D}{\metavariable{D}}
\newcommand{\T}{\metavariable{T}}
\newcommand{\m}{\metavariable{m}}
\newcommand{\f}{\metavariable{f}} 
\newcommand{\x}{\metavariable{x}}
\newcommand{\y}{\metavariable{y}} 
\newcommand{\z}{\metavariable{z}}
\newcommand{\e}{\metavariable{e}}
\newcommand{\es}{\metavariable{es}}
\newcommand{\this}{\terminale{this}}
\newcommand{\ConstrCall}[2]{\terminale{new}\; #1\terminale{(}#2\terminale{)}}
\newcommand{\ConstrCallTuple}[3]{{\ConstrCall{#1}{#2_1,  \ldots, #2_{#3}}}}
\newcommand{\ConstrCallATuple}[4]{{\ConstrCall{#1}{\annotated{#2_1}{#3_1},  \ldots, \annotated{#2_{#4}}{#3_{#4}}}}}
\newcommand{\FieldAccess}[2]{#1\terminale{.}#2}
\newcommand{\Block}[4]{\{ #1 \, #2 =#3 \terminale{;}\, #4 \}}
\newcommand{\MethCall}[3]{#1{\terminale{.}}#2\terminale{(}#3\terminale{)}}
\newcommand{\MethCallTuple}[4]{\MethCall{#1}{#2}{#3_1,  \ldots, #3_{#4}}}
\newcommand{\MethCallATuple}[5]{\MethCall{#1}{#2}{\annotated{#3_1}{#4_1},  \ldots, \annotated{#3_{#5}}{#4_{#5}}}}
\newcommand{\Field}[2]{#1\ #2\terminale{;}}
\newcommand{\val}{\metavariable{v}}
\newcommand{\vs}{\metavariable{vs}}
\newcommand{\fields}[1]{\aux{fields}(#1)}
\newcommand{\mbody}[2]{{\aux{mbody}(#1,#2)}}
\newcommand{\ev}{\rightarrow}
\newcommand{\evstar}{\ev^\star}
\newcommand{\evdiv}{\ev^\omega}
\newcommand{\reduce}[4]{\ExpStack{#1}{#2}\ev\ExpStack{#3}{#4}}
\newcommand{\EmptyCtx}{\emptyset}
\newcommand{\ROrdCtx}{\rord} 
\newcommand{\SumCtx}{+}
\newcommand{\MulCtx}{\cdot}
\newcommand{\typeGrad}[2]{#1^{#2}}
\newcommand{\rGrad}{\metavariable{r}}
\newcommand{\sGrad}{\metavariable{s}}
\newcommand{\tGrad}{\metavariable{t}}
\newcommand{\IsWFExp}[4][]{#2\vdash_{#1}#3:#4}
\newcommand{\IsWFExpA}[3]{\IsWFExp[a]{#1}{#2}{#3}} 
\newcommand{\IsWFExpI}[4]{\IsWFExp{#1}{#2}{#3} \rightsquigarrow  #4}
\newcommand{\IsNotWFExp}[4][]{#2\not\vdash_{#1}#3:#4}
\newcommand{\IsWFEnvA}[2]{ #1 \vdash_a #2 }
\newcommand{\IsWFEnvI}[3]{#1 \vdash #2 \rightsquigarrow #3} 
\newcommand{\IsWFConfA}[4]{ \IsWFExp[a]{#1}{\ExpStack{#2}{#3}}{#4} } 
\newcommand{\IsWFConfI}[6]{ \IsWFExp{#1}{\ExpStack{#2}{#3}}{#4} \rightsquigarrow \ExpStack{#5}{#6}} 
\newcommand{\mtype}[2]{{\aux{mtype}(#1,#2)}}
\newcommand{\VarTypeCoeffect}[3]{#1:_{#3}#2}
\newcommand{\VarCoeffect}[2]{#1:#2}
\newcommand{\VarType}[2]{#1:#2}
\newcommand{\funType}[2]{#1 \rightarrow #2}
\newcommand{\Deriv}{{\cal D}}
\newcommand{\stack}{\rho}
\newcommand{\ExpStack}[2]{#1{\mid}#2}
\newcommand{\reducestack}[5]{\ExpStack{#1}{#2}\ev_{#3}\ExpStack{#4}{#5}}
\newcommand{\reducestackstar}[5]{\ExpStack{#1}{#2}\evstar_{#3}\ExpStack{#4}{#5}}
\newcommand{\reducestackdiv}[3]{\ExpStack{#1}{#2}\evdiv_{#3} }
\newcommand{\emptystack}{\emptyset}
\newcommand{\hpEl}[3]{#1\mapsto\Pair{#2}{#3}}
\newcommand{\annotated}[2]{{[#1]}_{#2}}
\newcommand{\Att}{\texttt{A}}
\newcommand{\Btt}{\Pairtt}
\newcommand{\pxtt}{\texttt{p\_}}
\newcommand{\Pairtt}{\texttt{Pair}}
\newcommand{\ptt}{\texttt{p}}
\newcommand{\att}{\texttt{a}}
\newcommand{\first}{\texttt{first}}
\newcommand{\second}{\texttt{second}}
\newcommand{\pairaa}{\val_\Pairtt}
\newcommand{\fO}{\first}
\newcommand{\fT}{\second}
\newcommand{\fX}{\texttt{f}}
\newcommand{\private}{\aux{private}}
\newcommand{\public}{\aux{public}}
\newcommand{\mutable}{\aux{mutable}}
\newcommand{\readonly}{\aux{readonly}}
\newcommand{\privateS}{\aux{priv}}
\newcommand{\publicS}{\aux{pub}}
\newcommand{\xtt}{\texttt{x}}
\newcommand{\ytt}{\texttt{y}}
\newcommand{\erase}[1]{\lceil#1\rceil} 
\newcommand{\NKind}{\aux{N}}
\newcommand{\TKind}{\aux{T}}
\newcommand{\AKind}{\aux{A}}
\newcommand{\PKind}{\aux{P}}
\newcommand{\PPKind}{\aux{PP}}
\newcommand{\APKind}{\aux{AP}}
\newcommand{\kind}{\metavariable{k}}
\newcommand{\KSet}{\mathcal{K}}
\newcommand{\privA}{\aux{a}}
\newcommand{\privB}{\aux{b}}
\newcommand{\privC}{\aux{c}}
\newcommand{\privD}{\aux{d}}
\newcommand{\refines}{\sqsubset^1}
\newcommand{\PredefMethCall}[3]{\MethCall{#1}{\texttt{#2}}{#3}}
\newcommand{\SumMethod}[2]{\PredefMethCall{#1}{sum}{#2}}
\begin{document}

\maketitle


\begin{abstract}
\bez 
\emph{Resource-aware} type systems statically approximate not only the expected result type of a program, but also the way external resources are used, e.g., how many times the value of a variable is needed.
We extend the type system of Featherweight Java to be resource-aware, parametrically on an arbitrary \emph{grade algebra} modeling a specific usage of resources. 
We prove that \bpa this \epa type system is \emph{sound} with respect to a resource-aware version of reduction,  that is, a well-typed program has a reduction sequence which does not get stuck due to resource consumption. Moreover, we show that the available grades can be \emph{heterogeneous}, that is, obtained by combining
 grades of different kinds, 
 \bpa via \epa 
 a minimal collection of homomorphisms  from one kind to another.  Finally, we show how grade algebras and homomorphisms can be specified as Java classes, so that grade annotations in types can be written in the \mbox{language itself.}
 \eez
\end{abstract}

%
%


\section{Introduction}
\label{sect:intro} 

Recently, a considerable amount of research  \cite{PetricekOM14,BrunelGMZ14,Atkey18,GaboardiKOBU16,GhicaS14,OrchardLE19,ChoudhuryEEW21,DalLagoG22}  has been devoted to type systems allowing  reasoning  about resource usage. 
In \emph{(type-and-)coeffect systems}, the typing judgment takes  the  shape $\IsWFExp{\VarTypeCoeffect{\x_1}{\T_1}{\rGrad_1},\ldots,\VarTypeCoeffect{\x_n}{\T_n}{\rGrad_n}}{\e}{\T}$, where the \emph{coeffect} (\emph{grade})  $\rGrad_i$ models how variable $\x_i$ is used in $\e$. For instance, coeffects of shape $\produzioneinline{\rGrad}{0\mid1\mid\omega}$ trace when a variable is either not used, or used at most once, or used in an unrestricted way, respectively. In this way, functions, e.g., $\LambdaExpWithType{\x}{\INT}{5}$, $\LambdaExpWithType{\x}{\INT}{\x}$, and $\LambdaExpWithType{\x}{\INT}{\x+\x}$, which have the same type in the simply-typed lambda calculus, can be distinguished by adding coeffect annotations: $\LambdaExpWithTypeAndCoeffect{\x}{\INT}{0}{5}$, $\LambdaExpWithTypeAndCoeffect{\x}{\INT}{1}{\x}$, and $\LambdaExpWithTypeAndCoeffect{\x}{\INT}{\omega}{\x+\x}$. Other examples are exact usage (coeffects are natural numbers), and privacy levels.  \emph{Graded modal types} go further, by decorating types themselves with grades,  in order  to specify how \emph{the result of an expression} should be used. 
In the different proposals in literature, grades have a similar algebraic structure, basically a semiring specifying \emph{sum} $\rsum$, \emph{multiplication} $\rmul$, and $0$ and $1$ constants, and some kind of order relation. Here, we will assume a variant of this notion called \mbox{\emph{grade algebra}.}

Resource-aware typing has been exploited  in  a fully-fledged programming language in Granule \cite{OrchardLE19}, a functional language equipped with graded modal types, hence allowing the programmer to write function declarations  similar to  those above. In Granule, different kinds of coeffects can be used at the same time, including naturals for exact usage, \bez privacy \eez levels, intervals, infinity, and products of coeffects; however, available grades are fixed  in the language. 
The initial objective of the work presented here was to study a similar support  for  Java-like languages, by introducing, in a variant of Featherweight Java ($\FJ$) \cite{IgarashiPW99}, types decorated with grades. Moreover, we wanted  these  grades to be taken, parametrically, in an arbitrary grade algebra; even more, we  did not  want  this  grade algebra  to be fixed as in Granule, but  to be extendable  by the programmer with user-defined grades, by relying on the inheritance mechanism of OO languages. In the quest for such goals, we came  up  with several ideas which are novel,  to  our knowledge, with respect to the literature on resource-aware type systems, as detailed in the outline of contributions given below.

\smallskip
\noindent\textbf{Resource-aware parametric $\FJ$ reduction.} Given a resource-aware type system, we would like to prove that typing overapproximates the use of resources. However,  resource usage  is not modeled in standard operational semantics; for this reason,  \cite{ChoudhuryEEW21} proposed an instrumented operational semantics\footnote{Subsequently the model of \cite{ChoudhuryEEW21} was used, in \cite{MarshalV22}, to trace reference counting for uniqueness.}
and proved a soundness theorem  showing  correct accounting of
resource usage. Inspired by this work, we define a \emph{resource-aware semantics} for  $\FJ$,  parametric on an arbitrary grade algebra, which tracks how much each available resource is consumed at each step, and is stuck when the needed amount of a resource is not available. Differently from \cite{ChoudhuryEEW21}, the definition of the semantics is given \emph{independently}  from  the  type system, as is the standard approach in calculi.  That is, the aim is also to provide a simple purely semantic model which takes into account usage of resources. 
The resource-aware reduction is sound with respect to the standard reduction, but clearly not complete, since a reduction step allowed in the standard semantics could be  impossible  due to resource consumption.

\smallskip
\noindent\textbf{Graded $\FJ$.} After defining the resource-aware calculus, we define the resource-aware type system. That is, types are decorated with grades, allowing the programmer to specify how a variable, a field  or  the result of a method should be used, e.g., how many times. Our approach is novel with respect to that generally used in  the  literature on graded modal types. Notably, in such works the production of types is $\produzioneinline{\T}{\ldots \mid \typeGrad{\T}{\rGrad}}{}$, that is, grade decorations can be arbitrarily nested. Correspondingly, the syntax includes an explicit \emph{box}  construct, which transforms a term of type $\T$ into a term of type $\typeGrad{\T}{\rGrad}$, through a \emph{promotion} rule which multiplies the context with $\rGrad$, and a corresponding unboxing mechanism. Here, we prefer a much lighter approach, likely more convenient for Java-like languages, where the syntax of  terms  is not affected. The production for types is  $\produzioneinline{\T}{\typeGrad{\C}{\rGrad}}{}$, that is, all types (here only class names) are (once) graded; in contexts, types are non-graded, and grades are used as coeffects, leading to a judgment of shape $\IsWFExp{\VarTypeCoeffect{\x_1}{\C_1}{\rGrad_1},\ldots,\VarTypeCoeffect{\x_n}{\C_n}{\rGrad_n}}{\e}{\typeGrad{\C}{\rGrad}}$. Finally, since there is no boxing/unboxing, there is no explicit promotion rule, but different grades can be assigned to an expression, assuming different coeffect contexts. 
We prove a soundness theorem, stating  that the graded type system overapproximates resource usage, hence  guaranteeing  soundness, and, as a consequence, completeness with respect to standard reduction for well-typed programs.

\smallskip
\noindent\textbf{Combining grades.} The next matter is how to make the language \emph{multi-graded},  in the sense that the programmer can use grades of different kinds, e.g., both natural numbers and privacy levels. This poses the problem of defining the result when grades of different kinds should be combined by the type system. This issue has been considered in the Granule language \cite{OrchardLE19}, where, however, the available kinds of grades are fixed,  hence can be combined in an ad-hoc way.  We would like to have much more flexibility, that is, to allow the programmer to define grades to be added to those already available, very much in the same way a Java programmer can define her/his own class of exceptions. To this end, we define a construction which, given a family of grade algebras and a family of homomorphisms, leads to a unique grade algebra of \emph{heterogeneous grades}.  This allows a modular approach, in the sense that the developed meta-theory, including the proof of results, applies to this case as well. 

\smallskip
\noindent\textbf{Grades as Java expressions.} Finally, we consider the issue of providing linguistic support to specify the desired grade algebras and homomorphisms. Of course this could be done by using an ad-hoc configuration language. However, we believe an interesting solution is that the grade annotations could be written themselves in Java, again analogously to what happens with exceptions. We describe how Java classes corresponding to grade algebras and homomorphisms could be written, providing some examples.

A preliminary step towards the results described in the current paper is \cite{BianchiniDGZ22}, which proposes a first version of the type system with only coeffects (types are not graded), and  a rudimentary  version of the construction described above where combining coeffects of different kinds leads to the trivial coeffect. 

In \cref{sect:algebraic} we formally define grade algebras and related notions. In \cref{sect:resource-aware} we define the parametric resource-aware reduction for $\FJ$, and in \cref{sect:GrFJ}
the parametric resource-aware type system, proving its soundness. \cref{sect:combining} defines the construction of the grade algebra of heterogeneous grades, and \cref{sect:java} illustrates how to express grade algebras and homomorphisms in Java. Finally, \cref{sect:related} surveys related work and \cref{sect:conclu} summarizes the contributions of the paper, and outlines future work. Proofs of \cref{sect:combining} are provided in the Appendix.


\section{Algebraic preliminaries}
\label{sect:algebraic} 

In this section we introduce the algebraic structures we will use throughout the paper. 
The core of our work \bpa is \epa \emph{grades}, namely, annotations in the code expressing how or how much resources are used by the program. 
As we will see, we need some operations to properly combine grades in  the resource-aware semantics and in the typing rules, hence we will assume grades to form an algebraic structure called \emph{grade algebra} defined below. 

\begin{definition}[Grade algebra] \label{def:gr-alg}
A \emph{grade algebra}  is a tuple $\RR = \ple{\RSet,\rord,\rsum,\rmul,\rzero,\rone}$ such that: 
\begin{itemize}
\item \ple{\RSet,\rord} is a partially ordered set; 
\item \ple{\RSet,\rsum,\rzero} is a commutative monoid; 
\item \ple{\RSet,\rmul,\rone} is a monoid; 
\end{itemize}
and the following axioms are satisfied: 
\begin{itemize}
\item $\rgr\rmul(\sgr\rsum\tgr) = \rgr\rmul \sgr\rsum\rgr\rmul\tgr$ and $(\sgr\rsum\tgr)\rmul\rgr = \sgr\rmul\rgr\rsum\tgr\rmul\rgr$, for all $\rgr,\sgr,\tgr\in\RSet$; 
\item $\rgr\rmul\rzero = \rzero$ and $\rzero\rmul\rgr = \rzero$, for all $\rgr\in\RSet$; 
\item if $\rgr\rord\rgr'$ and $\sgr\rord\sgr'$ then  $\rgr\rsum\sgr\rord\rgr'\rsum\sgr'$ and $\rgr\rmul\sgr\rord\rgr'\rmul\sgr'$, for all $\rgr,\rgr'\sgr,\sgr'\in\RC\RR$; 
\item $\rzero \rord \rgr$, for all $\rgr\in\RSet$. 
\end{itemize}
\end{definition}

Essentially, a grade algebra is an ordered semiring, that is, 
a semiring together with a partial order relation on its carrier which makes addition and multiplication monotonic with respect to it. 
We further require 
the zero of the semiring to be the least element of the partial order. 
 Our definition is a slight variant of others proposed in literature \cite{BrunelGMZ14,GhicaS14,McBride16,Atkey18,GaboardiKOBU16,AbelB20,OrchardLE19,ChoudhuryEEW21,WoodA22}.
In particular, the partial order models overapproximation in the usage of resources, and allows  flexibility, for instance we can have  different usage in the branches of an if-then-else construct. The fact that the zero is the least element means that, in particular, overapproximation can add unused variables, making the calculus \emph{affine}.

\begin{example}\label{ex:gr-alg}
\begin{enumerate}
\item\label{ex:gr-alg:1}
The semiring $\NN = \ple{\N,\le,+,\cdot,0,1}$ of natural numbers with the natural order and usual arithmetic operations is a grade algebra. 

\item\label{ex:gr-alg:0}
The \emph{affinity} grade algebra $\ple{\{0,1,\infty\},\le,+,\cdot,0,1}\}$ is obtained from the previous one by identifying all natural numbers greater than $1$.

\item\label{ex:gr-alg:2}
The trivial semiring \Triv, whose carrier is a singleton set $\RC\Triv = \{\infty\}$, the partial order is the equality, addition and multiplication are defined in the trivial way and $\rzero_\Triv = \rone_\Triv = \infty$, is a grade algebra. 
\item\label{ex:gr-alg:3} 
The semiring $\RRPos = \ple{\RPos,\le,+,\cdot,0,1}$ of extended non-negative real numbers with usual order and operations, extended  to $\infty$ in the expected way, is a grade algebra. 
\item\label{ex:gr-alg:4} 
A distributive lattice $\LL = \ple{\RC\LL, \le, \lor,\land,\bot,\top}$, where $\lor$ and $\land$ denote join and meet operations and $\bot$ and $\top$ the bottom and the top element, respectively, is a grade algebra. 

 \item\label{ex:gr-alg:5} 
The boolean grade algebra is $\BB = \ple{\{0,1\}, \le, \lor,\land,0,1}$ where $0\le 1$. 
It is a grade algebra since it is a distributive lattice. 

\item\label{ex:gr-alg:6} 
Given grade algebras $\RR = \ple{\RSet,\rord_\RR,\rsum_\RR,\rmul_\RR,\rzero_\RR,\rone_\RR}$ and $\RS = \ple{\SSet,\rord_\RS,\rsum_\RS,\rmul_\RS,\rzero_\RS,\rone_\RS}$,
the \emph{product} $\RR\times\RS = \ple{\{\Pair{\rgr}{\sgr}\ |\ \rgr\in\RSet\ \wedge\ \sgr\in\SSet\},\rord,\rsum,\rmul,\Pair{\rzero_\RR}{\rzero_\RS},\Pair{\rone_\RR}{\rone_\RS}}$, where operations are the pairwise application of the operations for $\RR$ and $\RS$, is a grade algebra. 

\item\label{ex:gr-alg:7} 
Given a  grade algebra $\RR = \ple{\RSet,\rord_\RR,\rsum_\RR,\rmul_\RR,\rzero_\RR,\rone_\RR}$, as in \cite{OrchardLE19}
we define
$\Extend{\RR} = \ple{\RSet + \{\infty\},\rord,\rsum,\rmul,\rzero_\RR,\rone_\RR}$ where $\rord$ extends $\rord_\RR$ by adding
$\rgr\rord\infty$ for all $\rgr\in\RC{\Extend{\RR}}$ and $\rsum$ and $\rmul$ extend $\rsum_\RR$ and $\rmul_\RR$ by 
$\rgr\rsum\infty = \infty\rsum\rgr = \infty$, for all $\rgr\in\RC{\Extend\RR}$, and 
$\rgr\rmul\infty = \infty\rmul\rgr = \infty$, for all $\rgr\in\RC{\Extend\RR}$ with $\rgr\ne\rzero_\RR$, and 
$\rzero_\RR\rmul\infty = \infty\rmul\rzero_\RR = \rzero_\RR$. 
Then, $\Extend{\RR}$ is a grade algebra.
\end{enumerate}
\end{example} 

A homomorphism of  grade algebras \fun{f}{\RR}{\RS} is a monotone function \fun{f}{\ple{\RSet,\rord_\RR}}{\ple{\SSet,\rord_{\RS}}} between the underlying partial orders, which preserves the semiring structure, that is, satisfies  the following equations: 
\begin{itemize}
\item $f(\rzero_\RR) = \rzero_\RS$  and $f(\rgr\rsum_\RR\sgr) = f(\rgr)\rsum_\RS f(\sgr)$, for all $\rgr,\sgr\in\RSet$; 
\item $f(\rone_\RR) = \rone_\RS$ and $f(\rgr\rmul_\RR\sgr) = f(\rgr)\rmul_\RS f(\sgr)$, for all $\rgr,\sgr\in\RSet$. 
\end{itemize}
Grade algebras and their homomorphisms form a category denoted by \GrAlg. 

 Consider a grade algebra \RR. 
Then, we can define functions 
\fun{\zeta_\RR}{\RC\RR}{\RC\Triv} and \fun{\iota_\RR}{\RC\NN}{\RC\RR} as follows: \label{nat-hom}
\[
\zeta_\RR(\rgr) = \infty \qquad 
 \iota_\RR(m)= \begin{cases}
\rzero_\RR & \text{if }m=0\\
\iota_\RR(n) \rsum_\RR \rone_\RR &\text{if }m=n+1 
\end{cases}
\]
Roughly, $\zeta_\RR$ maps every element of \RR to $\infty$, while $\iota_\RR$ maps a natural number $n$ to the sum in \RR of $n$ copies of $\rone_\RR$. 
We can easily check that both these functions give rise to homomorphisms \fun{\zeta_\RR}{\RR}{\Triv} and \fun{\iota_\RR}{\NN}{\RR}. 
This fact for $\zeta_\RR$ is straightforward, while for $\iota_\RR$ follows by arithmetic induction. 
Then, relying on these homomorphisms, we can prove the following result.

\begin{proposition}\label{prop:ini-fin-gralg}
The following facts hold: 
\begin{enumerate}
\item\label{prop:ini-fin-gralg:1} \NN is the initial object in \GrAlg; 
\item\label{prop:ini-fin-gralg:2} \Triv is the terminal object in \GrAlg. 
\end{enumerate}
\end{proposition}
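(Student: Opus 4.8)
The plan is to establish each part by exhibiting a homomorphism and then proving it is the only one. In both directions a suitable homomorphism is already at hand, namely $\iota_\RR$ for the claim about $\NN$ and $\zeta_\RR$ for the claim about $\Triv$, both of which were checked above to be homomorphisms. Hence existence is free, and the entire task reduces to uniqueness.

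The terminal part is the easier one. Fix a grade algebra $\RR$. Since the carrier of $\Triv$ is the singleton $\RC\Triv = \{\infty\}$, there is exactly one function $\RC\RR \to \RC\Triv$ at the level of underlying sets: the constant map sending every element to $\infty$. This function is precisely $\zeta_\RR$, which is already known to be a homomorphism. Therefore it is trivially the unique homomorphism $\RR \to \Triv$, so $\Triv$ is terminal. Note that no structural reasoning is needed: uniqueness is forced already at the level of carriers.

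For the initial part, fix a grade algebra $\RR$ and let $\fun{f}{\NN}{\RR}$ be any homomorphism; I must show $f = \iota_\RR$. The key observation is that the additive monoid of $\NN$ is generated by $1$: every natural number is obtained from $0$ by repeatedly adding $1$, and in particular $n+1$ is the monoid sum of $n$ and $1$. I then argue by induction on $n$ that $f(n) = \iota_\RR(n)$. In the base case, preservation of zero gives $f(0) = \rzero_\RR = \iota_\RR(0)$. For the inductive step, since $n+1$ is the sum of $n$ and $1$ in $\N$, preservation of sum and of one yield $f(n+1) = f(n)\rsum_\RR f(1) = f(n)\rsum_\RR\rone_\RR$; applying the induction hypothesis together with the defining clause of $\iota_\RR$, this equals $\iota_\RR(n)\rsum_\RR\rone_\RR = \iota_\RR(n+1)$. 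Hence $f$ and $\iota_\RR$ agree everywhere, so $\iota_\RR$ is the unique homomorphism and $\NN$ is initial.

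There is no serious obstacle here, but one point is worth highlighting. For uniqueness I use only that $f$ preserves $\rzero$, $\rone$ and $\rsum$; the conditions on multiplication and on monotonicity play no role, being automatically pinned down once the additive behaviour is fixed. This reflects the fact that $\NN$, already as a semiring and a fortiori as a grade algebra, is freely generated by its additive structure from the single generator $1$, which is exactly what makes $\iota_\RR$ canonical.
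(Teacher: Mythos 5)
Your proof is correct and follows essentially the same route as the paper's: terminality of $\Triv$ is forced already at the level of carriers, and initiality of $\NN$ reduces to the observation that any homomorphism $f$ must satisfy $f(n) = \iota_\RR(n)$ because $\N$ is additively generated by $1$. The only difference is that the paper's appendix proof also carries out in full the verification that $\iota_\RR$ is a homomorphism (preservation of multiplication by induction using distributivity, and monotonicity by induction on $m-n$), which in the main text was only sketched as ``arithmetic induction''; you instead cite that fact as already established, which is legitimate given that the paper asserts it just before the proposition, though strictly speaking that verification is where most of the actual work lies.
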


Another kind of objects we will work with are  maps assigning grades to variables. 
These inherit a nice algebraic structure from the one of the underlying grade algebra. 

Assume a grade algebra $\RR = \ple{\RC\RR,\rord,\rsum,\rmul,\rzero,\rone}$ and a set $X$. 
The set of functions from $X$ to $\RC\RR$ 
carries a partially ordered commutative monoid structure given by the pointwise extension of the additive structure of \RR. 
That is, given \fun{\grctx,\grctx'}{X}{\RC\RR},  we define 
$\grctx \rord \grctx'$ iff, for all $x \in X$, $\grctx(x)\rord\grctx'(x)$, and 
$(\grctx\rsum\grctx')(x) = \grctx(x) \rsum \grctx'(x)$ and 
$\grctxz(x) = \rzero$, for all $x \in X$. 
Moreover, we can define a \emph{scalar multiplication}, combining elements of $\RC\RR$ and a function \fun{\grctx}{X}{\RC\RR}; 
indeed, we set 
$(\rgr\rmul\grctx)(x) = \rgr \rmul \grctx(x)$, for all $\rgr\in\RC\RR$ and $x \in X$. 
It is easy to see that this operation turns the partially ordered commutative monoid of functions from $X$ to $\RC\RR$ into a partially ordered \RR-module. 
 
The \emph{support} of a function \fun{\grctx}{X}{\RC\RR} is the set 
$\Supp\grctx = \{ x \in X \mid \grctx(x) \ne \rzero \}$. 
Denote by $\RR^X$ the set of functions \fun{\grctx}{X}{\RC\RR} with finite support. 
The partial order and operations defined above  can be safely restricted to $\RR^X$, noting that 
$\Supp{\grctxz} = \emptyset$, $\Supp{\grctx\rsum\grctx'} \subseteq \Supp{\grctx}\cup\Supp{\grctx'}$ and $\Supp{\rgr\rmul\grctx} \subseteq \Supp\grctx$. 
Therefore, $\RR^X$ carries a partially ordered \RR-module structure as well. 

 As we will see in \cref{sect:GrFJ}, \emph{coeffect contexts} are (representations of) functions in $\RR^X$, with $X$ set of variables. The fact that coeffect contexts form a module has been firstly noted in \cite{McBride16,WoodA22}, and fully formalized in \cite{BianchiniDGZS22}, which also shows a \emph{non-structural} example. That is,  a module different from $\RR^X$ described above, used in the present paper and mostly in  the  literature, is needed, where operations on coeffect contexts are not pointwise; that is, coeffects cannot be computed per-variable.


\section{Resource-aware semantics}\label{sect:resource-aware}

 Standard operational models do not say anything about resources used by the computation. 
To address this problem, we follow an approach similar to that in \cite{ChoudhuryEEW21}, that is, we define an \emph{instrumented} semantics which keeps track of resource usage, hence, in particular, it gets stuck if some needed resource is  insufficient.  
However, unlike  \cite{ChoudhuryEEW21}, the definition of our resource-aware semantics, though parameterized on a grade algebra, is given \emph{independently} of the graded type system, as is the standard approach in calculi; in the next section, we will show how the graded type system actually overapproximates resource usage, hence guarantees soundness.   As will  be  detailed in the following, the resource-aware semantics is non-deterministic, in the sense that,  when  a resource is needed, it can be consumed in different ways; hence, soundness is \emph{soundness-may}, meaning that there is a reduction which does not get stuck because of standard typing errors or resource consumption. 

\subsection{Reference calculus}
The calculus is a variant of $\FJ$ \cite{IgarashiPW99}. The syntax is reported in the top section of \cref{fig:calculus}. We write $\es$ as  a  metavariable for $\e_1, \ldots, \e_n$, $n\geq 0$, and analogously for other sequences. 
We assume \emph{variables} $\x,\y,\z,\ldots$, \emph{class names} $\C$, $\D$, \emph{field names} $\f$, and \emph{method names} $\m$.  Types are distinct from class names to mean that they could be extended to include  other types, e.g., primitive types,. 
In addition to the standard $\FJ$ constructs, we have a block expression, consisting of a local  variable declaration, and the body where this variable can be used. 

\begin{figure}
\begin{grammatica}
\produzione{\e}{\x\mid\FieldAccess{\e}{\f}\mid\ConstrCall{\C}{\es}\mid\MethCall{\e}{\m}{\es}\mid\Block{\T}{\x}{\e}{\e'}}{expression}\\
\produzione{\T}{\C}{type (class name)}\\
\produzione{\val}{\ConstrCall{\C}{\vs}}{value}\\
\end{grammatica}

\hrule

\begin{small}
\begin{math}
\begin{array}{l}
\\
\NamedRule{var}
{}
{\reduce{\x}{\stack}{\val}{\stack}}
{ \stack(\x)=\val } 
\\[4ex]

\NamedRule{field-access}
{}
{\reduce{\FieldAccess{\ConstrCall{\C}{\val_1,\dots,\val_n}}{\f_i}}{\stack}{\val_i}{\stack}}
{ \fields{\C}=\Field{\T_1}{\f_1}\dots\Field{\T_n}{\f_n}\\
  i\in 1..n   }
\\[4ex]

\NamedRule{invk}
{}
{\reduce{\MethCall{\val_0}{\m}{\val_1,\ldots,\val_n}}{\stack}{\SubstDots{\Subst{\e}{\y_0}{\this}}{\y_1}{\x_1}{\y_n}{\x_n}}{\stack'}}
{ \val_0 = \ConstrCall{\C}{\_} \\ 
  \mbody{\C}{\m}=\Pair{\x_1\dots\x_n}{\e}\\
  \y_0,\ldots,\y_n\not\in\dom{\stack}\\
  \stack'=\AddToMapDots{\stack}{\y_0}{\val_0}{\y_n}{\val_n}}
\\[8ex]

\NamedRule{block}
{}
{\reduce{\Block{\C}{\x}{\val}{\e}}{\stack}{\Subst{\e}{\y}{\x}}{\AddToMap{\stack}{\y}{\val}}}
{\y\not\in\dom{\stack}}
\\[5ex]

\NamedRule{field-access-ctx}
{\reduce{\e}{\stack}{\e'}{\stack'}}
{\reduce{\FieldAccess{\e}{\f}}{\stack}{\FieldAccess{\e'}{\f}}{\stack'}}
{} 
\\[4ex]

\NamedRule{new-ctx}
{\reduce{\e_i}{\stack}{\e'_i}{\stack'}}
{\reduce{\ConstrCall{\C}{\val_1,\dots,\val_{i-1},\e_i,\ldots,\e_n}}{\stack}{\ConstrCall{\C}{\val_1,\dots,\val_{i-1},\e'_i,\ldots,\e_n}}{\stack'}}
{} 
\\[6ex]

\NamedRule{invk-rcv-ctx}{\reduce{\e_0}{\stack}{\e'_0}{\stack'}}{\reduce{\MethCall{\e_0}{\m}{\e_1,\dots,\e_n}}{\stack}{\MethCall{\e'_0}{\m}{\e_1,\dots,\e_n}}{\stack'}}
{}\\[8ex]

\NamedRule{invk-arg-ctx}{\reduce{\e_i}{\stack}{\e'_i}{\stack'}}{\reduce{\MethCall{\val_0}{\m}{\val_1,\dots,\val_{i-1},\e_i,\ldots,\e_n}}{\stack}{\MethCall{\val_0}{\m}{\val_1,\dots,\val_{i-1},\e'_i,\ldots,\e_n}}{\stack'}}
{}\\[8ex]

\NamedRule{block-ctx}
{\reduce{\e_1}{\stack}{\e_1'}{\stack'}}
{\reduce{\Block{\C}{\x}{\e_1}{\e_2}}{\stack}{\Block{\C}{\x}{\e'_1}{\e_2}}{\stack'}}
{}
\\[5ex]
\end{array}
\end{math}
\end{small}
\caption{Syntax and standard reduction}\label{fig:calculus}
\end{figure}
 
The semantics is defined differently from the original one; that is, reduction  is defined on \emph{configurations} $\ExpStack{\e}{\stack}$, where $\stack$ is an \emph{environment}, a finite map from variables into values. In this way, variable occurrences are replaced one at a time by their value in the environment, rather than once and for all.   
This definition can be easily shown to be equivalent to the original one, and is convenient for our aims since,  in this presentation, free variables in an expression can be naturally seen as \emph{resources}  which are consumed each time a variable occurrence is \emph{used} (replaced by its value) during execution. In other words, this semantics can be naturally \emph{instrumented} by adding grades expressing the ``cost'' of resource consumption, \bez as we will do in \cref{fig:instr-red}. \eez Apart from that, the rules are straightforward;  only note that, in rules \refToRule{invk} and \refToRule{block},  parameters (including $\this$) and local variable are renamed to fresh variables, to avoid clashes. Single contextual rules are given, rather than defining evaluation contexts, to be uniform with the instrumented version, where this presentation is more convenient. 

To be concise, the class table is abstractly modeled as follows, omitting its (standard) syntax:
\begin{itemize}
\item $\fields{\C}$ gives, for each class $\C$, the sequence $\Field{\T_1}{\f_1}\ldots\Field{\T_n}{\f_n}$ of its fields,  assumed to have distinct names,   with their types;
\item $\mbody{\C}{\m}$ gives, for each method $\m$ of class $\C$, its parameters and body.
\end{itemize}

\subsection{Instrumented reduction} 
\bez This \eez reduction uses \emph{grades}, ranged over by $\rGrad, \sGrad, \tGrad$, assumed to form a grade algebra, specifying a \emph{partial order} $\rord$, a \emph{sum} $\rsum$, a \emph{multiplication} $\rmul$, and constants $\rzero$ and $\rone$, satisfying some axioms, 
as detailed in \cref{def:gr-alg} of \cref{sect:algebraic}.  

In order to keep track of usage of resources, parametrically on a given grade algebra, we \emph{instrument} reduction as follows.
\begin{itemize}
\item The environment associates, to each resource (variable), besides its value, a grade modeling its \emph{allowed usage}.
\item  Moroever, the reduction relation is \emph{graded}, that is, indexed by a grade $\rGrad$, meaning that it aims at producing a value to be used (at most) $\rGrad$ times, or, in more general (non-quantitative) terms, to be used (at most) with grade $\rGrad$. 
\item   The grade of a variable in the environment decreases, each time the variable is used, of the amount  specified in the reduction grade\footnote{More precisely, the reduction grade acts as a lower bound for this amount, 
 see comment to rule \refToRule{var}. }. 
 \item Of course, this can only happen if the current grade of the variable \emph{can} be reduced of such an amount; otherwise the reduction is stuck. 
\end{itemize} 
Before giving the formal definition, we show  some simple examples  of reductions,  considering the grade algebra of naturals of \refItem{ex:gr-alg}{1}, tracking how many times a resource is used. 
\begin{example}\label{ex:ex1-sem}
Assume the following classes:
\begin{lstlisting}
class A {}
class Pair {A first; A second}
\end{lstlisting}

We write $\pairaa$ as an abbreviation for $\ConstrCall{\Pairtt}{\ConstrCall{\Att}{},\ConstrCall{\Att}{}}$.
\begin{quote}
${\ExpStack{\Block{\Att}{\att}{\annotated{\ConstrCall{\Att}{}}{4}}{\Block{\Pairtt}{\ptt}{\annotated{\ConstrCall{\Pairtt}{\att,\att}}{2}}{\ConstrCall{\Pairtt}{\FieldAccess{\ptt}{\first},\FieldAccess{\ptt}{\second}}}}}{\emptyset}\ev_{1}}$\\
$\ExpStack{\Block{\Pairtt}{\ptt}{\annotated{\ConstrCall{\Pairtt}{\att,\att}}{2}}{\ConstrCall{\Pairtt}{\FieldAccess{\ptt}{\first},\FieldAccess{\ptt}{\second}}}}{\att\mapsto\Pair{\ConstrCall{\Att}{}}{4}}\ev_{1}$\\
$\ExpStack{\Block{\Pairtt}{\ptt}{\annotated{\ConstrCall{\Pairtt}{\ConstrCall{\Att}{},\att}}{2}}{\ConstrCall{\Pairtt}{\FieldAccess{\ptt}{\first},\FieldAccess{\ptt}{\second}}}}{\att\mapsto\Pair{\ConstrCall{\Att}{}}{2}}\ev_{1}$\\
$\ExpStack{\Block{\Pairtt}{\ptt}{\annotated{\ConstrCall{\Pairtt}{\ConstrCall{\Att}{},\ConstrCall{\Att}{}}}{2}}{\ConstrCall{\Pairtt}{\FieldAccess{\ptt}{\first},\FieldAccess{\ptt}{\second}}}}{\att\mapsto\Pair{\ConstrCall{\Att}{}}{0}}\ev_{1}$\\
$\ExpStack{\ConstrCall{\Pairtt}{\FieldAccess{\ptt}{\first},\FieldAccess{\ptt}{\second}}}{\att\mapsto\Pair{\ConstrCall{\Att}{}}{0},\ptt\mapsto\Pair{\pairaa}{2}}\ev_{1}$\\
$\ExpStack{\ConstrCall{\Pairtt}{\FieldAccess{\pairaa}{\first},\FieldAccess{\ptt}{\second}}}{\att\mapsto\Pair{\ConstrCall{\Att}{}}{0},\ptt\mapsto\Pair{\pairaa}{1}}\ev_{1}$\\
$\ExpStack{\ConstrCall{\Pairtt}{\ConstrCall{\Att}{},\FieldAccess{\ptt}{\second}}}{\att\mapsto\Pair{\ConstrCall{\Att}{}}{0},\ptt\mapsto\Pair{\pairaa}{1}}\ev_{1}$\\
$\ExpStack{\ConstrCall{\Pairtt}{\ConstrCall{\Att}{},\FieldAccess{\pairaa}{\second}}}{\att\mapsto\Pair{\ConstrCall{\Att}{}}{0},\ptt\mapsto\Pair{\pairaa}{0}}\ev_{1}$\\
$\ExpStack{\pairaa}{\att\mapsto\Pair{\ConstrCall{\Att}{}}{0},\ptt\mapsto\Pair{\pairaa}{0}}$
\end{quote}

In the example, the top-level reduction is graded 1, meaning that a single value is produced. Subterms are annotated with the grade of their reduction.  For instance, in the outer block, the initialization expression is annotated 4, meaning that its result can be used (at most) 4 times. To lighten the notation, in this example we omit the index 1.
A local variable introduced in a block is added\footnote{Modulo renaming to avoid clashes, omitted in the example for simplicity.} as another available resource in the environment, with the value and the grade of its initialization expression; for instance, the outer local variable is added with grade 4.
When evaluating the initialization expression of the inner block, which is reduced with grade 2, each time the variable $\att$ is used its grade in the environment is decremented by 2.
\end{example}

 It is important to  notice  that the annotations in subterms are \emph{not} type annotations. 
Except those in arguments of constructor invocation, explained below, 
annotations are only  needed to ensure that reduction of a subterm happens at each step with the same grade, see the formal definition below. 
We plan to investigate in future work a big-step formulation which would not need such an artifice.   
In the example above, we have chosen for the reduction of subterms the minimum grade allowing to perform the top-level reduction. 
We could have chosen any greater grade; instead, with a strictly lower grade, the reduction would be stuck.

As anticipated, in a constructor invocation $\ConstrCall{\C}{\annotated{\e_1}{\rGrad_1},\ldots,\annotated{\e_n}{\rGrad_n}}$, 
 the annotation $\rGrad_i$ plays a special role: 
intuitively, it specifies 
that the object to be constructed should contain $\rGrad_i$ copies of that field. 
Formally, this is reflected by the reduction grade of the subterm $\e_i$, which must be exactly $\rGrad\rmul\rGrad_i$, 
if $\rGrad$ is the reduction grade of the object, specifying how many copies of it the reduction is constructing. 
Correspondingly, an access to the field can be used (at most) $\rGrad\rmul\rGrad_i$ times.  
This is illustrated by the following variant of the previous example.

\begin{example}\label{ex:ex2-sem} \  Consider the term 
\begin{quote}
$\Block{\Att}{\att}{\annotated{\ConstrCall{\Att}{}}{4}}{\Block{\Pairtt}{\ptt}{\annotated{\ConstrCall{\Pairtt}{\att,\att}}{2}}{\ConstrCall{\Pairtt}{\meta{\annotated{\FieldAccess{\ptt}{\first}}{2}},\FieldAccess{\ptt}{\second}}}}$
\end{quote}
As highlighted in grey, the first argument of the constructor invocation which is the body of the inner block is now annotated with 2, meaning that the resulting object should have ``two copies'' of the field. 
As a consequence, the expression \lstinline{p.first} should be reduced with grade $2$, as shown below, where  
$\pairaa=\ConstrCall{\Pairtt}{\ConstrCall{\Att}{},\ConstrCall{\Att}{}}$,
 the first four reduction steps are as in \cref{ex:ex1-sem}  and  we explicitly write some annotations 1 for clarity 
 \begin{small}
\begin{quote}
${\ExpStack{\Block{\Att}{\att}{\annotated{\ConstrCall{\Att}{}}{4}}{\Block{\Pairtt}{\ptt}{\annotated{\ConstrCall{\Pairtt}{\annotated{\att}{1},\att}}{2}}{\ConstrCall{\Pairtt}{\meta{\annotated{\FieldAccess{\annotated{\ptt}{1}}{\first}}{2}},\FieldAccess{\ptt}{\second}}}}}{\emptyset}\ev^\ast_{1}}$\\
$\ExpStack{\ConstrCall{\Pairtt}{\meta{\annotated{\FieldAccess{\annotated{\ptt}{1}}{\first}}{2}},\FieldAccess{\ptt}{\second}}}{\att\mapsto\Pair{\ConstrCall{\Att}{}}{0},\ptt\mapsto\Pair{\pairaa}{2}}\ev_{1}$\\
$\ExpStack{\ConstrCall{\Pairtt}{\meta{\annotated{\FieldAccess{\annotated{\pairaa}{1}}{\first}}{2}},\FieldAccess{\ptt}{\second}}}{\att\mapsto\Pair{\ConstrCall{\Att}{}}{0},\ptt\mapsto\Pair{\pairaa}{1}}$\BigSpace STUCK\\
\end{quote}
\end{small}
Reduction of the subterm in grey,  aiming at constructing a value ($\ConstrCall{\Att}{}$)  which can be used twice,  is stuck, since we cannot obtain  two copies of $\ConstrCall{\Att}{}$   from the  field $\first$ of the object $\pairaa$. If we choose, instead, to reduce the occurrence of $\ptt$ to be used twice, then we get the following reduction, where again we omit steps which are as before:
 \begin{small}
 \begin{quote}
${\ExpStack{\Block{\Att}{\att}{\annotated{\ConstrCall{\Att}{}}{4}}{\Block{\Pairtt}{\ptt}{\annotated{\ConstrCall{\Pairtt}{\annotated{\att}{1},\att}}{2}}{\ConstrCall{\Pairtt}{\meta{\annotated{\FieldAccess{\annotated{\ptt}{2}}{\first}}{2}},\FieldAccess{\ptt}{\second}}}}}{\emptyset}\ev^\star_{1}}$\\
$\ExpStack{\ConstrCall{\Pairtt}{\meta{\annotated{\FieldAccess{\annotated{\ptt}{2}}{\first}}{2}},\FieldAccess{\ptt}{\second}}}{\att\mapsto\Pair{\ConstrCall{\Att}{}}{0},\ptt\mapsto\Pair{\pairaa}{2}}\ev_{1}$\\
$\ExpStack{\ConstrCall{\Pairtt}{\meta{\annotated{\FieldAccess{\annotated{\pairaa}{2}}{\first}}{2}},\FieldAccess{\ptt}{\second}}}{\att\mapsto\Pair{\ConstrCall{\Att}{}}{0},\ptt\mapsto\Pair{\pairaa}{0}}\ev_1$\\
$\ExpStack{\ConstrCall{\Pairtt}{\annotated{\ConstrCall{\Att}{}}{2},\FieldAccess{\ptt}{\second}}}{\att\mapsto\Pair{\ConstrCall{\Att}{}}{0},\ptt\mapsto\Pair{\pairaa}{0}}$\BigSpace STUCK
\end{quote}
\end{small}
\end{example}
In this case, the reduction is stuck since we consumed all the available copies of $\ptt$  to produce two copies of the field $\first$,
so now we cannot reduce $\FieldAccess{\ptt}{\second}$.  To obtain a non-stuck reduction, we should choose to reduce the initialization expression of $\ptt$ with index 3, hence that of $\att$ with index 6.   To complete the construction of the $\Pairtt$, that is, to get a non-stuck reduction, we should have 3 copies of $\ptt$ and therefore 6 copies of $\att$. 

The formal definition of the instrumented semantics is given in \cref{fig:instr-red}.  To make the notation lighter, we use the same metavariables of the standard semantics in \cref{fig:calculus}. As explained above, reduction is defined on annotated terms. Notably, in each construct, the subterms which are reduced in contextual rules  are annotated, so that their reduction always happens with a fixed grade. 

\begin{figure}
\begin{grammatica}
\produzione{\e}{\x\mid\FieldAccess{\annotated{\e}{\rGrad}}{\f}\mid\ConstrCall{\C}{\annotated{\e_1}{\rGrad_1},\ldots,\annotated{\e_n}{\rGrad_n}}\mid}{(annotated) expression}\\
\seguitoproduzione{\MethCall{\annotated{\e_0}{\rGrad_0}}{\m}{\annotated{\e_1}{\rGrad_1},\ldots,\annotated{\e_n}{\rGrad_n}}{\es}\mid\Block{\T}{\x}{\annotated{\e}{\rGrad}}{\e'}}{}
\\[2ex]
\produzione{\val}{\ConstrCall{\C}{\annotated{\val_1}{\rGrad_1},\ldots,\annotated{\val_n}{\rGrad_n}}}{(annotated) value}\\
\\
\end{grammatica}

\hrule

\begin{small}
\begin{math}
\begin{array}{l}
\\
\NamedRule{var}
{}
{\reducestack{\x}{\stack,\x\mapsto\Pair\val\sGrad}{\rGrad}{\val}{\stack,\x\mapsto\Pair{\val}{\sGrad'}}}
{ \rGrad\rord\rGrad'\ne\rzero \\
  \sGrad' \rsum \rGrad' \rord \sGrad } 
\\[4ex]

\NamedRule{field-access}
{}
{\reducestack{\FieldAccess{\annotated{\ConstrCall{\C}{\annotated{\val_1}{\rGrad_1},\dots,\annotated{\val_n}{\rGrad_n}}}{\rGrad}}{\f_i}}{\stack}{\sGrad}{\val_i}{\stack}}
{ \fields{\C}=\Field{\T_1}{\f_1}\dots\Field{\T_n}{\f_n}\\
  i\in 1..n \\
  \sGrad\rord\rGrad\rmul\rGrad_i  }
\\[4ex]

\NamedRule{invk}
{}
{\reducestack{\MethCall{\annotated{\val_0}{\rGrad_0}}{\m}{\annotated{\val_1}{\rGrad_1},\dots,\annotated{\val_n}{\rGrad_n}}}{\stack}{\rGrad}{\SubstDots{\Subst{\e}{\y_0}{\this}}{\y_1}{\x_1}{\y_n}{\x_n}}{\stack'}}
{ \val_0 = \ConstrCall{\C}{\_} \\ 
  \mbody{\C}{\m}=\Pair{\x_1\dots\x_n}{\e}\\
  \y_0,\ldots,\y_n\not\in\dom{\stack}\\
  \stack'=\AddToMapDots{\stack}{\y_0}{\Pair{\val_0}{\rGrad_0}}{\y_n}{\Pair{\val_n}{\rGrad_n}}}
\\[8ex]

\NamedRule{block}
{}
{\reducestack{\Block{\C}{\x}{\annotated{\val}{\rGrad}}{\e}}{\stack}{\sGrad}{\Subst{\e}{\y}{\x}}{\AddToMap{\stack}{\y}{\Pair{\val}{\rGrad}}}}
{\y\not\in\dom{\stack}}
\\[5ex]

\NamedRule{field-access-ctx}
{\reducestack{\e}{\stack}{\rGrad}{\e'}{\stack'}}
{\reducestack{\FieldAccess{\annotated{\e}{\rGrad}}{\f}}{\stack}{\sGrad}{\FieldAccess{\annotated{\e'}{\rGrad}}{\f}}{\stack'}}
{} 
\\[4ex]

\NamedRule{new-ctx}
{\reducestack{\e_i}{\stack}{\rGrad\rmul\rGrad_i}{\e'_i}{\stack'}}
{\begin{array}{l}
\ExpStack{\ConstrCall{\C}{\annotated{\val_1}{\rGrad_1},\dots,\annotated{\val_{i-1}}{\rGrad_{i-1}},\annotated{\e_i}{\rGrad_i},\ldots,\annotated{\e_n}{\rGrad_n}}}{\stack}\ev_{\rGrad}\\
\HugeSpace\ExpStack{\ConstrCall{\C}{\annotated{\val_1}{\rGrad_1},\dots,\annotated{\val_{i-1}}{\rGrad_{i-1}},\annotated{\e'_i}{\rGrad_i},\ldots,\annotated{\e_n}{\rGrad_n}}}{\stack'}
\end{array}}
{} 
\\[8ex]

\NamedRule{invk-rcv-ctx}{\reducestack{\e_0}{\stack}{\rGrad_0}{\e'_0}{\stack'}}{\reducestack{\MethCall{\annotated{\e_0}{\rGrad_0}}{\m}{\annotated{\e_1}{\rGrad_1},\dots,\annotated{\e_n}{\rGrad_n}}}{\stack}{\rGrad}{\MethCall{\annotated{\e'_0}{\rGrad_0}}{\m}{\annotated{\e_1}{\rGrad_1},\dots,\annotated{\e_n}{\rGrad_n}}}{\stack'}}
{}\\[6ex]

\NamedRule{invk-arg-ctx}{\reducestack{\e_i}{\stack}{\rGrad_i}{\e'_i}{\stack'}}
{\begin{array}{ll}
\ExpStack{\MethCall{\annotated{\e_0}{\rGrad_0}}{\m}{\annotated{\val_1}{\rGrad_1},\dots,\annotated{\val_{i-1}}{\rGrad_{i-1}},\annotated{\e_i}{\rGrad_i},\ldots,\annotated{\e_n}{\rGrad_n}}}{\stack}\ev_{\rGrad}{}{}\\\HugeSpace\ExpStack{\MethCall{\annotated{\e_0}{\rGrad_0}}{\m}{\annotated{\val_1}{\rGrad_1},\dots,\annotated{\val_{i-1}}{\rGrad_{i-1}},\annotated{\e'_i}{\rGrad_i},\ldots,\annotated{\e_n}{\rGrad_n}}}{\stack'}
\end{array}}
{}\\[8ex]

\NamedRule{block-ctx}
{\reducestack{\e_1}{\stack}{\sGrad}{\e_1'}{\stack'}}
{\reducestack{\Block{\C}{\x}{\annotated{\e_1}{\sGrad}}{\e_2}}{\stack}{\rGrad}{\Block{\C}{\x}{\annotated{\e'_1}{\bez\sGrad\eez}}{\e_2}}{\stack'}}
{}
\\[5ex]
\end{array}
\end{math}
\end{small}
\caption{Instrumented reduction}\label{fig:instr-red}
\end{figure}

 In rule \refToRule{var}, which is the key rule where resources are consumed, 
a variable occurrence is replaced by the associated value in the environment, 
and its grade $\sGrad$ decreases to $\sGrad'$, burning a non-zero amount $\rGrad'$ of resources which has to be at least the reduction grade. 
The side condition $\rGrad'\rsum\sGrad'\rord\sGrad$ ensures that the initial grade of the variable suffices to cover 
both  the consumed grade and the residual grade.  
 To show why  the amount of resource consumption should be non-zero,  consider, e.g., the following variant of \cref{ex:ex1-sem}:
\begin{quote}
$\ExpStack{\Block{\Att}{\att}{\annotated{\ConstrCall{\Att}{}}{4}}{\Block{\Pairtt}{\ptt}{\annotated{\ConstrCall{\Pairtt}{\att,\att}}{0}}{\ConstrCall{\Pairtt}{\att,\att}}}}{\emptyset}$
\end{quote}
The local variable $\ptt$ is never used in the  body of the block, so it makes sense for its initialization expression to be reduced with grade 0, since execution needs no copies of the result. Yet, the expression \emph{needs to be reduced}, and to produce its useless result two copies of $\att$ are consumed; 
in a sense, they are wasted. However,  such resource usage is tracked, whereas it would be lost if decrementing by 0.  Removing the non-zero requirement would lead to a variant of resource-aware reduction where usage of resource which are useless to construct the final result is not tracked.

 In rule \refToRule{field-access}, the reduction grade should be (overapproximated by)  the multiplication of the grade of the receiver with that of the field (constructor argument). 
Indeed, the former specifies how many copies of the object we have and the latter how many copies of the field each of such objects has; 
thus, their product provides an upper bound to the grade of the resulting value.  
Note that, in this way, some reductions could be \emph{forbidden}.
For instance, taking the grade algebra of naturals,  an access to a field whose value can be used 3 times, of an object reduced with grade 2, can be reduced with grade (at most) $6$.    Another more significant example is given in the following, taking the grade algebra of \emph{privacy levels}. 

Rule \refToRule{invk} adds each method parameter, including $\this$, as available resource in the environment, modulo renaming with a fresh variable to avoid clashes. The associated value and grade are that of the corresponding argument. Rule \refToRule{block} is exactly analogous, apart that only one variable is added. 

Coming to contextual rules, the reduction grade of the subterm is that of the corresponding annotation, so that all steps happen with a fixed grade. 
The only exception is rule \refToRule{new-ctx}, where, symmetrically to rule \refToRule{field-access}, the reduction grade for subterms should be the multiplication  of the reduction grade of the object with  the annotation of the field (constructor argument), capturing the intuition that the latter specifies the grade of the field for a single copy of the object.  
For instance, taking the grade algebra of naturals,  to obtain an object which can be used twice, with a field which can be used 3 times, the value of such field should be an object which can be used 6 times. 

Note that, besides the standard typing errors such as looking for a missing method or field, reduction graded $\rGrad$ can get stuck since either rule \refToRule{var} cannot be applied since the side conditions do not hold, or rule \refToRule{field-access} cannot be applied since the side condition $\sGrad\rord\rGrad\rmul\rGrad_i$ does not hold. 
Informally, either some resource (variable) is exhausted, that is, can no longer be replaced by its value, or some field of some  object  cannot be extracted.
 It is also important to note that the instrumented reduction is non-deterministic, due to rule \refToRule{var}.

\smallskip

In the grade algebra used in the previous example, grades model \emph{how many times} resources are used.
However, grades can also model a non-quantitative\footnote{Suck kind of applications are called \emph{informational} in \cite{AbelB20}.} knowledge, that is, track possible \emph{modes} in which a resource can be used, or, in other words, possible \emph{constraints} on how it could be used. A typical example of this situation are \emph{privacy levels}, which can be formalized 
similarly to what is done in \cite{AbelB20}, as described below. 
\begin{example}\label{ex:privacy}
Starting from any distributive semilattice lattice $\LL$, like in \refItem{ex:gr-alg}{4}, 
define $\LL_0 = \ple{\RC{\LL_0}, \le_0, \lor_0, \land_0, 0, \top}$, where
 $\RC{\LL_0} = \RC\LL + \{0\}$ with  $0\le_0 x$, $x \lor_0 0 = 0 \lor_0 x = x$ and $x \land_0 0 = 0 \land_0 x = 0$,  for all $x \in \RC\LL$; 
on elements of $\RC\LL$ the order and the operations are those of $\LL$. 
That is, we assume that the privacy levels form a distributive semilattice   with order representing ``decreasing privacy'', 
and we add a grade $0$  modeling ``non-used''.   The simplest instance consists of just two privacy levels, that is, $0\rord\private\rord\public$. Sum is the join, meaning that we obtain a privacy level which is less restrictive than both: for instance, a variable which is used as $\public$ in a subterm, and as $\private$ in another, is overall used as $\public$. Multiplication is the meet, meaning that we obtain a privacy level which is more restrictive than both: for instance, an access to a field whose value has been obtained in \lstinline{public} mode, of an object reduced in \lstinline{private} mode, is reduced in \lstinline{private} mode\footnote{As in \emph{viewpoint adaptation} \cite{DietlEtAl07}, where permission to a field access can be restricted based on the
 permission to the base object.}. Note that exactly the same structure could be used to model, e.g., rather than privacy levels, modifiers $\readonly$ and $\mutable$ in an imperative setting, corresponding to forbid field assignment and no restrictions, respectively. The following examples illustrates the use of such grade algebra.  We write $\privateS$ and $\publicS$ for short, and classes $\Att$ and $\Pairtt$ are as in the previous examples. 
\begin{enumerate}
\item\label{ex:red1:1}
Let $\e_1=\Block{\Att}{\ytt}{\annotated{\ConstrCall{\Att}{}}{\publicS}}{\Block{\Att}{\xtt}{\annotated{\ytt}{\privateS}}{\xtt}}$
and $\pxtt$ be either $\publicS$ or $\privateS$, $\e_1$ starting with the empty environment reduces with grade $\private$ as follows:
\[
\begin{array}{lcl}
\ExpStack{\e_1}{\emptystack}&\ev_{\privateS} &   \ExpStack{\Block{\Att}{\xtt}{\annotated{\ytt}{\privateS}}{\xtt}}{\hpEl{\ytt}{\ConstrCall{\Att}{}}{\publicS}} \text{ with }\refToRule{block} \\
&\ev_{\privateS} &   \ExpStack{\Block{\Att}{\xtt}{\annotated{\ConstrCall{\Att}{}}{\privateS}}{\xtt}}{\hpEl{\ytt}{\ConstrCall{\Att}{}}{\pxtt}} \text{ with }\refToRule{block-ctx}\text{ and}\\
  & &\reducestack{\ytt}{\hpEl{\ytt}{\ConstrCall{\Att}{}}{\publicS}}{\bez\privateS\eez}{\ConstrCall{\Att}{}}{\hpEl{\ytt}{\ConstrCall{\Att}{}}{\pxtt}}\\
  &\ev_{\privateS} &   \ExpStack{\xtt}{\hpEl{\ytt}{\ConstrCall{\Att}{}}{\pxtt},\hpEl{\xtt}{\ConstrCall{\Att}{}}{\privateS}} \text{ with }\refToRule{block} \\
  &\ev_{\privateS} &   \ExpStack{\ConstrCall{\Att}{}}{\hpEl{\ytt}{\ConstrCall{\Att}{}}{\pxtt},\hpEl{\xtt}{\ConstrCall{\Att}{}}{\privateS}} \text{ with }\refToRule{var} 
\end{array}
\]
Instead reduction with grade $\public$ would be stuck since $\publicS \not\rord\privateS$ and so 
\[
\ExpStack{\xtt}{\hpEl{\ytt}{\ConstrCall{\Att}{}}{\pxtt},\hpEl{\xtt}{\ConstrCall{\Att}{}}{\privateS}}\not\ev_{\publicS}
\] 
Also the reduction of $\e_2=\Block{\Att}{\ytt}{\annotated{\ConstrCall{\Att}{}}{\privateS}}{\Block{\Att}{\xtt}{\annotated\ytt{\publicS}}{\xtt}}$ with grade $\private$
\[
\begin{array}{lcl}
\ExpStack{\e_2}{\emptystack}&\ev_{\privateS} &   \ExpStack{\Block{\Att}{\xtt}{\annotated\ytt{\publicS}}{\xtt}}{\hpEl{\ytt}{\ConstrCall{\Att}{}}{\privateS}} \text{ with }\refToRule{Block} \\
&\not\ev_{\privateS} &  
\end{array}
\]
would be stuck since $\ExpStack{\ytt}{\hpEl{\ytt}{\ConstrCall{\Att}{}}{\privateS}}\not\ev_{\publicS}$. Note that both $\e_1$ and $\e_2$ reduce to $\ConstrCall{\Att}{}$ with
the semantics of \cref{fig:calculus}.
\item\label{ex:red1:2} Let
$e_3=\Block{\Att}{\xtt}{\annotated{\ConstrCall{\Att}{}}\publicS}{\ConstrCall{\Btt}{\annotated\xtt{\publicS},\annotated\xtt{\privateS}}}$, $\e_3$ starting with the empty environment reduces with grade $\public$ as follows:
\[
\begin{array}{lcl}
\ExpStack{\e_3}{\emptystack}&\ev_{\publicS} &   \ExpStack{\ConstrCall{\Btt}{\annotated\xtt{\publicS},\annotated\xtt{\privateS}}}{\hpEl{\xtt}{\ConstrCall{\Att}{}}{\publicS}} \text{ with }\refToRule{Block} \\
&\ev_{\publicS} &   \ExpStack{\ConstrCall{\Btt}{\annotated{\ConstrCall{\Att}{}}\publicS,\annotated\xtt\privateS}}{\hpEl{\xtt}{\ConstrCall{\Att}{}}{\pxtt}} \text{ with }\refToRule{New-Ctx}\text{ and}\\
  & &\reducestack{\xtt}{\hpEl{\xtt}{\ConstrCall{\Att}{}}{\publicS}}{\publicS}{\ConstrCall{\Att}{}}{\hpEl{\xtt}{\ConstrCall{\Att}{}}{\pxtt}}\\
&\ev_{\publicS} &   \ExpStack{\ConstrCall{\Btt}{\annotated{\ConstrCall{\Att}{}}\publicS,\annotated{\ConstrCall{\Att}{}}\privateS}}{\hpEl{\xtt}{\ConstrCall{\Att}{}}{\pxtt}} \text{ with }\refToRule{New-Ctx}\text{ and}\\
  & &\reducestack{\xtt}{\hpEl{\xtt}{\ConstrCall{\Att}{}}{\pxtt}}{\privateS}{\ConstrCall{\Att}{}}{\hpEl{\xtt}{\ConstrCall{\Att}{}}{\pxtt}}\\
\end{array}
\]
It is easy to see that also $\ExpStack{\e_3}{\emptystack}\ev^\ast_{\privateS} \ExpStack{\ConstrCall{\Btt}{\annotated{\ConstrCall{\Att}{}}\publicS,\annotated{\ConstrCall{\Att}{}}\privateS}}{\hpEl{\xtt}{\ConstrCall{\Att}{}}{\pxtt}} $. So we have 
\[
\begin{array}{lcl}
\ExpStack{\FieldAccess{\annotated{\e_3}{\rGrad}}{\fX}}{\emptystack}&\ev^\ast_{\sGrad} &   
\ExpStack{ \FieldAccess{ \annotated{\ConstrCall{\Btt}{\annotated{\ConstrCall{\Att}{}}\publicS,\annotated{\ConstrCall{\Att}{}}\privateS}}\rGrad }{\fX}}{\hpEl{\xtt}{\ConstrCall{\Att}{}}{\pxtt}}
\end{array}
\]
where $\fX$ can be either $\fO$ or $\fT$ and $\rGrad$ and $\sGrad$ can be either $\publicS$ or $\privateS$. Now, the
reductions of grade $\privateS$ accessing either  $\fO$ or $\fT$ produce the value of the fields
\[
\begin{array}{lcl}  
\ExpStack{ \FieldAccess{ \annotated{\ConstrCall{\Btt}{\annotated{\ConstrCall{\Att}{}}\publicS,\annotated{\ConstrCall{\Att}{}}\privateS}}\rGrad }{\fX}}{\_}\ev_{\privateS} \ExpStack{\ConstrCall{\Att}{}}{\_}
\end{array}
\]
However, looking at the reductions of grade $\publicS$, only 
\[
\begin{array}{lcl}  
\ExpStack{ \FieldAccess{ \annotated{\ConstrCall{\Btt}{\annotated{\ConstrCall{\Att}{}}\publicS,\annotated{\ConstrCall{\Att}{}}\privateS}}\publicS }{\fO}}{\_}\ev_{\publicS} \ExpStack{\ConstrCall{\Att}{}}{\_}
\end{array}
\]
is not stuck. That is, we  produce a value that can be used as $\public$ only if we get a $\public$ field of a $\public$ object, whereas any value can be used as $\private$.
\end{enumerate}
\end{example}
 
We now  state  some simple properties of the semantics 
we will use to prove type soundness.
The former establishes that  reduction does not remove  variables from the environment, the latter states that we can always decrease the grade of a reduction step. 
\begin{proposition}\label{prop:step-env}
If $\reducestack{\e}{\stack}{\rGrad}{\e'}{\stack'}$ then $\dom\stack\subseteq\dom{\stack'}$ and for all $\x\in\dom\stack$, $\stack(\x) = \Pair\val\rGrad$ implies $\stack'(\x) = \Pair\val\sGrad$ with $\sGrad\rord\rGrad$. 
\end{proposition}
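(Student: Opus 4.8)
The plan is to proceed by rule induction on the instrumented reduction $\ev_{\rGrad}$ of \cref{fig:instr-red}, i.e., by case analysis on the last rule used to derive $\reducestack{\e}{\stack}{\rGrad}{\e'}{\stack'}$. Note that the statement packages two claims together — that no variable is dropped ($\dom\stack\subseteq\dom{\stack'}$) and that each surviving binding keeps its value while its grade can only decrease — so I would verify both in every case.

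First I would dispatch the base cases that touch the environment. Rule \refToRule{field-access} leaves $\stack' = \stack$, so there is nothing to prove. Rules \refToRule{invk} and \refToRule{block} extend $\stack$ with bindings for freshly chosen variables ($\y_0,\ldots,\y_n\notin\dom\stack$, respectively $\y\notin\dom\stack$); hence the domain only grows, and because the added variables are fresh every pre-existing binding is copied unchanged, so the grade condition holds by reflexivity of $\rord$.

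The one base case with real content is rule \refToRule{var}, where $\stack = \stack_0,\x\mapsto\Pair\val\sGrad$ reduces to $\stack' = \stack_0,\x\mapsto\Pair{\val}{\sGrad'}$. The domain is unchanged and every binding except that of $\x$ is untouched; the value bound to $\x$ stays $\val$, so the heart of the matter is to derive $\sGrad'\rord\sGrad$ from the side condition $\sGrad'\rsum\rGrad'\rord\sGrad$. Here I would invoke the grade-algebra axioms of \cref{def:gr-alg}: since $\rzero$ is the least element, $\rzero\rord\rGrad'$, and since $\rsum$ is monotone with neutral element $\rzero$, I obtain $\sGrad' = \sGrad'\rsum\rzero\rord\sGrad'\rsum\rGrad'\rord\sGrad$, whence $\sGrad'\rord\sGrad$ by transitivity.

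Finally, the inductive cases are the contextual rules \refToRule{field-access-ctx}, \refToRule{new-ctx}, \refToRule{invk-rcv-ctx}, \refToRule{invk-arg-ctx}, and \refToRule{block-ctx}. In each of them the environment change $\stack\to\stack'$ appearing in the conclusion is exactly the one in the premise, so both claims follow at once from the induction hypothesis applied to the subderivation. I expect no genuine obstacle: apart from the short computation in rule \refToRule{var}, the whole argument is pure bookkeeping, and that single computation is supplied directly by the axioms stating that $\rzero$ is least, $\rzero$ is neutral for $\rsum$, and $\rsum$ is monotone.
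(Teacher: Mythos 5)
Your proposal is correct and is exactly the routine argument the paper relies on: the paper states \cref{prop:step-env} without proof as a ``simple property,'' and the intended justification is precisely your rule induction, where the only non-trivial step is the \refToRule{var} case. Your derivation there — $\sGrad' = \sGrad'\rsum\rzero \rord \sGrad'\rsum\rGrad' \rord \sGrad$, using that $\rzero$ is least and neutral and that $\rsum$ is monotone (\cref{def:gr-alg}) — is sound, and the remaining cases are, as you say, bookkeeping.
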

\begin{proposition}\label{prop:red-gr}
If  $\reducestack{\e}{\stack}{\rGrad}{\e'}{\stack'}$  and $\sGrad \rord \rGrad$ then $\reducestack{\e}{\stack}{\sGrad}{\e'}{\stack'}$.
\end{proposition}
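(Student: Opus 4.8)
The plan is to proceed by induction on the derivation of $\reducestack{\e}{\stack}{\rGrad}{\e'}{\stack'}$, with a case analysis on the last applied rule. To avoid a clash with the grade names used internally in the rules, I write the smaller grade as $\tGrad$, so the goal becomes: from $\reducestack{\e}{\stack}{\rGrad}{\e'}{\stack'}$ and $\tGrad\rord\rGrad$ derive $\reducestack{\e}{\stack}{\tGrad}{\e'}{\stack'}$, crucially with the \emph{same} residual expression $\e'$ and environment $\stack'$. The key observation guiding the whole argument is that the reduction grade enters the rules in only a few, mutually distinct, ways.

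The base rules split into two easy groups. In $\refToRule{invk}$ and $\refToRule{block}$ the reduction grade does not appear in any side condition, so the rule remains applicable verbatim for $\tGrad$, and $\e',\stack'$ are manifestly independent of it. In $\refToRule{var}$ and $\refToRule{field-access}$ the reduction grade occurs on the left of an inequality side condition ($\rGrad\rord\rGrad'\ne\rzero$ and $\sGrad\rord\rGrad\rmul\rGrad_i$, respectively); keeping the other, existentially chosen data (e.g.\ the residual grade in $\refToRule{var}$) unchanged, these conditions are preserved by mere transitivity of $\rord$, since $\tGrad\rord\rGrad$, and again neither $\e'$ nor $\stack'$ depends on the reduction grade. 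The four contextual rules $\refToRule{field-access-ctx}$, $\refToRule{invk-rcv-ctx}$, $\refToRule{invk-arg-ctx}$ and $\refToRule{block-ctx}$ are immediate for a different reason: there the grade of the premise is dictated by a fixed subterm annotation, independent of the conclusion's reduction grade, so the premise derivation is reused unchanged and the rule is simply re-applied with conclusion grade $\tGrad$; no appeal to the induction hypothesis is even needed.

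The only genuinely interesting case, and the main obstacle, is $\refToRule{new-ctx}$, where the premise is graded $\rGrad\rmul\rGrad_i$ and thus \emph{does} depend on the conclusion grade $\rGrad$. Here I would first use monotonicity of multiplication (the grade-algebra axiom yielding $\tGrad\rmul\rGrad_i\rord\rGrad\rmul\rGrad_i$ from $\tGrad\rord\rGrad$ and $\rGrad_i\rord\rGrad_i$), then invoke the induction hypothesis on the premise with the smaller grade $\tGrad\rmul\rGrad_i$ to obtain $\reducestack{\e_i}{\stack}{\tGrad\rmul\rGrad_i}{\e'_i}{\stack'}$ with the same $\e'_i,\stack'$, and finally re-apply $\refToRule{new-ctx}$ to conclude with reduction grade $\tGrad$. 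Thus the whole argument rests on transitivity of the order, monotonicity of $\rmul$, and a single structural use of the induction hypothesis precisely in the $\refToRule{new-ctx}$ case.
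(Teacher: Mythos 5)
Your proof is correct: the paper states \cref{prop:red-gr} without proof (treating it as a simple property), and your induction on the reduction derivation --- transitivity of $\rord$ for \refToRule{var} and \refToRule{field-access}, verbatim re-application for the rules whose premise grade is fixed by an annotation, and monotonicity of $\rmul$ plus the induction hypothesis exactly in \refToRule{new-ctx} --- is precisely the argument the paper implicitly relies on. Your observation that \refToRule{new-ctx} is the only rule where the premise grade depends on the conclusion grade, making it the sole genuine inductive case, is accurate and well isolated.
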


We expect the instrumented reduction to be \emph{sound} with respect to the standard reduction, in the sense that by erasing annotations from an instrumented reduction sequence we get a standard reduction sequence.
This is formally stated below.

For any $\e$ expression, let us denote by $\erase{\e}$ the expression obtained by erasing annotations, defined in the obvious way, and analogously for environments\bez, where grades associated to variables are removed as well\eez.
\begin{proposition}[Soundness of instrumented semantics] \label{prop:gr-sem-sound}
If $\reducestack{\e}{\stack}{\rGrad}{\e'}{\stack'}$, then $\reduce{\erase{\e}}{\erase{\stack}}{\erase{\e'}}{\erase{\stack'}}$.
\end{proposition}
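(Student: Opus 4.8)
The plan is to proceed by rule induction on the derivation of the instrumented step $\reducestack{\e}{\stack}{\rGrad}{\e'}{\stack'}$, showing in each case that the erased configurations are related by the homonymous rule of the standard semantics in \cref{fig:calculus}. The observation that makes every case go through is that $\erase{\cdot}$ is a syntactic homomorphism: it commutes with all the term-forming operations, with variable-for-variable substitution, and with environment extension. Concretely, I would first record the auxiliary facts $\erase{\Subst{\e}{\y}{\x}} = \Subst{\erase\e}{\y}{\x}$ (and its iterated version used in \refToRule{invk}), $\erase{\AddToMap{\stack}{\y}{\Pair\val\rGrad}} = \AddToMap{\erase\stack}{\y}{\erase\val}$, the fact that erasing an instrumented value yields a standard value, and $\erase{\stack}(\x) = \erase{\stack(\x)}$ on the domain. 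These are all immediate from the definition of $\erase{\cdot}$.

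For the four computational rules the argument is direct, and the point is precisely that the grade side conditions are simply discarded. In rule \refToRule{var} the premise gives $\stack(\x) = \Pair\val\sGrad$, hence $\erase\stack(\x) = \erase\val$, so the erased step is an instance of the standard \refToRule{var}; the conditions $\rGrad\rord\rGrad'\ne\rzero$ and $\sGrad'\rsum\rGrad'\rord\sGrad$ play no role after erasure. Rule \refToRule{field-access} erases to the standard \refToRule{field-access} once the field annotations $\rGrad_i$ and the receiver grade $\rGrad$ are stripped, and the constraint $\sGrad\rord\rGrad\rmul\rGrad_i$ is dropped. Rules \refToRule{invk} and \refToRule{block} erase to their standard counterparts using the substitution and environment-extension identities above, together with the (tacit) assumption that the instrumented class table erases to the standard one, so that $\mbody{\C}{\m}$ and $\fields{\C}$ yield the erased method bodies and field declarations.

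For the five contextual rules the argument is a routine appeal to the induction hypothesis. In each case the premise is an instrumented step on a subterm at some grade (namely $\rGrad\rmul\rGrad_i$ in \refToRule{new-ctx}, $\rGrad_i$ in \refToRule{invk-arg-ctx}, $\rGrad_0$ in \refToRule{invk-rcv-ctx}, and the recorded annotation grade in \refToRule{field-access-ctx} and \refToRule{block-ctx}); the induction hypothesis yields a standard step on its erasure, and since $\erase{\cdot}$ commutes with the surrounding context, applying the homonymous standard contextual rule closes the case. The erased reduction grades are irrelevant, so the differing subterm grades cause no difficulty.

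I do not expect any genuine obstacle: there is no quantitative reasoning, because all grades and all grade-algebra axioms are erased. The only points that require care are purely bookkeeping---establishing that $\erase{\cdot}$ commutes with substitution and environment update, that instrumented values erase to standard values, and that the class-table lookups are compatible with erasure. Once these are in place the induction is entirely mechanical.
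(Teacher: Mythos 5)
Your proposal is correct and is exactly the argument the paper intends: the paper states \cref{prop:gr-sem-sound} without proof, treating it as the routine rule induction you spell out, in which erasure commutes with substitution, environment update, contexts, and class-table lookups, all graded side conditions are simply discarded (in particular both environments in rule \refToRule{var} erase to the same ungraded environment, matching the standard rule's unchanged environment), and the contextual cases follow from the induction hypothesis. No gap; your bookkeeping lemmas are precisely the right ones.
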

\bez The converse does not hold, since a configuration could be annotated in a way that makes it stuck; notably, some resource (variable) could be exhausted or some field of an  object  could not be extracted. \eez
The graded type system in the next section will generate annotations which ensure soundness, hence also completeness with respect to the standard reduction.


\section{Graded Featherweight Java}\label{sect:GrFJ}

We define the parametric resource-aware type system, show some examples, and prove its soundness. 

\subsection{Graded type system}

Types (class names) are annotated with \emph{grades}, as shown in \cref{fig:graded-calculus}. 
 
 As anticipated at the end of \cref{sect:algebraic},  a \emph{coeffect context}, of shape $\grctx=\VarCoeffect{\x_1}{\rgr_1}, \ldots, \VarCoeffect{\x_n}{\rgr_n}$, where order is immaterial and $\x_i\neq\x_j$ for $i\neq j$, represents a map from variables  to  grades (called \emph{coeffects} when used in this position) where only a finite number of variables have non-zero coeffect. 
A \emph{(type-and-coeffect) context}, of shape $\Gamma=\VarTypeCoeffect{\x_1}{\C_1}{\rgr_1},\ldots,\VarTypeCoeffect{\x_n}{\C_n}{\rgr_n}$, with analogous conventions, represents the pair of the standard type context $\VarType{\x_1}{\C_1}\ldots,\VarType{\x_n}{\C_n}$, and the coeffect context $\VarCoeffect{\x_1}{\rgr_1}, \ldots, \VarCoeffect{\x_n}{\rgr_n}$. 
We write $\dom{\Gamma}$ for $\{\x_1,\ldots,\x_n\}$.

As customary in type-and-coeffect systems, in typing rules contexts are combined by means of some operations, which are, in turn, defined in terms of  the  corresponding operations on coeffects (grades). \\
 More precisely, we define: 
\begin{itemize}
\item a partial order $\ROrdCtx$ 
\begin{align*} 
\EmptyCtx &\ROrdCtx \EmptyCtx  & \\ 
\VarTypeCoeffect{\x}{\C}{\sGrad}, \Gamma &\ROrdCtx \VarTypeCoeffect{\x}{\C}{\rGrad}, \Delta &&\text{if $\sGrad \rord \rGrad$  and $\Gamma \ROrdCtx \Delta$} \\ 
\Gamma &\ROrdCtx \VarTypeCoeffect{\x}{\C}{\rGrad}, \Delta  &&\text{if $\x\not\in\dom{\Gamma}$  and $\Gamma \ROrdCtx \Delta$} 
\end{align*} 
\item a sum $\SumCtx$ 
\begin{align*} 
\EmptyCtx \SumCtx \Gamma &= \Gamma  \\ 
(\VarTypeCoeffect{\x}{\C}{\sGrad}, \Gamma) \SumCtx ( \VarTypeCoeffect{\x}{\C}{\rGrad}, \Delta) &= \VarTypeCoeffect{\x}{\C}{\sGrad \rsum \rGrad}, (\Gamma \SumCtx\Delta) \\ 
(\VarTypeCoeffect{\x}{\C}{\sGrad}, \Gamma) \SumCtx \Delta &= \VarTypeCoeffect{\x}{\C}{\sGrad}, (\Gamma \SumCtx \Delta)  
  &\text{if $\x \notin \dom{\Delta}$} 
\end{align*} 
\item a scalar multiplication $\MulCtx$ 
\begin{align*}  
\sGrad \MulCtx \EmptyCtx = \EmptyCtx 
  && 
\sGrad \MulCtx (\VarTypeCoeffect{\x}{\C}{\rGrad},\Gamma) =  \VarTypeCoeffect{\x}{\C}{\sGrad \rmul \rGrad}, (\sGrad {\MulCtx} \Gamma) 
\end{align*} 
\end{itemize}  
As the reader may notice, 
these operations on type-and-coeffect contexts can be equivalently defined by lifting the corresponding operations 
on coeffect contexts, which are the pointwise extension of those on coeffects, 
to handle types as well. 
In this step, the addition becomes partial since a variable in the domain of both contexts is required to have the same type.  

\begin{figure}[t]
\begin{grammatica}
\produzione{\e}{\x\mid\FieldAccess{\e}{\f}\mid\ConstrCall{\C}{\es}\mid\MethCall{\e}{\m}{\es}\mid\Block{\T}{\x}{\e}{\e'}}{expression}\\
\produzione{\T}{\typeGrad{\C}{\rGrad}\BigSpace}{(graded) type}\\
\produzione{\val}{\ConstrCall{\C}{\vs}}{value}\\
\end{grammatica}
\caption{Syntax with grades}\label{fig:graded-calculus}
\end{figure}

The type system relies on the type information extracted from the class table, which, again to be concise, is abstractly modeled as follows:
\begin{itemize}
\item the subtyping relation $\leq$ on class names is the reflexive and transitive closure of the \texttt{extends} relation
\item $\mtype{\C}{\m}$ gives, for each method $\m$ of class $\C$, its enriched method type,  where  the types of the parameters and of  $\this$ have coeffect annotations.
\end{itemize}

 Moreover, $\fields{\C}$ gives now a sequence $\Field{\typeGrad{\C_1}{\rGrad_1}}{\f_1}\ldots\Field{\typeGrad{\C_n}{\rGrad_n}}{\f_n}$, meaning that, to construct an object of type $\C$, we need to provide, for each $i\in 1..n$, a value with a grade at least $\rGrad_i$. 

The subtyping relation on graded types is defined as follows:
\begin{quote}
$\typeGrad{\C}{\rGrad}\leq\typeGrad{\D}{\sGrad}$ iff $\C\leq\D$ and $\sGrad\rord\rGrad$
\end{quote}
That is, a graded type is a subtype of another if the class is a heir class and the grade is more constraining. For instance,  taking the affinity grade algebra of \refItem{ex:gr-alg}{0},  an invocation of a method with return type $\typeGrad{\C}{\omega}$ can be used in a context where a type $\typeGrad{\C}{\rone}$ is required, e.g., to initialize a $\typeGrad{\C}{\rone}$ variable. 

 The typing judgment has shape $\IsWFExpI{\Gamma}{\e}{\T}{\e'}$, where $\Gamma$ is a type-and-coeffect context, and $\e'$ is an annotated expression, as defined in \cref{fig:instr-red}. That is, typechecking generates annotations in code such that evaluation cannot  get  stuck, as will be formally expressed and proved  in the following.

In a well-typed class table, method bodies are expected to conform to method types. That is,
$\mtype{\C}{\m}$ and $\mbody{\C}{\m}$ should be either both undefined or both defined with the same number of parameters. In the latter case, the method body should be well-typed with respect to the method type, notably by typechecking the method body we should get coeffects which are (overapproximated by) those specified in the annotations.  Formally, if $\mbody{\C}{\m}=\Pair{\x_1\dots\x_n}{\e}$, and $\mtype{\C}{\m}=\funType{\rGrad_0,\typeGrad{\C_1}{\rGrad_1} \ldots \typeGrad{\C_n}{\rGrad_n}}{\T}$, then the following condition must hold:

\begin{quote}\label{t-meth}
\refToRule{t-meth}\Space${\IsWFExpI{\VarTypeCoeffect{\this}{\C}{\rgr_0},\VarTypeCoeffect{\x_1}{\C_1}{\rgr_1},\ldots,\VarTypeCoeffect{\x_n}{\C_n}{\rgr_n}}{\e}{\T}{\e'}}$
\end{quote}
 Moreover, we assume the standard coherence conditions on the class table with respect to inheritance. 
That is, if $\C\leq\D$, then 
$\fields\D$ is a prefix of $\fields\C$ and, 
if $\mtype\C\m = \funType{\rGrad_0,\typeGrad{\C_1}{\rGrad_1} \ldots \typeGrad{\C_n}{\rGrad_n}}{\T}$, then 
$\mtype\D\m = \funType{\rGrad_0,\typeGrad{\C_1}{\rGrad_1}\ldots \typeGrad{\C_n}{\rGrad_n}}{\T'}$ with $\T' \leq \T$.  

In \cref{fig:typing}, we describe the typing rules,  which are \emph{parameterized} on the underlying grade algebra. 

\begin{figure}
\begin{math}
\begin{array}{l}
\NamedRule{t-sub}
{\IsWFExpI{\Gamma}{\e}{\T}{\e'}}
{\IsWFExpI{\Gamma'}{\e}{\T'}{\e'}}
{ \Gamma\ROrdCtx\Gamma' \\ 
  \T\leq\T' } 
\BigSpace

\NamedRule{t-var}
{}
{ \IsWFExpI{\VarTypeCoeffect{\x}{\C}{\rGrad}}{\x}{\typeGrad{\C}{\rGrad}}{\x} } 
{ \rGrad\neq\rzero} 

\\[4ex]

\NamedRule{t-field-access}
{ \IsWFExpI{\Gamma}{\e}{\typeGrad{\C}{\rGrad}}{\e'} }
{ \IsWFExpI{\Gamma}{\FieldAccess{\e}{\f_i}}{\typeGrad{\C_i}{\rGrad\rmul\rGrad_i}}{\FieldAccess{\annotated{\e'}{\rGrad}}{\f_i}} }
{ \fields{\C}=\Field{\typeGrad{\C_1}{\rGrad_1}}{\f_1} \ldots \Field{\typeGrad{\C_n}{\rGrad_n}}{\f_n} } 
\\[4ex]

\NamedRule{t-new}
{\IsWFExpI{\Gamma_i}{\e_i}{\typeGrad{\C_i}{\rGrad\rmul\rGrad_i}}{\e'_i}\Space \forall i\in 1..n}
{\begin{array}{l}
\IsWFExpI{\Gamma_1\SumCtx\ldots\SumCtx\Gamma_n}{\ConstrCallTuple{\C}{\e}{n}}{\typeGrad{\C}{\rGrad}}{}\\
\HugeSpace\ConstrCallATuple{\C}{\e'}{\rGrad}{n}
\end{array}}
{\fields{\C}=\Field{\typeGrad{\C_1}{\rGrad_1}}{\f_1} \ldots \Field{\typeGrad{\C_n}{\rGrad_n}}{\f_n}} 
\\[7ex]

\NamedRule{t-invk}
{ \IsWFExpI{\Gamma_0}{\e_0}{\typeGrad{\C}{\rGrad_0}}{\e'_0} 
  \Space
  \IsWFExpI{\Gamma_i}{\e_i}{\typeGrad{\C_i}{\rGrad_i}}{\e'_i} \Space \forall 1\in 1..n }
{ \begin{array}{l}
\IsWFExpI{\Gamma_0\SumCtx  \ldots \SumCtx \Gamma_n}{\MethCallTuple{\e_0}{\m}{\e}{n}}{\T}{}\\
\HugeSpace{\MethCallATuple{\annotated{\e'_0}{\rGrad_0}}{\m}{\e}{\rGrad}{n}} 
\end{array}}
{ \mtype{\C}{\m}=\funType{\rGrad_0,\typeGrad{\C_1}{\rGrad_1} \ldots \typeGrad{\C_n}{\rGrad_n}}{\T}\\ }
\\[7ex]

\NamedRule{t-block}
{ \IsWFExpI{\Gamma_1}{\e_1}{\typeGrad{\C}{\rGrad}}{\e'_1}
  \Space
  \IsWFExpI{\Gamma_2, \VarTypeCoeffect{\x}{\C}{\rGrad}}{\e_2}{\T}{\e'_2} }
{ \IsWFExpI{\Gamma_1 \SumCtx \Gamma_2}{\Block{\typeGrad{\C}{\rGrad}}{\x}{\e_1}{\e_2}}{\T}{\Block{\C}{\x}{\annotated{\e'_1}{\rGrad}}{\e'_2}}  }
{}
\\[4ex] 
 
\NamedRule{t-env}
{ \IsWFExpI{}{\val_i}{\typeGrad{\C_i}{\rGrad_i}}{\val'_i} \Space \forall i \in 1..n }
{ \IsWFEnvI{\Gamma}{\stack}{\stack'} }
{ \Gamma = \VarTypeCoeffect{\x_1}{\C_1}{\rGrad_1},\ldots,\VarTypeCoeffect{\x_n}{\C_n}{\rGrad_n} \\ 
  \stack = \x_1\mapsto\Pair{\val_1}{\rGrad_1},\ldots,\x_n\mapsto\Pair{\val_n}{\rGrad_n} \\ 
  \stack' = \x_1\mapsto\Pair{\val'_1}{\rGrad_1},\ldots,\x_n\mapsto\Pair{\val'_n}{\rGrad_n}  }

\\[4ex]
\NamedRule{t-conf}
{ \IsWFExpI{\Delta}{\e}{\T}{\e'} \Space \IsWFEnvI{\Gamma}{\stack}{\stack'} }
{ \IsWFConfI{\Gamma}{\e}{\stack}{\T}{\e'}{\stack'}  }
{ \Delta\ROrdCtx\Gamma } 

\end{array}
\end{math}

\caption{Graded type system}\label{fig:typing}
\end{figure}

 In rule \refToRule{t-sub},  both the coeffect context and the (graded) type can be made more general. This means that, on one hand, variables can get less constraining coeffects. For instance, assuming again affinity coeffects, an expression which can be typechecked assuming to use a given variable at most once (coeffect 1) can be typechecked as well with no constraints (coeffect $\omega$). On the other hand, recalling that grades are contravariant in types, an expression can get a more constraining grade. For instance, an expression of grade $\omega$ can be used where a grade 1 is required.

If we take $\rGrad=\rone$, then rule \refToRule{t-var} is analogous to the standard rule for variable in coeffect systems, where the coeffect context is the map where the given variable is used once, and no other is used.  Here, more generally, the variable can get an arbitrary grade $\rGrad$, provided that  it gets the same grade in the context.  However, the use of the variable cannot be just discarded, as expressed by the side condition $\rGrad\neq\rzero$.

In rule \refToRule{t-field-access}, the grade of the field is multiplied by the grade of the receiver. As already mentioned, this is a form of \emph{viewpoint adaptation} \cite{DietlEtAl07}. For instance, using affinity grades,  a field graded $\omega$ of an object graded $1$ can be used at most once.

In rule \refToRule{t-new}, analogously to rule \refToRule{t-var}, the constructor invocation can get an arbitrary grade $\rGrad$, provided that the grades of the fields are multiplied by the same grade. Coeffects of the subterms are summed, as customary in type-and-coeffect systems. 

In rule \refToRule{t-invk}, the coeffects of the arguments are summed as well. The rule uses the function \aux{mtype} on the class table, which,  given a class name and a method name, returns its parameter and return (graded) types. For the implicit parameter $\this$ only the grade 
is specified.  Note that the grades of the parameters are used in two different ways:
\begin{itemize}
\item as (part of) types, when typechecking the arguments
\item as coeffects, when typechecking  the method body.
\end{itemize}
In rule \refToRule{t-block}, the coeffects of the initialization expression are summed with those of the body, excluding the local variable. Analogously to method parameters, the grade of the local variable is both used as (part of) type, when typechecking the initialization expression, and as coeffect, when typechecking the body.

Finally, we have straightforward rules for typing environments and configurations. Values in the environment are assumed to be closed, since we are in a call-by-value calculus. Also note that, in the judgment for environments and configurations, since no subsumption rule is available, variables in the context are exactly those in the domain of the environment, which are a superset of those used in the expression.  

\subsection{Examples}
\begin{example}\label{ex:ex1}

We show a simple example illustrating the use of graded types, assuming affinity grades. We write in square brackets the grade of the implicit $\this$ parameter. The class \lstinline{Pair} declares three versions of the getter for the \lstinline{first} field, which differ for the grade of the result: either $0$, meaning that the result of the method \emph{cannot} be used, or $1$, meaning it can be used at most once, or $\omega$, meaning it can be used with no constraints. Note that the first version, clearly useless in a functional calculus, could make sense adding effects, e.g. in an imperative calculus, playing a role similar to that of \lstinline{void}.

\begin{lstlisting}
class Pair { A$^1$ first; A$^1$ second; 
  A$^0$ getFirstZero() [$1$]{this.first}
  A$^1$ getFirstAffine() [$1$]{this.first}
  A$^\omega$ getFirst() [$1$]{this.first}
}
\end{lstlisting}
The coeffect of $\this$ is $1$ in  all  versions, and it is actually used once in the bodies. The occurrence of $\this$ in the bodies can get any non-zero grade thanks to rule \refToRule{t-var}, and fields are graded $1$, meaning that a field access does not affect the grade of the receiver, hence the three bodies can get any non-zero grade as well, so they are well-typed with respect to the grade in the method return type.

In the client code below, a call of the getter is assigned to a local variable of the same grade, which is then used consistently with such grade. 

\begin{lstlisting}  
Pair$^1$ p = ...

{A$^0$ a = p.getFirstZero(); new Pair(new A(),new A())}
{A$^1$ a = p.getFirstAffine(); new Pair(a,new A())}
{A$^\omega$ a = p.getFirst(); new Pair(a,a)}
\end{lstlisting}

The following blocks are, instead, ill-typed, for two different reasons.

\begin{lstlisting}
{A$^1$ a = p.getFirst(); new Pair(a,a)}
{A$^\omega$ a = p.getFirstAffine(); new Pair(a,a)}
\end{lstlisting}

In the first one, the initialization is correct, by subsumption, since we use an expression of a less constrained grade.  However, the variable is then used in a way which is not compatible with its grade.
In the second one, instead, the variable is used consistently with its grade, but the initialization is ill-typed, since we use an expression of a more constrained grade.

Finally, note that the coeffect of $\this$ could be safely changed to be $\omega$ in the  three  methods, providing an overapproximated information; in this case, however, the  three  invocations in the client code would be wrong, since the receiver $\ptt$ is required to be used at most once. 
\end{example}

\begin{example}\label{ex:ex2}
 Consider the following source (that is, non-annotated) version  of the expression in \cref{ex:ex2-sem}. 
\begin{lstlisting}
{A$^\public$ y = new A(); {A$^\private$ x = y; x}}
\end{lstlisting}
The $\private$ variable $\xtt$ is initialized with the  $\public$ expression/variable $\ytt$. The  block  expression has type $\Att^\private$ as the following type derivation shows.
\[
\NamedRule{t-block}{
\NamedRule{t-new}{}{
\IsWFExp{}{\ConstrCall{\Att}{}}{\Att^\publicS} }{}\Space\Space
\Deriv
{}}
{
\IsWFExp{}{\Block{\Att^\publicS}{\ytt}{\ConstrCall{\Att}{}}{\Block{\Att^\privateS}{\xtt}{\ytt}{\xtt}}}{\Att^\privateS}
}
{}
\]
where $\Deriv$ is the following derivation
\[
\NamedRule{t-block}{
\NamedRule{t-sub}{
\NamedRule{t-var}{}{\IsWFExp{\VarTypeCoeffect{\ytt}{\Att}{\publicS}}{\ytt}{\Att^\publicS}}{}
}{\IsWFExp{\VarTypeCoeffect{\ytt}{\Att}{\publicS}}{\ytt}{\Att^\privateS}}{}
\ \ \NamedRule{t-var}{}{\IsWFExp{\VarTypeCoeffect{\ytt}{\Att}{\publicS},\VarTypeCoeffect{\xtt}{\Att}{\privateS}}{\xtt}{\Att^\privateS}}{}
}
{
\IsWFExp{\VarTypeCoeffect{\ytt}{\Att}{\publicS}}{\Block{\Att^\privateS}{\xtt}{\ytt}{\xtt}}{\Att^\privateS} }{}
\]
 On the other hand, initializing  a $\public$ variable with a $\private$ expression as in
\begin{lstlisting}
{A$^\private$ y = new A(); {A$^\public$ x = y; x}}
\end{lstlisting}
is not possible, as expected, since $\IsNotWFExp{\VarTypeCoeffect{\ytt}{\Att}{\privateS}}{\ytt}{\Att^\publicS}$ .

Consider now the class $\Btt$ with a $\private$ field and a $\public$ one.
\begin{lstlisting}
class B { A$^\public$ f1; A$^\private$ f2; }
\end{lstlisting}
The expression $\e$
\begin{lstlisting}
{A$^\public$ x = new A();  new B(x,x)}
\end{lstlisting}
can be given type $\Btt^\public$ as follows:
\begin{small}
\[
\NamedRule{t-block}{
\NamedRule{t-new}{}{
\IsWFExp{}{\ConstrCall{\Att}{}}{\Att^\publicS} }{}\ \ 
\NamedRule{t-new}{
\NamedRule{t-var}{}
{\IsWFExp{\VarTypeCoeffect{\xtt}{\Att}{\publicS}}{\xtt}{\Att^\publicS}}{}\ 
\NamedRule{t-sub}{
\NamedRule{t-var}{}{\IsWFExp{\VarTypeCoeffect{\xtt}{\Att}{\publicS}}{\xtt}{\Att^\publicS}}{}
}{\IsWFExp{\VarTypeCoeffect{\xtt}{\Att}{\publicS}}{\xtt}{\Att^\privateS}}{}
}
  {\IsWFExp{\VarTypeCoeffect{\xtt}{\Att}{\publicS}}{\ConstrCall{\Btt}{\xtt,\xtt}}{\Btt^\publicS}}{}
}
{\IsWFExp{}{\Block{\Att^\publicS}{\xtt}{\ConstrCall{\Att}{}}{\ConstrCall{\Btt}{\xtt,\xtt}}}{\Btt^\publicS}
}{}
\]
\end{small}
By \refToRule{t-sub} we can also derive $\IsWFExp{}{\e}{\Btt^\privateS}$ and so we get 
\[
\begin{array}{c}
\NamedRule{t-field}{\IsWFExp{}{\e}{\Btt^\privateS}}
{\IsWFExp{}{\FieldAccess{\e}{\fO}}{\Att^\privateS}}{}
\Space\Space\Space\Space
\NamedRule{t-field}{\IsWFExp{}{\e}{\Btt^\publicS}}
{\IsWFExp{}{\FieldAccess{\e}{\fT}}{\Att^\privateS}}{}
\end{array}
\]
that is, accessing a $\public$ field  of  a $\private$ expression we get a $\private$ result as well as accessing a
$\private$ field  of  a $\public$ expression.\\
Also note that the following expression $\e'$ 
\begin{lstlisting}
{A$^\private$ x = new A();  new B(x,x)}
\end{lstlisting}
can be given only type $\Btt^\private$ by 
\[
\NamedRule{t-block}{
\NamedRule{t-new}{}{
\IsWFExp{}{\ConstrCall{\Att}{}}{\Att^\privateS} }{}\ \ 
\NamedRule{t-new}{
\NamedRule{t-var}{}
{\IsWFExp{\VarTypeCoeffect{\xtt}{\Att}{\privateS}}{\xtt}{\Att^\privateS}}{}\ 
\NamedRule{t-var}{}
{\IsWFExp{\VarTypeCoeffect{\xtt}{\Att}{\privateS}}{\xtt}{\Att^\privateS}}{}\ 
}
  {\IsWFExp{\VarTypeCoeffect{\xtt}{\Att}{\privateS}}{\ConstrCall{\Btt}{\xtt,\xtt}}{\Btt^\privateS}}{}
}
{\IsWFExp{}{\Block{\Att^\privateS}{\xtt}{\ConstrCall{\Att}{}}{\ConstrCall{\Btt}{\xtt,\xtt}}}{\Btt^\privateS}
}{}
\]
We cannot derive $\IsWFExp{}{\e'}{\Btt^\publicS}$, since the grade of $\fO$ is $\public$ and \refToRule{t-new} 
 would require $\IsWFExp{\VarTypeCoeffect{\xtt}{\Att}{\privateS}}{\xtt}{\Att^{\publicS\rmul\publicS}}$,
which does not hold.
\end{example}

\subsection{Resource-aware soundness}

 We \bpa state \epa that the graded type system is sound with respect to the resource-aware semantics.
In other words, the graded type system prevents both standard typing errors, such as invoking a missing field or method, and resource-usage errors, such as requiring a resource which is exhausted (cannot be used in the needed way). 

In order to state and prove a soundness theorem, we need to introduce  a (straightforward) typing judgment $\vdash_{a}$ for annotated expressions, environments and configurations. 
The typing rules  are reported in \cref{fig:type-aexp}. 

\begin{figure}
\begin{math}
\begin{array}{l}
\NamedRule{t-sub}
{\IsWFExpA{\Gamma}{\e}{\T}}
{\IsWFExpA{\Gamma'}{\e}{\T'}}
{ \Gamma\ROrdCtx\Gamma' \\ 
  \T\leq\T' } 
\BigSpace
\NamedRule{t-var}
{}
{ \IsWFExpA{\VarTypeCoeffect{\x}{\C}{\rGrad}}{\x}{\typeGrad{\C}{\rGrad}} } 
{ \rGrad\neq\rzero} 
\\[4ex]
\NamedRule{t-field-access}
{ \IsWFExpA{\Gamma}{\e}{\typeGrad{\C}{\rGrad}} }
{ \IsWFExpA{\Gamma}{\FieldAccess{\annotated{\e}{\rGrad}}{\f_i}}{\typeGrad{\C_i}{\rGrad\rmul\rGrad_i}} }
{ \fields{\C}=\Field{\typeGrad{\C_1}{\rGrad_1}}{\f_1} \ldots \Field{\typeGrad{\C_n}{\rGrad_n}}{\f_n} } 
\\[4ex]
\NamedRule{t-new}
{\IsWFExpA{\Gamma_i}{\e_i}{\typeGrad{\C_i}{\rGrad\rmul\rGrad_i}} \Space \forall i\in 1..n}
{\IsWFExpA{\Gamma_1\SumCtx\ldots\SumCtx\Gamma_n}{\ConstrCallATuple{\C}{\e}{\rGrad}{n}}{\typeGrad{\C}{\rGrad}} }
{\fields{\C}=\Field{\typeGrad{\C_1}{\rGrad_1}}{\f_1} \ldots \Field{\typeGrad{\C_n}{\rGrad_n}}{\f_n}} 
\\[4ex]

\NamedRule{t-invk}
{ \IsWFExpA{\Gamma_0}{\e_0}{\typeGrad{\C}{\rGrad_0}}
  \Space
  \IsWFExpA{\Gamma_i}{\e_i}{\typeGrad{\C_i}{\rGrad_i}} \Space \forall 1\in 1..n }
{ \IsWFExpA{\Gamma_0\SumCtx  \ldots \SumCtx \Gamma_n}{\MethCallATuple{\annotated{\e_0}{\rGrad_0}}{\m}{\e}{\rGrad}{n}}{\T} }
{ \mtype{\C}{\m}=\funType{\rGrad_0,\typeGrad{\C_1}{\rGrad_1} \ldots \typeGrad{\C_n}{\rGrad_n}}{\T}\\ }
\\[4ex]

\NamedRule{t-block}
{ \IsWFExpA{\Gamma_1}{\e_1}{\typeGrad{\C}{\rGrad}} 
  \Space
  \IsWFExpA{\Gamma_2, \VarTypeCoeffect{\x}{\C}{\rGrad}}{\e_2}{\T} }
{ \IsWFExpA{\Gamma_1 \SumCtx \Gamma_2}{\Block{\C}{\x}{\annotated{\e_1}{\rGrad}}{\e_2}}{\T}  }
{}
\\[4ex] 
 
\NamedRule{t-env}
{ \IsWFExpA{}{\val_i}{\typeGrad{\C_i}{\rGrad_i}} \Space \forall i \in 1..n }
{ \IsWFEnvA{\Gamma}{\stack}  }
{ \Gamma = \VarTypeCoeffect{\x_1}{\C_1}{\rGrad_1},\ldots,\VarTypeCoeffect{\x_n}{\C_n}{\rGrad_n} \\ 
  \stack = \x_1\mapsto\Pair{\val_1}{\rGrad_1},\ldots,\x_n\mapsto\Pair{\val_n}{\rGrad_n} } 
 
\\[4ex]
\NamedRule{t-conf}
{ \IsWFExpA{\Delta}{\e}{\T} \Space \IsWFEnvA{\Gamma}{\stack} }
{ \IsWFConfA{\Gamma}{\e}{\stack}{\T} }
{ \Delta\ROrdCtx\Gamma } 
\end{array}
\end{math}
\caption{Graded type system for annotated syntax}
\label{fig:type-aexp} 
\end{figure}

 Recall that $\erase{\_}$ denotes erasing annotations. 
 It is easy to see that an annotated expression is well-typed if and only if it is produced by the type system: 
\begin{proposition}\label{prop:wt-ann}
$\IsWFExpI\Gamma\e\T{\e'}$ if and only if 
$\erase{\e'} = \e$ and $\IsWFExpA\Gamma{\e'}\T$. 
\end{proposition}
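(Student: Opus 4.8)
The plan is to prove both implications by a straightforward induction on typing derivations, exploiting the fact that the rules of \cref{fig:typing} and those of \cref{fig:type-aexp} are in exact one-to-one correspondence. Each generating rule deriving $\IsWFExpI{\Gamma}{\e}{\T}{\e'}$ differs from its checking counterpart deriving $\IsWFExpA{\Gamma}{\e'}{\T}$ only in that the former reads the \emph{source} expression $\e$ in its conclusion and \emph{emits} the annotated $\e'$, whereas the latter reads the annotated $\e'$ directly and has no output; crucially, the premises, side conditions and contexts are otherwise identical, and the annotations the generating rule inserts into $\e'$ are exactly those the checking rule inspects. Since $\erase{\_}$ is defined to delete precisely these annotations (the source syntax of \cref{fig:graded-calculus} carries none), the two systems match rule by rule, and the inductions are essentially bookkeeping.

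For the left-to-right implication I would induct on a derivation of $\IsWFExpI{\Gamma}{\e}{\T}{\e'}$, casing on the last rule. For each premise $\IsWFExpI{\Gamma_i}{\e_i}{\T_i}{\e'_i}$ the induction hypothesis yields both $\erase{\e'_i} = \e_i$ and $\IsWFExpA{\Gamma_i}{\e'_i}{\T_i}$. The equation $\erase{\e'} = \e$ then follows by inspecting the conclusion: the emitted $\e'$ is obtained from $\e$ by replacing each subterm $\e_i$ with $\e'_i$ and inserting the grade annotations prescribed by the rule (for instance $\FieldAccess{\annotated{\e'}{\rGrad}}{\f_i}$ in \refToRule{t-field-access}, $\Block{\C}{\x}{\annotated{\e'_1}{\rGrad}}{\e'_2}$ in \refToRule{t-block}, or the annotated argument lists in \refToRule{t-new} and \refToRule{t-invk}); erasure removes exactly these insertions and commutes with the replacement of subterms, so the claim follows from the induction hypotheses. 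The judgment $\IsWFExpA{\Gamma}{\e'}{\T}$ is obtained by applying the homonymous rule of \cref{fig:type-aexp}, whose premises and side conditions coincide with those of the rule at hand, discharging the premises with the $\vdash_a$ components of the induction hypotheses. Rule \refToRule{t-sub} leaves the expression unchanged and is handled uniformly by reapplying \refToRule{t-sub} of \cref{fig:type-aexp}.

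For the right-to-left implication I would induct on a derivation of $\IsWFExpA{\Gamma}{\e'}{\T}$ and set $\e := \erase{\e'}$. The checking rules are syntax-directed on $\e'$, the sole exception being \refToRule{t-sub} which, as before, is transparent to the expression; hence the shape of $\e'$ determines both the last rule and the shape of $\erase{\e'}$. In each case one rebuilds the generating derivation by applying the corresponding rule of \cref{fig:typing}: the induction hypothesis turns every premise $\IsWFExpA{\Gamma_i}{\e'_i}{\T_i}$ into $\IsWFExpI{\Gamma_i}{\erase{\e'_i}}{\T_i}{\e'_i}$, and because the term emitted in the conclusion of the generating rule is syntactically the $\e'$ we started from, we obtain exactly $\IsWFExpI{\Gamma}{\erase{\e'}}{\T}{\e'}$.

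The only point requiring genuine care — and the closest thing to an obstacle — is checking, clause by clause for the finite list of rules, that $\erase{\_}$ deletes precisely the annotations introduced by each generating rule and that these annotations are syntactically identical to the ones the corresponding $\vdash_a$ rule requires. In most constructs the inserted annotation is simply the grade at which the subterm is typed (e.g.\ the receiver grade $\rGrad$ in \refToRule{t-field-access} and the block grade $\rGrad$ in \refToRule{t-block}), whereas for the arguments of a constructor invocation it is the declared field grade $\rGrad_i$, fixed in both systems by the common side condition $\fields{\C}=\Field{\typeGrad{\C_1}{\rGrad_1}}{\f_1}\ldots\Field{\typeGrad{\C_n}{\rGrad_n}}{\f_n}$, and similarly for the parameter grades in \refToRule{t-invk}. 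Once this alignment is verified, both directions go through mechanically.
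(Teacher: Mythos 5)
Your proof is correct and matches the argument the paper intends: the paper states \cref{prop:wt-ann} without proof, remarking only that it is ``easy to see'', and the implicit justification is precisely your rule-by-rule induction in both directions, exploiting the one-to-one correspondence between the rules of \cref{fig:typing} and \cref{fig:type-aexp}, with \refToRule{t-sub} as the sole non-syntax-directed case. Your clause-by-clause alignment of the inserted annotations (the subterm grades in \refToRule{t-field-access} and \refToRule{t-block}, the declared field and parameter grades in \refToRule{t-new} and \refToRule{t-invk}) is exactly the right bookkeeping, and it even quietly repairs the paper's apparent typo in the conclusion of \refToRule{t-invk}, where the emitted arguments are written $\e_i$ rather than $\e'_i$.
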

A similar property  holds  for environments and configurations.

The main result is the following resource-aware progress theorem.

\begin{theorem}[Resource-aware progress] \label{thm:res-progress}
If $\IsWFConfA\Gamma\e\stack{\typeGrad\C\rGrad}$  then 
either $\e$ is a value or 
$\reducestack{\e}{\stack}{\rGrad}{\e'}{\stack'}$ and 
$\IsWFConfA{\Gamma'}{\e'}{\stack'}{\typeGrad\C\rGrad}$ with $\dom\Gamma\subseteq\dom{\Gamma'}$ and $\Gamma'\ROrdCtx\Gamma,\Delta$. 
\end{theorem}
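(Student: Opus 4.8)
The plan is to prove a combined progress-and-subject-reduction property by induction on the typing derivation of the expression. By inverting rule \refToRule{t-conf}, the hypothesis $\IsWFConfA\Gamma\e\stack{\typeGrad\C\rGrad}$ splits into $\IsWFExpA\Sigma\e{\typeGrad\C\rGrad}$ with $\Sigma\ROrdCtx\Gamma$ and $\IsWFEnvA\Gamma\stack$; thus $\Gamma$ records the grades actually stored in $\stack$, while the demand $\Sigma$ of $\e$ is below them. I would induct on the derivation of $\IsWFExpA\Sigma\e{\typeGrad\C\rGrad}$ and analyse the last syntax-directed rule. Rule \refToRule{t-sub} is absorbed uniformly: from a premise $\IsWFExpA{\Sigma_0}\e{\typeGrad{\C_0}{\rGrad_0}}$ with $\Sigma_0\ROrdCtx\Sigma$, $\C_0\leq\C$ and $\rGrad\rord\rGrad_0$, the inductive hypothesis yields a step with reduction grade $\rGrad_0$, which I lower to the required grade $\rGrad$ by \cref{prop:red-gr}, while the target type is recovered from $\typeGrad{\C_0}{\rGrad_0}\leq\typeGrad\C\rGrad$. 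The two structural side conditions in the conclusion, $\dom\Gamma\subseteq\dom{\Gamma'}$ and $\Gamma'\ROrdCtx\Gamma,\Delta$, are not really typing facts: once the reduct environment is retyped by \refToRule{t-env}, they follow directly from \cref{prop:step-env}, which says the environment only grows and the grades of surviving variables decrease. Hence the real content is progress together with preservation of the type $\typeGrad\C\rGrad$.

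The key device that makes the induction compose is to carry a \emph{frame}: a coeffect context $\Theta$ with $\Sigma\SumCtx\Theta\ROrdCtx\Gamma$, and to produce a reduction that leaves the budget reserved by $\Theta$ intact, i.e.\ whose reduct admits a demand $\Sigma'$ with $\Sigma'\SumCtx\Theta'\ROrdCtx\Gamma'$, where $\Theta'$ extends $\Theta$ by $\rzero$ on the freshly introduced variables (the extension to $\Sigma_0$ under \refToRule{t-sub} reuses the same $\Theta$, valid by monotonicity of $\SumCtx$). This is forced by the fact that the environment is shared: reducing one subterm of, say, a constructor invocation draws resources from the common environment, and one must be sure that enough survives for the sibling subterms. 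The theorem is the instance $\Theta=\EmptyCtx$, for which $\Sigma\SumCtx\EmptyCtx=\Sigma\ROrdCtx\Gamma$ is exactly the hypothesis.

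In the base case \refToRule{t-var} we have $\e=\x$, $\Sigma=\VarTypeCoeffect\x\C\rGrad$ with $\rGrad\neq\rzero$, and $\stack(\x)=\Pair\val\sGrad$ with $\rGrad\rsum\Theta(\x)\rord\sGrad$. Using the non-determinism of rule \refToRule{var} I consume exactly $\rGrad$, lowering the grade of $\x$ to $\Theta(\x)$: taking $\rGrad'=\rGrad$ and $\sGrad'=\Theta(\x)$ fulfils $\rGrad\rord\rGrad'\neq\rzero$ and $\sGrad'\rsum\rGrad'\rord\sGrad$, so the frame is preserved and the reduct $\val$, of type $\typeGrad\C\sGrad\leq\typeGrad\C\rGrad$, is retyped by \refToRule{t-sub}. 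When the active subterm of a field access, method call, or block is already a value, I fire the corresponding computational rule: the side condition $\sGrad\rord\rGrad\rmul\rGrad_i$ of \refToRule{field-access} holds by reflexivity, because the reduction grade coincides with the grade that \refToRule{t-field-access} puts in the type; the reducts of \refToRule{invk} and \refToRule{block} are typed through a standard substitution lemma and the method well-formedness condition \refToRule{t-meth}, the freshly bound variables receiving precisely the grades their types require.

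For a contextual case I pick the active subterm $\e_i$, note that its demand $\Gamma_i$ satisfies $\Gamma_i\SumCtx\Theta_i\ROrdCtx\Gamma$ for the augmented frame $\Theta_i$ obtained by adding to $\Theta$ the demands of all the sibling subterms (each summand lies below the total sum because $\rzero$ is least and $\rsum$ is monotone), and apply the inductive hypothesis to $\e_i$ with this frame. The reduction grades match by construction, since every contextual rule reduces the subterm with the grade carried by its annotation, e.g.\ $\rGrad\rmul\rGrad_i$ in \refToRule{new-ctx} and $\rGrad_i$ in \refToRule{invk-arg-ctx}; the frame conclusion $\Gamma_i'\SumCtx\Theta_i'\ROrdCtx\Gamma'$ then says exactly that the new sum of demands, built from the reduced $\e_i$ and the untouched siblings, is covered by the updated environment, so I can reassemble the configuration with the matching contextual reduction rule and re-derive its type with the corresponding syntax-directed rule. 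The main obstacle is precisely this accounting across subterms that share variables: a plain induction forgets the budget reserved for the not-yet-reduced siblings, and it is the frame invariant, together with the freedom of \refToRule{var} to burn the minimal amount, that keeps the bookkeeping consistent. The auxiliary generation lemma needed to invert typing through \refToRule{t-sub} and the substitution lemma used in the \refToRule{invk} and \refToRule{block} cases are routine.
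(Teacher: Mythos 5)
Your proposal matches the paper's own proof essentially step for step: the paper proves exactly your generalized statement with a frame context $\Theta$ (\cref{thm:res-progress-gen}, whose conclusions $\Delta'\ROrdCtx\Delta,\Theta'$, $\Gamma'\ROrdCtx\Gamma,\Theta'$ and $\Delta'\SumCtx\Theta\ROrdCtx\Gamma'$ are your invariant), by induction on the typing rules, absorbing \refToRule{t-sub} via \cref{prop:red-gr}, consuming exactly $\rGrad$ in the \refToRule{t-var} case so the residual grade is $\Theta(\x)$, and augmenting the frame with the sibling subterms' demands in the contextual cases. The only cosmetic difference is the final instantiation: since the general theorem assumes $\dom\Delta\subseteq\dom\Theta$, the paper takes $\Theta=\rzero\MulCtx\Delta$ rather than $\EmptyCtx$, which is precisely what your convention of reading $\Theta(\x)$ as $\rzero$ outside its domain amounts to.
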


 When reduction is non-deterministic,  we can distinguish two flavours of soundness,   \emph{soundness-must}  meaning that no computation can be stuck, and \emph{soundness-may}, meaning that at least one computation is not stuck.
The terminology of \emph{may} and \emph{must} properties is very general and comes originally from \cite{DeNicolaH84}; the specific names soundness-may and soundness-must  were  introduced in \cite{DagninoBZD20,Dagnino22} in the context of  big-step semantics. 
In our case, graded reduction is non-deterministic  since, as discussed before,    the rule \refToRule{var}  could be instantiated in different ways, possibly consuming the resource more than necessary. 
However, we expect that, for a well-typed configuration, there is at least one computation which is not stuck, hence a soundness-may result. Soundness-may can be proved  by a theorem like  the one above,  which can be seen as a subject-reduction-may result, including standard progress. In our case, if the configuration is well-typed, that is, annotations have been generated by the type system, 
\emph{there is a step} which leads, in turn, to a well-typed configuration. More in detail, the type is preserved, resources initially available  may  have reduced grades, and other available resources may be added.

To prove this result, we need some standard lemmas.

\begin{lemma}[Environment typing]\label{lem:env-inv}
The following facts hold
\begin{enumerate}
\item\label{lem:env-inv:1} If $\IsWFEnvA{\Gamma}{\stack}$ and $\Gamma = \Gamma',\VarTypeCoeffect\x\C\rGrad$, then 
$\stack = \stack',\x\mapsto\Pair\val\rGrad$ and $\IsWFEnvA{\Gamma'}{\stack'}$ and $\IsWFExp{}{\val}{\typeGrad\C\rGrad}$. 
\item\label{lem:env-inv:2} If $\IsWFEnvA{\Gamma}{\stack}$ and $\x\notin\dom\stack$ and $\IsWFExpA{}{\val}{\typeGrad\C\rGrad}$, then 
$\IsWFEnvA{\Gamma,\VarTypeCoeffect\x\C\rGrad}{\stack,\x\mapsto\Pair\val\rGrad}$. 
\end{enumerate}
\end{lemma}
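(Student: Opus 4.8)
The plan is to exploit the fact that the environment judgment $\IsWFEnvA{\Gamma}{\stack}$ is derivable \emph{only} by rule \refToRule{t-env}, which has no subsumption counterpart. This rule admits a clean inversion: it forces $\Gamma$ and $\stack$ to list the same variables in lock-step, with matching grades, and requires each value $\val_i$ to satisfy $\IsWFExpA{}{\val_i}{\typeGrad{\C_i}{\rGrad_i}}$. Both parts of the lemma then follow by inverting and re-applying this single rule; the only point requiring care is that contexts and environments are taken up to reordering of their bindings, so singling out a distinguished binding is harmless.

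For part~\ref{lem:env-inv:1}, I would invert the derivation of $\IsWFEnvA{\Gamma}{\stack}$ via \refToRule{t-env}, obtaining $\Gamma = \VarTypeCoeffect{\x_1}{\C_1}{\rGrad_1},\ldots,\VarTypeCoeffect{\x_n}{\C_n}{\rGrad_n}$ and $\stack = \x_1\mapsto\Pair{\val_1}{\rGrad_1},\ldots,\x_n\mapsto\Pair{\val_n}{\rGrad_n}$, together with $\IsWFExpA{}{\val_i}{\typeGrad{\C_i}{\rGrad_i}}$ for every $i\in 1..n$. Since bindings are unordered, the hypothesis $\Gamma = \Gamma',\VarTypeCoeffect\x\C\rGrad$ lets me assume without loss of generality that the distinguished binding is the last one, i.e.\ $\x=\x_n$, $\C=\C_n$, $\rGrad=\rGrad_n$, whence $\stack = \stack',\x\mapsto\Pair\val\rGrad$ with $\val=\val_n$. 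The premise for $i=n$ is precisely the required typing $\IsWFExpA{}{\val}{\typeGrad\C\rGrad}$ of $\val$, and re-applying \refToRule{t-env} to the first $n-1$ bindings (whose value-typing premises are already available) yields $\IsWFEnvA{\Gamma'}{\stack'}$.

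For part~\ref{lem:env-inv:2}, I would again invert $\IsWFEnvA{\Gamma}{\stack}$ to recover the premises $\IsWFExpA{}{\val_i}{\typeGrad{\C_i}{\rGrad_i}}$. The freshness assumption $\x\notin\dom\stack$ guarantees $\x\neq\x_i$ for all $i$, so that $\Gamma,\VarTypeCoeffect\x\C\rGrad$ and $\stack,\x\mapsto\Pair\val\rGrad$ are well-formed (the variable is not already bound). I then apply \refToRule{t-env} to these extended objects: its premises are exactly the $n$ typings already at hand plus the extra hypothesis $\IsWFExpA{}{\val}{\typeGrad\C\rGrad}$, which is given, producing $\IsWFEnvA{\Gamma,\VarTypeCoeffect\x\C\rGrad}{\stack,\x\mapsto\Pair\val\rGrad}$.

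There is essentially no hard step: this is a routine inversion-and-weakening lemma for a judgment defined by a unique, subsumption-free rule. The only mild subtlety, worth stating explicitly rather than genuinely difficult, is the bookkeeping induced by treating contexts and environments up to reordering, which is what justifies singling out one binding in part~\ref{lem:env-inv:1} and appending a fresh one in part~\ref{lem:env-inv:2}.
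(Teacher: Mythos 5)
Your proof is correct and matches the intended argument: the paper states \cref{lem:env-inv} without proof, treating it as a standard inversion lemma, and your routine invert-and-reapply argument on the unique, subsumption-free rule \refToRule{t-env} (using that bindings in $\Gamma$ and $\stack$ are unordered, and that $\dom\Gamma=\dom\stack$ makes the freshness condition in part~2 sufficient) is exactly the expected one. Nothing further is needed.
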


\begin{lemma}[Strengthening for values]\label{lem:val_empty}
If $\IsWFExpA\Gamma\val\T$, then $\IsWFExpA{}{\val}{\T}$. 
\end{lemma}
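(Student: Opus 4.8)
The plan is to argue by induction on the derivation of $\IsWFExpA\Gamma\val\T$, exploiting the fact that a value is syntactically closed. Since $\val$ is a value, it has the shape $\ConstrCall{\C}{\annotated{\val_1}{\rGrad_1},\ldots,\annotated{\val_n}{\rGrad_n}}$, so by inspecting the rules in \cref{fig:type-aexp} the last rule applied can only be \refToRule{t-new} or \refToRule{t-sub}: the remaining rules conclude typings for variables, field accesses, method invocations, or blocks, none of which match the syntactic form of a value. This inversion step confines the induction to exactly two cases.

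In the \refToRule{t-sub} case the derivation ends with a premise $\IsWFExpA{\Gamma''}{\val}{\T''}$ together with side conditions $\Gamma''\ROrdCtx\Gamma$ and $\T''\leq\T$. Applying the induction hypothesis to the (structurally smaller) premise yields $\IsWFExpA{}{\val}{\T''}$, and since $\EmptyCtx\ROrdCtx\EmptyCtx$ holds by definition of $\ROrdCtx$, a further application of \refToRule{t-sub} that keeps the context empty and relaxes the type along $\T''\leq\T$ gives $\IsWFExpA{}{\val}{\T}$. In the \refToRule{t-new} case we have $\T=\typeGrad{\C}{\rGrad}$ and premises $\IsWFExpA{\Gamma_i}{\val_i}{\typeGrad{\C_i}{\rGrad\rmul\rGrad_i}}$ for each $i\in 1..n$, with $\Gamma=\Gamma_1\SumCtx\ldots\SumCtx\Gamma_n$. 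Each argument $\val_i$ is itself a value, so the induction hypothesis applies and produces $\IsWFExpA{}{\val_i}{\typeGrad{\C_i}{\rGrad\rmul\rGrad_i}}$. Because the empty context is neutral for $\SumCtx$, we have $\EmptyCtx\SumCtx\ldots\SumCtx\EmptyCtx=\EmptyCtx$, and re-applying \refToRule{t-new} to the strengthened premises yields $\IsWFExpA{}{\val}{\typeGrad{\C}{\rGrad}}$. The base case is the instance of this step with $n=0$, where \refToRule{t-new} is applied with no premises and already concludes in the empty context.

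I expect no genuine obstacle: the lemma is essentially the remark that values consume no resources, and the argument is driven entirely by the restricted shape of typing derivations for values. The only points needing care are the inversion observation that only \refToRule{t-new} and \refToRule{t-sub} can type a value, and the two elementary facts about the empty context, namely $\EmptyCtx\ROrdCtx\EmptyCtx$ and the neutrality of $\EmptyCtx$ for $\SumCtx$, both immediate from the definitions of the context operations given earlier in this section.
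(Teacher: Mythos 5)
Your proof is correct: the inversion observation that only \refToRule{t-sub} and \refToRule{t-new} can conclude a typing for a value, followed by induction on the derivation using $\EmptyCtx\ROrdCtx\EmptyCtx$ and the neutrality of $\EmptyCtx$ for $\SumCtx$, is exactly the intended argument. The paper states this lemma without proof, listing it among its standard lemmas, and your derivation is precisely the routine induction it tacitly relies on.
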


\begin{lemma}[Canonical Forms]\label{lem:cf}
If $\IsWFExpA{\Gamma}{\val}{\typeGrad\C\rGrad}$ then 
$\val  = \ConstrCallATuple{\D}{\val}{\rGrad}{n}$ and 
$\D\leq\C$, $\fields{\D} = \Field{\typeGrad{\D_1}{\rGrad_1}}{\f_1}\ldots\Field{\typeGrad{\D_n}{\rGrad_n}}{\f_n}$ and 
$\Gamma_1 \SumCtx\ldots\SumCtx\Gamma_m \ROrdCtx \Gamma$ and 
$\IsWFExpA{\Gamma_i}{\val_i}{\typeGrad{\D_i}{\sGrad\rmul\rGrad_i}}$, for all $i \in 1..m$, with $\rGrad\rord\sGrad$. 
\end{lemma}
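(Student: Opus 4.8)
The plan is to proceed by induction on the derivation of $\IsWFExpA{\Gamma}{\val}{\typeGrad{\C}{\rGrad}}$, exploiting the fact that a value is syntactically a constructor invocation. First I would observe that, since $\val$ is a value, the only rules in \cref{fig:type-aexp} whose conclusion can match such an expression are \refToRule{t-new} and \refToRule{t-sub}: rules \refToRule{t-var}, \refToRule{t-field-access}, \refToRule{t-invk}, and \refToRule{t-block} are excluded by the shape of the subject. Hence inversion leaves exactly these two cases to analyse.

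In the base case \refToRule{t-new} the result can be read off directly: we have $\D = \C$, so $\D \leq \C$ holds by reflexivity; the side condition fixes $\fields{\C}$ to the required shape; the combined context $\Gamma_1 \SumCtx \ldots \SumCtx \Gamma_n$ \emph{is} $\Gamma$, so $\Gamma_1 \SumCtx \ldots \SumCtx \Gamma_n \ROrdCtx \Gamma$ by reflexivity of $\ROrdCtx$; and each premise already gives $\IsWFExpA{\Gamma_i}{\val_i}{\typeGrad{\C_i}{\rGrad \rmul \rGrad_i}}$. Choosing the witness $\sGrad = \rGrad$ discharges the last clause, since $\rGrad \rord \rGrad$.

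In the inductive case \refToRule{t-sub} the premise is $\IsWFExpA{\Gamma'}{\val}{\typeGrad{\C'}{\rGrad'}}$ with $\Gamma' \ROrdCtx \Gamma$ and $\typeGrad{\C'}{\rGrad'} \leq \typeGrad{\C}{\rGrad}$; by the definition of subtyping on graded types this unfolds to $\C' \leq \C$ and $\rGrad \rord \rGrad'$. Applying the induction hypothesis to the premise yields $\val$ of the claimed form $\ConstrCallATuple{\D}{\val}{\rGrad}{n}$, together with $\D \leq \C'$, the field decomposition $\fields{\D}$, the inequality $\Gamma_1 \SumCtx \ldots \SumCtx \Gamma_n \ROrdCtx \Gamma'$, and $\IsWFExpA{\Gamma_i}{\val_i}{\typeGrad{\D_i}{\sGrad' \rmul \rGrad_i}}$ for some $\sGrad'$ with $\rGrad' \rord \sGrad'$. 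I would then transfer each clause to the weaker type by transitivity: $\D \leq \C' \leq \C$ gives $\D \leq \C$; $\Gamma_1 \SumCtx \ldots \SumCtx \Gamma_n \ROrdCtx \Gamma' \ROrdCtx \Gamma$ gives the context inequality; and, keeping the same witness $\sGrad = \sGrad'$, the chain $\rGrad \rord \rGrad' \rord \sGrad'$ yields $\rGrad \rord \sGrad$, while the field-value judgments are reused verbatim. Crucially, the field annotations of $\val$ are fixed by its syntax and are untouched by subsumption; only the intrinsic grade witness $\sGrad$ is carried along.

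The argument is essentially bookkeeping, so the one point I would flag as needing care is the treatment of the existentially quantified witness $\sGrad$ in the \refToRule{t-sub} step: one must retain the witness supplied by the induction hypothesis rather than the weakened type grade $\rGrad$, and recover $\rGrad \rord \sGrad$ by composing the subtyping constraint $\rGrad \rord \rGrad'$ with the inductive constraint $\rGrad' \rord \sGrad'$. The three ingredients making this work are transitivity of class subtyping, of the context order $\ROrdCtx$, and of the grade order $\rord$, all immediate from \cref{def:gr-alg} and the definitions of $\ROrdCtx$ and of graded subtyping.
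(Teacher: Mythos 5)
Your proof is correct and is exactly the standard argument the paper intends (the paper states this lemma without proof, as one of its ``standard lemmas''): induction on the typing derivation, observing that only \refToRule{t-new} and \refToRule{t-sub} can type a value, reading the claim off directly with witness $\sGrad=\rGrad$ in the base case, and propagating it through subsumption by transitivity of $\leq$, $\ROrdCtx$, and $\rord$. Your treatment of the existential witness under \refToRule{t-sub} --- retaining $\sGrad'$ from the induction hypothesis and composing $\rGrad\rord\rGrad'\rord\sGrad'$, while noting that the value's field annotations are syntactically untouched by subsumption --- is precisely the point that makes the statement go through, and matches how the lemma is used in the \refToRule{t-field-access} case of \cref{thm:res-progress-gen}.
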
 

\begin{lemma}[Renaming]\label{lem:subst}
If $\IsWFExpA{\Gamma,\VarTypeCoeffect\x\C\rGrad}\e\T$ and $\y\notin\dom\Gamma$, then $\IsWFExpA{\Gamma,\VarTypeCoeffect\y\C\rGrad}{\Subst\e\y\x}{\T}$. 
\end{lemma}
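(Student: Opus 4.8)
The plan is to proceed by induction on the derivation of $\IsWFExpA{\Gamma,\VarTypeCoeffect\x\C\rGrad}\e\T$ in the annotated type system of \cref{fig:type-aexp}. Since the context operations $\ROrdCtx$ and $\SumCtx$, as well as all the side conditions of the rules, depend only on grades and types and never on variable names, the whole argument amounts to relabelling $\x$ to $\y$ throughout the derivation; the hypothesis $\y\notin\dom\Gamma$ guarantees that this relabelling is injective and hence clash-free. If $\y=\x$ the statement is immediate, since both the substitution and the context are unchanged, so I assume $\y\neq\x$. The base case is \refToRule{t-var}: the context is the singleton $\VarTypeCoeffect\x\C\rGrad$, so $\e=\x$, $\T=\typeGrad\C\rGrad$ and $\rGrad\neq\rzero$; then $\Subst\e\y\x=\y$ and \refToRule{t-var} immediately gives $\IsWFExpA{\VarTypeCoeffect\y\C\rGrad}\y{\typeGrad\C\rGrad}$. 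The rule \refToRule{t-field-access} has a single premise typed in the very same context, so that case follows by applying the induction hypothesis to the premise (the freshness of $\y$ is inherited unchanged) and re-applying the rule to the renamed premise.

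The first delicate case is subsumption. For \refToRule{t-sub}, the premise $\IsWFExpA\Delta\e{\T''}$ satisfies $\Delta\ROrdCtx\Gamma,\VarTypeCoeffect\x\C\rGrad$ and $\T''\leq\T$, and I distinguish whether $\x\in\dom\Delta$. If so, then (the order of entries being immaterial) $\Delta=\Delta',\VarTypeCoeffect\x\C\sGrad$ with $\sGrad\rord\rGrad$ and $\Delta'\ROrdCtx\Gamma$; since $\dom{\Delta'}\subseteq\dom\Gamma$ the name $\y$ is fresh for $\Delta'$, so the induction hypothesis yields $\IsWFExpA{\Delta',\VarTypeCoeffect\y\C\sGrad}{\Subst\e\y\x}{\T''}$, whence $\Delta',\VarTypeCoeffect\y\C\sGrad\ROrdCtx\Gamma,\VarTypeCoeffect\y\C\rGrad$ and \refToRule{t-sub} closes the case. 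If instead $\x\notin\dom\Delta$, then $\Subst\e\y\x=\e$ by the free-variable fact stated below, and from $\Delta\ROrdCtx\Gamma$ together with $\y\notin\dom\Delta$ one obtains $\Delta\ROrdCtx\Gamma,\VarTypeCoeffect\y\C\rGrad$, so \refToRule{t-sub} applies directly.

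The second, and main, family of cases consists of the context-splitting rules \refToRule{t-new}, \refToRule{t-invk} and \refToRule{t-block}, whose conclusion context is a sum $\Gamma_1\SumCtx\cdots\SumCtx\Gamma_k$ equal to $\Gamma,\VarTypeCoeffect\x\C\rGrad$. Because the sum on contexts is the pointwise sum of coeffects, the entry for $\x$ splits as $\rGrad=\rGrad_1\rsum\cdots\rsum\rGrad_k$, where $\rGrad_j$ is the coeffect of $\x$ in $\Gamma_j$, taken to be $\rzero$ with $\x$ absent when $\x\notin\dom{\Gamma_j}$. For each premise whose context contains $\x$ I apply the induction hypothesis, with $\y$ fresh for the relevant sub-context since its domain is contained in $\dom\Gamma$, obtaining the $\y$-renamed premise; for each premise whose context omits $\x$, the free-variable fact ensures that $\x$ does not occur in the corresponding subterm, which is therefore left untouched. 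Re-assembling the renamed premises with the same rule produces the context $\Gamma,\VarTypeCoeffect\y\C\rGrad$ (the pointwise sum now placing $\rGrad$ on $\y$) and the expression $\Subst\e\y\x$. In \refToRule{t-block} the bound local variable sits in the second premise's context and is, by the usual freshness convention, chosen distinct from both $\x$ and $\y$, so it passes through the substitution unchanged.

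The one auxiliary fact used above is that a well-typed annotated expression only mentions variables recorded in its context: if $\IsWFExpA\Gamma\e\T$ then every free variable of $\e$ lies in $\dom\Gamma$; equivalently, $\x\notin\dom\Gamma$ implies $\Subst\e\y\x=\e$. This is itself a routine induction on the typing derivation, using that \refToRule{t-var} records its variable in the context and that \refToRule{t-sub} only enlarges the context, since the order $\ROrdCtx$ permits extra, unused variables on the larger side. I expect the main obstacle to be purely bookkeeping, namely correctly splitting the coeffect $\rGrad$ of $\x$ across the summed premise contexts in the multi-premise rules and tracking the freshness of $\y$ through every sub-derivation; no genuinely new idea is required.
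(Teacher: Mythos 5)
Your proof is correct: the paper states this renaming lemma without an explicit proof, treating it as routine, and your induction on the typing derivation --- with the case split in \refToRule{t-sub} on whether $\x$ occurs in the smaller context, the pointwise decomposition of $\rGrad$ across the summed contexts in \refToRule{t-new}, \refToRule{t-invk}, \refToRule{t-block}, and the auxiliary free-variable fact --- is exactly the standard argument intended. The freshness bookkeeping ($\dom{\Gamma_j}\subseteq\dom{\Gamma}\cup\{\x\}$, so $\y$ stays fresh in every sub-derivation) and the binder convention in \refToRule{t-block} are handled appropriately, so nothing is missing.
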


\begin{lemma}\label{lem:sum-comma}
If $\dom\Theta\cap\dom\Delta = \emptyset$ then 
$(\Gamma,\Theta)\SumCtx\Delta = (\Gamma\SumCtx\Delta),\Theta$. 
\end{lemma}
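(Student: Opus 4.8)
The plan is to argue by induction on the context $\Theta$, exploiting the fact that the context sum $\SumCtx$ is defined by recursion on its first argument and that, in a context, the order of bindings is immaterial. Throughout I treat the comma as concatenation of contexts with disjoint domains, so that for the left-hand side $\Gamma,\Theta$ to be well-formed we already have $\dom\Gamma\cap\dom\Theta=\emptyset$. For the base case $\Theta=\EmptyCtx$ both sides collapse to $\Gamma\SumCtx\Delta$, since $(\Gamma,\EmptyCtx)\SumCtx\Delta=\Gamma\SumCtx\Delta$ and $(\Gamma\SumCtx\Delta),\EmptyCtx=\Gamma\SumCtx\Delta$.

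For the inductive step I write $\Theta=\VarTypeCoeffect\x\C\rGrad,\Theta'$ and first extract from the hypothesis $\dom\Theta\cap\dom\Delta=\emptyset$ the two facts I will use: $\x\notin\dom\Delta$ and $\dom{\Theta'}\cap\dom\Delta=\emptyset$. Using order-immateriality I rearrange the first operand so that $\x$ is its head, i.e. $\Gamma,\Theta=\VarTypeCoeffect\x\C\rGrad,(\Gamma,\Theta')$. Since $\x\notin\dom\Delta$, the third clause in the definition of $\SumCtx$ applies and gives $(\Gamma,\Theta)\SumCtx\Delta=\VarTypeCoeffect\x\C\rGrad,((\Gamma,\Theta')\SumCtx\Delta)$. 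The induction hypothesis, applicable because $\dom{\Theta'}\cap\dom\Delta=\emptyset$, rewrites the inner sum as $(\Gamma\SumCtx\Delta),\Theta'$, and collecting $\x$ back with $\Theta'$ yields $(\Gamma\SumCtx\Delta),(\VarTypeCoeffect\x\C\rGrad,\Theta')=(\Gamma\SumCtx\Delta),\Theta$, which is exactly the right-hand side.

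The only point requiring care — and the closest thing to an obstacle — is the justification of the third sum clause at each step: one must check that every variable carried along from $\Theta$ is genuinely absent from $\dom\Delta$, which is precisely what the disjointness hypothesis guarantees, and that the residual comma $(\Gamma\SumCtx\Delta),\Theta$ is well-formed, i.e. $\dom(\Gamma\SumCtx\Delta)\cap\dom\Theta=\emptyset$; this follows because $\dom(\Gamma\SumCtx\Delta)=\dom\Gamma\cup\dom\Delta$ and both summands are disjoint from $\dom\Theta$. Equivalently, I could bypass the induction altogether and appeal to the pointwise description of these operations recalled just after their definition: reading contexts as finite-support maps into the grade algebra, both sides send each $\x\in\dom\Theta$ to $\Theta(\x)$ (since $\Delta(\x)=\rzero$ there) and every other $\x$ to $\Gamma(\x)\rsum\Delta(\x)$, so they coincide as maps and agree on types over shared variables. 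Either route is routine; I would present the inductive argument since it follows the syntactic definition of $\SumCtx$ verbatim.
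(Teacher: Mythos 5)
Your proof is correct, and it follows exactly the intended routine argument: the paper states \cref{lem:sum-comma} without proof, treating it as standard, and your induction on $\Theta$ tracks the recursive definition of $\SumCtx$ on its first argument (after using order-immateriality to expose $\Theta$'s bindings), with the disjointness hypothesis licensing the third clause at each step, just as the definition demands. Your alternative pointwise argument is equally legitimate, since the paper itself remarks that the context operations are liftings of the pointwise operations on coeffect contexts; nothing further is needed.
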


 \cref{thm:res-progress} is proved as a special case of the following more general result,  which makes explicit the invariant needed to carry out the induction. 
Indeed, 
by looking at the reduction rules, we can see that computational ones either add new variables to the environment or reduce the grade of a variable of some amount that depends on the grade of the reduction. 
In the latter case, the amount can be arbitrarily chosen with the only restrictions that it is non zero and at least the grade of the reduction. 
However, to prove progress, we not only have to prove that a reduction can be done, but, if the reduction is done in a context, say evaluating 
the argument of a constructor, then after such reduction we still have enough resources to go on with the reduction, that is, to evaluate the 
rest of the context (the other arguments of the constructor). 
This means that the resulting environment has enough resources to  type the whole context (the constructor call). 
For this reason, in the statement of the theorem that follows, 
we add to the assumption of \cref{thm:res-progress} a typing context $\Theta$ that would contain the information on the amount of resources that we want to preserve during the reduction (see \cref{thm:res-progress-gen:4}  of the theorem). 
This allows us to choose the appropriate grade to be kept  when reducing a variable and to reconstruct a typing derivation when using contextual reduction rules. 
For the expression at the top level, as we see from the proof of \cref{thm:res-progress}, $\Theta$ is simply $\rzero$ for all variables in the typing context in which the expression is typed.
 
\begin{theorem}\label{thm:res-progress-gen} 
If $\IsWFExpA{\Delta}{\e}{\typeGrad\C\rGrad}$ and $\IsWFEnvA{\Gamma}{\stack}$ and $\Delta\SumCtx\Theta\ROrdCtx\Gamma$ and $\dom{\Delta} \subseteq \dom{\Theta}$ and $\e$ is not a value,
then there are $\e'$, $\stack'$, $\Delta'$, $\Gamma'$ and $\Theta'$  such that 
\begin{enumerate}
\item\label{thm:res-progress-gen:1} $\reducestack{\e}{\stack}{\rGrad}{\e'}{\stack'}$ and 
\item\label{thm:res-progress-gen:2} $\IsWFExpA{\Delta'}{\e'}{\typeGrad\C\rGrad}$ with $\Delta'\ROrdCtx\Delta,\Theta'$ and 
\item\label{thm:res-progress-gen:3} $\IsWFEnvA{\Gamma'}{\stack'}$ with $\Gamma'\ROrdCtx\Gamma,\Theta'$ and 
\item\label{thm:res-progress-gen:4} $\Delta'\SumCtx \Theta \ROrdCtx \Gamma'$. 
\end{enumerate}
\end{theorem}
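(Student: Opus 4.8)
The plan is to prove \cref{thm:res-progress-gen} by induction on the derivation of $\IsWFExpA{\Delta}{\e}{\typeGrad\C\rGrad}$, with a case analysis on its last rule in \cref{fig:type-aexp}. The reserve context $\Theta$ is the load-bearing part of the statement: it records the grades that must survive the step, so that when a subexpression is reduced inside an evaluation context the residual environment still types the surrounding context. Hence in each case the real work is to exhibit the step and then discharge the three typing conditions together with the invariant $\Delta'\SumCtx\Theta\ROrdCtx\Gamma'$, which is precisely what is threaded through the induction. Throughout I would use freely that $\SumCtx$ and $\MulCtx$ are monotone for $\ROrdCtx$, that $\SumCtx$ is commutative and associative with unit $\EmptyCtx$, and that $\rzero$ is least, whence $\Theta\ROrdCtx\Delta\SumCtx\Theta$; all of these lift pointwise from the grade algebra as in \cref{sect:algebraic}. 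Rule \refToRule{t-sub} is handled separately since it lowers the grade: applying the induction hypothesis to the premise (typing $\e$ at $\typeGrad{\C_0}{\rGrad_0}$ with $\C_0\leq\C$, $\rGrad\rord\rGrad_0$, $\Delta_0\ROrdCtx\Delta$) yields a step of grade $\rGrad_0$, which \cref{prop:red-gr} turns into one of grade $\rGrad$ with the same $\e',\stack'$; re-applying \refToRule{t-sub} restores the type, and the other conditions transfer by $\Delta_0\ROrdCtx\Delta$ and monotonicity of $\SumCtx$.

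The computational (redex) cases form the base. The decisive one is \refToRule{t-var}, where $\Delta=\VarTypeCoeffect\x\C\rGrad$ with $\rGrad\neq\rzero$; since $\x\in\dom\Delta\subseteq\dom\Theta$, the invariant gives $\rGrad\rsum\tGrad\rord\sGrad$, writing $\tGrad=\Theta(\x)$ and $\sGrad=\Gamma(\x)$ for the grade of $\x$ in $\stack$ (extracted by \refItem{lem:env-inv}{1}). I would fire \refToRule{var} consuming exactly $\rGrad'=\rGrad$ and leaving residual grade $\sGrad'=\tGrad$: the side conditions $\rGrad\rord\rGrad\neq\rzero$ and $\tGrad\rsum\rGrad\rord\sGrad$ hold, and preserving precisely the reserve is what closes the invariant, with $\Delta'=\Theta'=\EmptyCtx$ and $\Gamma'$ being $\Gamma$ with $\x$ regraded to $\tGrad$. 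For \refToRule{t-field-access} on a value receiver I would use \cref{lem:cf} to expose the constructor and its fields, select the $i$-th field value, and retype it in the empty context via \cref{lem:val_empty} and \refToRule{t-sub}, the side condition of \refToRule{field-access} holding by reflexivity. For the redexes of \refToRule{t-block} and \refToRule{t-invk} the environment grows: I would apply \refToRule{block}, resp.\ \refToRule{invk}, adding the fresh variable(s) with the value and grade of the initializer, resp.\ of the receiver and arguments, type the enlarged environment with \refItem{lem:env-inv}{2}, and type the substituted body with the renaming \cref{lem:subst} and the method-typing condition \refToRule{t-meth} (for \refToRule{t-invk}, combined with \cref{lem:cf} and the inheritance coherence to pass from the dynamic to the static class). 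Here $\Theta'$ is set to the bindings of the freshly introduced variables, and \cref{lem:sum-comma} commutes them past $\SumCtx$.

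The contextual cases follow one schema. As $\e$ is not a value, I would select the leftmost non-value subexpression prescribed by the order of \cref{fig:instr-red} (receiver before arguments, initializer before body), say the one typed $\typeGrad{\C_j}{\sGrad_j}$ with context $\Delta_j$, where the rule splits $\Delta$ as the sum of the subterms' contexts. Each contextual rule reduces that subterm with exactly the grade in its type ($\rGrad_0$ for a receiver, $\rGrad\rmul\rGrad_i$ for a constructor argument, $\rGrad_i$ for a method argument, $\sGrad$ for a block initializer), which is what the induction hypothesis demands. I would invoke it with the tailored reserve $\Theta_j=\big(\SumCtx_{k\neq j}\Delta_k\big)\SumCtx\Theta$, so that $\Delta_j\SumCtx\Theta_j=\Delta\SumCtx\Theta\ROrdCtx\Gamma$ and $\dom{\Delta_j}\subseteq\dom\Theta\subseteq\dom{\Theta_j}$, reserving exactly the resources needed to reduce the siblings afterwards. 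Reassembling the construct with $\Delta'=\Delta_j'\SumCtx\big(\SumCtx_{k\neq j}\Delta_k\big)$ and $\Theta'=\Theta_j'$, and retyping the already-evaluated arguments in the empty context by \cref{lem:val_empty}, the two typing conditions and the invariant follow from the outputs of the induction hypothesis by monotonicity and associativity of $\SumCtx$. Finally, \cref{thm:res-progress} is recovered by instantiating $\Theta$ with the context mapping every variable of $\Delta$ to $\rzero$.

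I expect the main obstacle to be exactly this context algebra: designing $\Theta_j$ in the congruence cases and re-establishing the invariant $\Delta'\SumCtx\Theta\ROrdCtx\Gamma'$ after each step. The difficulty lies in keeping $\Delta'$, $\Gamma'$, and $\Theta'$ synchronized while the environment simultaneously loses grade (rule \refToRule{var}) and gains fresh variables (rules \refToRule{block}, \refToRule{invk}), and in checking the disjointness hypotheses of \cref{lem:sum-comma} for those fresh variables; once monotonicity and the leastness of $\rzero$ are in hand, the grade-level computations themselves are routine.
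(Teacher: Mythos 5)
Your proposal is correct and follows essentially the same route as the paper's proof: induction on the typing derivation, with \refToRule{t-sub} discharged via \cref{prop:red-gr}, rule \refToRule{var} fired with consumption exactly $\rGrad$ and residual grade equal to the reserve $\Theta(\x)$, the tailored reserve $\Theta\SumCtx\big(\SumCtx_{k\neq j}\Delta_k\big)$ in the contextual cases, $\Theta'$ set to the fresh bindings in the \refToRule{invk}/\refToRule{block} cases with \cref{lem:sum-comma} handling disjointness, and \cref{thm:res-progress} recovered by taking $\Theta$ everywhere zero. The only deviations (e.g., retyping already-evaluated siblings in the empty context via \cref{lem:val_empty}, where the paper keeps their original contexts $\Delta_j$) are inessential and equally valid.
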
 
 
\begin{proof}  
The proof is by induction on typing rules. 
\begin{description}

\item [\refToRule{t-var}] 
By hypothesis, we know that 
$\Delta = \VarTypeCoeffect{\x}{\C}{\rGrad}$ and, with $\rGrad\ne\rzero$. 
Since $\dom\Delta\subseteq\dom\Theta$, we have $\Theta = \Theta_1,\VarTypeCoeffect{\x}{\C}{\sGrad'}$ and, 
since $\Delta\SumCtx\Theta\ROrdCtx\Gamma$, we have $\Gamma = \Gamma_1,\VarTypeCoeffect\x\C\sGrad$ with $\rGrad\rsum\sGrad'\rord\sGrad$.
Moreover, because $\IsWFEnvA{\Gamma}{\stack}$, by \refItem{lem:env-inv}{1}, we have that 
$\stack = \stack',\x\mapsto\Pair\val\sGrad$, $\IsWFEnvA{\Gamma_1}{\stack'}$ and $\IsWFExpA{}{\val}{\typeGrad\C\sGrad}$. 
Then, by rule \refToRule{var}, we get 
$\reducestack{\x}{\stack}{\rGrad}{\val}{\stack',\x\mapsto\Pair{\val}{\sGrad'}}$. 
Since $\sGrad' \rord \rGrad \rsum \sGrad' \rord \sGrad$, by rule \refToRule{t-sub}, we get 
$\IsWFExpA{}{\val}{\typeGrad\D{\sGrad'}}$ and by \refItem{lem:env-inv}{2} we conclude 
$\IsWFEnvA{\Gamma_1,\VarTypeCoeffect\x\C{\sGrad'}}{\stack',\x\mapsto\Pair\val{\sGrad'}}$. 
Since $\rGrad\rord\rGrad\rsum\sGrad'\rord\sGrad$, again by rule \refToRule{sub}, we get 
$\IsWFExpA{}{\val}{\typeGrad\C\rGrad}$.
Now, let us set 
$\Delta' = \Theta' = \emptyset$ and $\Gamma' = \Gamma_1,\VarTypeCoeffect\x\C{\sGrad'}$. 
We immediately get $\Delta'\ROrdCtx\Delta,\Theta'$ and $\Gamma'\ROrdCtx\Gamma,\Theta'$. 
Furthermore, we get 
$\Delta'\SumCtx\Theta\ROrdCtx\Gamma'$ as 
$\Delta' \SumCtx\Theta = \Theta = \Theta_1,\VarTypeCoeffect\x\C{\sGrad'}$  and, 
from $\Theta\ROrdCtx\Delta\SumCtx\Theta \ROrdCtx\Gamma = \Gamma_1,\VarTypeCoeffect\x\C\sGrad$, we get 
$\Theta_1\ROrdCtx\Gamma_1$, therefore 
$\Theta_1,\VarTypeCoeffect\x\C{\sGrad'} \ROrdCtx \Gamma_1,\VarTypeCoeffect\x\C{\sGrad'} = \Gamma'$, as needed. 

\item [\refToRule{t-sub}] 
By hypothesis we know that 
$\IsWFExpA{\Delta_1}{\e}{\typeGrad{\D}{\sGrad}}$ with $\Delta_1\ROrdCtx\Delta$ and $\typeGrad\D\sGrad\leq \typeGrad\C\rGrad$, which implies $\D\leq\C $ and $\rGrad\rord\sGrad$. 
We distinguish two cases. 
\begin{itemize}
\item If $\e$ is a value then we have the thesis
\item Otherwise, notice that $\Delta_1\SumCtx\Theta \ROrdCtx\Delta\Theta\ROrdCtx \Gamma$ holds by monotonicity of $\SumCtx$; 
then, by induction hypothesis, we have
\begin{enumerate} 
\item \label{itm:sub:1} $\reducestack{\e}{\stack}{\sGrad}{\e'}{\stack'}$ and 
\item \label{itm:sub:2} $\IsWFExpA{\Delta'}{\e'}{\typeGrad\D\sGrad}$ with $\Delta'\ROrdCtx\Delta_1,\Theta'$ and 
\item \label{itm:sub:3} $\IsWFEnvA{\Gamma'}{\stack'}$ with $\Gamma'\ROrdCtx\Gamma,\Theta'$ and 
\item \label{itm:sub:4} $\Delta'\SumCtx \Theta\ROrdCtx\Gamma'$.
\end{enumerate}
By \cref{itm:sub:1,prop:red-gr}, since $\rGrad\rord\sGrad$,  we get 
$\reducestack{\e}{\stack}{\rGrad}{\e'}{\stack'}$. 
Since $\typeGrad\D\sGrad\leq\typeGrad\C\rGrad$, from \cref{itm:sub:2}, by rule \refToRule{t-sub}, we get 
$\IsWFExpA{\Delta'}{\e'}{\typeGrad\C\rGrad}$ and, 
Since $\Delta_1\ROrdCtx\Delta$, we get 
$\Delta'\ROrdCtx \Delta_1,\Theta'\ROrdCtx\Delta,\Theta'$, proving the thesis. 
\end{itemize}

\item [\refToRule{t-field-access}]
By hypothesis we know that 
$\IsWFExpA{\Delta}{\FieldAccess{\annotated{\e}{\rGrad}}{\f_i}}{\typeGrad{\C_i}{\rGrad\rmul\rGrad_i}}$, 
$\IsWFExpA{\Delta}{\e_1}{\typeGrad\C\rGrad}$ 
and $\fields{\C}=\Field{\typeGrad{\C_1}{\rGrad_1}}{\f_1} \ldots \Field{\typeGrad{\C_n}{\rGrad_n}}{\f_n}$. 
We distinguish two cases. 
\begin{itemize}
\item If $\e_1$ is a value, then, by \cref{lem:cf}, we have 
$\e_1 = \ConstrCallATuple{\D}{\val}{\sGrad}{m}$ and 
$\D\leq\C$, $\fields{\D} = \Field{\typeGrad{\D_1}{\sGrad_1}}{\f'_1}\ldots\Field{\typeGrad{\D_m}{\sGrad_m}}{\f'_m}$ and 
$\Delta_1 \SumCtx\ldots\SumCtx\Delta_m \ROrdCtx\Delta$ and 
$\IsWFExpA{\Delta_j}{\val_j}{\typeGrad{\D_j}{\sGrad\rmul\sGrad_j}}$, for all $j \in 1..m$, with $\rGrad\rord\sGrad$. 
By coherence conditions on the class table, 
we know that $n \leq m$  and, for all $j \in 1..n$, $\D_j = \C_j$ and $\sGrad_j = \rGrad_j$ and $\f'_j = \f_j$. 
Hence, since $i \in 1..n$, we have $i \in 1..m$ and so, by rule \refToRule{field-access}, 
we get $\reducestack{\FieldAccess{\annotated{\e_1}{\rGrad}}{\f_i}}{\stack}{\rGrad\rmul\rGrad_i}{\val_i}{\stack}$. 
Since $\rGrad\rord\sGrad$, we have $\rGrad\rmul\rGrad_i\rord\sGrad\rmul\rGrad_i = \sGrad\rmul\sGrad_i$ and $\D_i = \C_i$, 
hence, by rule \refToRule{t-sub}, we derive $\IsWFExpA{\Delta_i}{\val_i}{\typeGrad{\C_i}{\rGrad\rmul\rGrad_i}}$. 
Let us set $\Delta' = \Delta_i$, $\Gamma' = \Gamma$ and $\Theta' = \emptyset$. 
Then, the thesis trivially follows as $\Delta' = \Delta_i \ROrdCtx \sum_{j = 1}^m \Delta_j \ROrdCtx \Delta$. 
\item Otherwise, 
by induction hypothesis, 
we get 
\begin{enumerate} 
\item \label{itm:acc:1} $\reducestack{\e_1}{\stack}{\rGrad}{\e_1'}{\stack'}$ and 
\item \label{itm:acc:2} $\IsWFExpA{\Delta'}{\e_1'}{\typeGrad\C\rGrad}$ with $\Delta'\ROrdCtx\Delta,\Theta'$ and 
\item \label{itm:acc:3} $\IsWFEnvA{\Gamma'}{\stack'}$ with $\Gamma'\ROrdCtx\Gamma,\Theta'$ and 
\item \label{itm:acc:4} $\Delta'\SumCtx \Theta\ROrdCtx \Gamma'$. 
\end{enumerate}
Let us set 
$\e' = \FieldAccess{\annotated{\e_1'}{\rGrad}}{\f_i}$. 
By \cref{itm:acc:1} and the hypothesis, using rule \refToRule{field-access-ctx}, we derive $\reducestack{\e}{\stack}{\rGrad\rmul\rGrad_i}{\e'}{\stack'}$. 
By \cref{itm:acc:2} and the hypothesis, using rule \refToRule{t-field-access}, we get $\IsWFExpA{\Delta'}{\e'}{\typeGrad{\C}{\rGrad\rmul\rGrad_i}}$. 
Finally, by \cref{itm:acc:3,itm:acc:4}, we get the thesis. 
\end{itemize}

\item [\refToRule{t-new}] 
By hypothesis we have $\Delta = \Delta_1\SumCtx\ldots\SumCtx\Delta_n$ and 
$\IsWFExpA{\Delta_i}{\e_i}{\typeGrad{\C_i}{\rGrad\rmul\rGrad_i}}$, for all $i \in 1..n$. 
We distinguish two cases. 
\begin{itemize}
\item If $\e_i$ is a value for all $i \in 1..n$, then 
$\e$ is a value as well and this proves the thesis. 
\item Otherwise, there is an $i \in 1..n$ such that $\e_i$ is not a value while $\e_j$ is a value, for all $j \in 1..(i-1)$.
Let us set $\widehat\Theta = \Theta \SumCtx \sum_{j = 1}^{i-1} \Delta_j \SumCtx \sum_{j = i+1}^n \Delta_j$.
Since $\dom{\Delta_i}\subseteq\dom{\Delta}\subseteq\dom\Theta$, we have $\dom{\Delta_i} \subseteq \dom{\widehat\Theta}$ and, by construction, 
we have $\Delta_i\SumCtx\widehat\Theta = \Delta\SumCtx\Theta \ROrdCtx\Gamma$. 
Then, by induction hypothesis, 
we get 
\begin{enumerate} 
\item \label{itm:new:1} $\reducestack{\e_i}{\stack}{\rGrad\rmul\rGrad_i}{\e_i'}{\stack'}$
\item \label{itm:new:2} $\IsWFExpA{\Delta_i'}{\e_i'}{\typeGrad\C\rGrad\rmul\rGrad_i}$ with $\Delta_i'\ROrdCtx\Delta_i,\Theta'$ and 
\item \label{itm:new:3} $\IsWFEnvA{\Gamma'}{\stack'}$ with $\Gamma'\ROrdCtx\Gamma,\Theta'$ and 
\item \label{itm:new:4} $\Delta_i'\SumCtx \widehat\Theta\ROrdCtx \Gamma'$. 
\end{enumerate}
Let us set 
$\e' = \ConstrCall{\C}{\annotated{\val_1}{\rGrad_1},\dots,\annotated{\val_{i-1}}{\rGrad_{i-1}},\annotated{\e'_i}{\rGrad_i},\ldots,\annotated{\e_n}{\rGrad_n}}$ and 
$\Delta' = \sum_{j = 1}^{i-1} \Delta_j \SumCtx\Delta'_i\SumCtx\sum_{j = i+1}^n \Delta_j$. 
By \cref{itm:new:1} and the hypothesis, using rule \refToRule{new-ctx}, we derive $\reducestack{\e}{\stack}{\rGrad}{\e'}{\stack'}$. 
By \cref{itm:new:2} and the hypothesis, using rule \refToRule{t-new}, we get $\IsWFExpA{\Delta'}{\e'}{\typeGrad\C\rGrad}$. 
By \cref{itm:new:3}, we have $\dom{\Theta'}\cap\dom\Gamma = \emptyset$ and, since $\Delta_j \ROrdCtx \Delta\SumCtx\Theta\ROrdCtx\Gamma$ for all $j \in 1..n$, we get $\dom{\Delta_j}\cap\dom{\Theta'} = \emptyset$. 
Hence, by monotonicity of $\SumCtx$ and \cref{lem:sum-comma}, we get 
\begin{align*} 
\Delta'
  &\ROrdCtx  \sum_{j = 1}^{i-1} \Delta_j \SumCtx (\Delta_i,\Theta') \SumCtx \sum_{j = i+1}^n \Delta_j 
   = \left(\sum_{j = 1}^n \Delta_j\right),\Theta' 
   = \Delta,\Theta'
\end{align*} 
Finally, since $\Delta'\SumCtx\Theta = \Delta'_i\SumCtx\widehat\Theta$, by \cref{itm:new:4} we get the thesis. 
\end{itemize}
\item [\refToRule{t-invk}] 
By hypothesis we know that 
$\MethCall{\annotated{\e_0}{\rGrad_0}}{\m}{\annotated{\e_1}{\rGrad_1},\ldots,\annotated{\e_n}{\rGrad_n}}$, 
$\Delta = \Delta_0\SumCtx\ldots\SumCtx\Delta_n$, 
$\mtype{\C_0}{\m} = \funType{\rGrad_0,\typeGrad{\C_1}{\rGrad_1} \ldots \typeGrad{\C_n}{\rGrad_n}}{\typeGrad\C\rGrad}$ and 
$\IsWFExpA{\Delta_i}{\e_i}{\typeGrad {\C_i}{\rGrad_i}}$, for all $i \in 0..n$. 
Then, we distinguish two cases. 
\begin{itemize}
\item If, for all $i \in 0..n$, $\e_i$ is a value, say $\e_i  \val_i$, 
by \cref{lem:cf}, we have $\val_0 = \ConstrCall{\D}{\_}$ with $\D\leq\C_0$ and, 
by coherence conditions on the class table, we have 
$\mbody{\D}{\m}=\Pair{\x_1\dots\x_n}{\e'}$. 
Then, by rule \refToRule{invk},  we get 
$\reducestack{\e}{\stack}{\rGrad}{\SubstDots{\Subst{\e'}{\y_0}{\this}}{\y_1}{\x_1}{\y_n}{\x_n}}{\stack'}$ 
with $\y_0,\ldots,\y_n\notin\dom\stack$ and 
$\stack' = \stack,\y_0\mapsto\Pair{\val_0}{\rGrad_0},\ldots,\y_n\mapsto\Pair{\val_n}{\rGrad_n}$. 
By \cref{lem:val_empty}, we get $\IsWFExpA{}{\val_i}{\typeGrad{\C_i}{\rGrad_i}}$, for all $i \in 0..n$, hence, by \refItem{lem:env-inv}{2}, we get 
$\IsWFEnvA{\Gamma, \VarTypeCoeffect{\y_0}{\C_0}{\rGrad_0}, \ldots, \VarTypeCoeffect{\y_n}{\C_n}{\rGrad_n}}{\stack'}$. 
By condition \refToRule{t-meth} and rule \refToRule{t-sub}, we know that 
$\IsWFExpA{\VarTypeCoeffect{\this}{\C_0}{\rGrad_0}, \VarTypeCoeffect{\x_1}{\C_1}{\rGrad_1},\ldots,\VarTypeCoeffect{\x_n}{\C_n}{\rGrad_n}}{\e'}{\typeGrad\C\rGrad}$. 
Let us set 
$\Delta' = \Theta' = \VarTypeCoeffect{\y_0}{\C_0}{\rGrad_0},\ldots,\VarTypeCoeffect{\y_n}{\C_n}{\rGrad_n}$,  and $\Gamma' = \Gamma,\Theta'$, 
hence, we immediately have 
$\Gamma'\ROrdCtx\Gamma,\Theta'$ and $\Delta' = \Theta'\ROrdCtx\Delta,\Theta'$. 
By \cref{lem:subst}, we have $\IsWFExpA{\Delta'}{\SubstDots{\Subst{\e'}{\y_0}{\this}}{\y_1}{\x_1}{\y_n}{\x_n}}{\typeGrad\C\rGrad}$. 
Finally,
since $\Theta\ROrdCtx\Delta\SumCtx\Theta \ROrdCtx\Gamma$ and $\y0,\ldots,\y_n\notin\dom\Gamma$ imply $\y_0,\ldots,\y_n\notin\dom\Theta$, 
we get 
$\Delta'\SumCtx\Theta = \Theta,\Delta'\ROrdCtx (\Delta\SumCtx\Theta),\Delta' \ROrdCtx \Gamma,\Delta' = \Gamma'$. 
\item Otherwise, there is an $i \in 0..n$ such that $\e_i$ is not a value while $\e_j$ is a value, for all $j \in 0..(i-1)$.
Let us set $\widehat\Delta = \sum_{j=0}^{i-1} \Delta_j \SumCtx \sum_{j = i+1}^n \Delta_j$ and $\widehat\Theta = \Theta\SumCtx\widehat\Delta$. 
so $\Delta_i\SumCtx \widehat\Theta = \Delta\SumCtx\Theta \ROrdCtx \Gamma$.
By induction hypothesis, we have 
\begin{enumerate}
\item \label{itm:invk:1} $\reducestack{\e_i}{\stack}{\rGrad_i}{\e_i'}{\stack'}$
\item \label{itm:invk:2} $\IsWFExpA{\Delta_i'}{\e_i'}{\typeGrad{\C_i}{\rGrad_i}}$ with $\Delta_i'\ROrdCtx\Delta_i,\Theta'$ and 
\item \label{itm:invk:3} $\IsWFEnvA{\Gamma'}{\stack'}$ with $\Gamma'\ROrdCtx\Gamma,\Theta'$ and 
\item \label{itm:invk:4} $\Delta_i'\SumCtx \widehat\Theta \ROrdCtx \Gamma'$. 
\end{enumerate} 
Let us set $\e' = \MethCall{\annotated{\e_0}{\rGrad_0}}{\m}{\ldots,\annotated{\e_{i-1}}{\rGrad_{i-1}},\annotated{\e'_i}{\rGrad_i}, \annotated{\e_{i+1}}{\rGrad_{i+1}},\ldots,\annotated{\e_n}{\rGrad_n}}$ and 
$\Delta' = \Delta'_i \SumCtx \widehat\Delta$. 
By \cref{itm:invk:1} and the hypothesis, using either rule \refToRule{invk-rcv-ctx} or \refToRule{invk-args-ctx}, depending on whether $i = 0$ or not, 
we derive $\reducestack{\e}{\stack}{\rGrad}{\e'}{\stack'}$. 
By \cref{itm:invk:2} and the hypothesis, using rule \refToRule{t-invk}, we have $\IsWFExpA{\Delta'}{\e'}{\typeGrad\C\rGrad}$. 
By \cref{itm:invk:3}, we have $\dom{\Theta'}\cap\dom\Gamma = \emptyset$ and, 
since $\widehat\Delta \ROrdCtx \Delta\SumCtx\Theta\ROrdCtx\Gamma$, we get $\dom{\widehat\Delta}\cap\dom{\Theta'} = \emptyset$. 
Hence, by \cref{itm:invk:2}, monotonicity of $\SumCtx$ and \cref{lem:sum-comma}, we get 
$\Delta' = \Delta'_i\SumCtx\widehat\Delta  \ROrdCtx (\Delta_i,\Theta')\SumCtx\widehat\Delta  = (\Delta_i\SumCtx\widehat\Delta),\Theta' = \Delta,\Theta'$. 
Finally, since 
$\Delta' \SumCtx\Theta = \Delta'_i \SumCtx \widehat\Delta \SumCtx\Theta = \Delta'_i\SumCtx\widehat\Theta$, by \cref{itm:invk:4} we get the thesis. 
\end{itemize}
\item [\refToRule{t-block}] 
By hypothesis we have $\Delta=\Delta_1\SumCtx\Delta_2$ and 
$\IsWFExpA{\Delta_1}{\e_1}{\typeGrad\D\sGrad}$ and 
$\IsWFExpA{\Delta_2,\VarTypeCoeffect\x\D\sGrad}{\e_2}{\typeGrad\C\rGrad}$. 
Then, we distinguish two cases. 
\begin{itemize}
\item If $\e_1 = \val$ is a value, then 
by rule \refToRule{block}, 
we have $\reducestack{\Block{\D}{\x}{\annotated{\val}{\sGrad}}{\e_2}}{\stack}{\rGrad}{\Subst{\e_2}{\y}{\x}}{\AddToMap{\stack}{\y}{\Pair{\val}{\sGrad}}}$ with $\y \notin \dom{\stack}$. 
By \cref{lem:val_empty}, we get $\IsWFExpA{}{\val}{\typeGrad\D\sGrad}$, hence, by \refItem{lem:env-inv}{2}, we get 
$\IsWFEnvA{\Gamma,\VarTypeCoeffect\y\D\sGrad}{\stack,\y\mapsto\Pair\val\sGrad}$. 
Notice that this implies $\y\notin\dom\Gamma$, thus $\y\notin\dom{\Delta_1}$ and $\y\notin\dom{\Delta_2}$. 
Then, let us set 
$\Theta' = \VarTypeCoeffect\y\D\sGrad$, $\Delta' = \Delta_2,\Theta'$ and $\Gamma' = \Gamma,\Theta'$, 
hence, we immediately have 
$\Gamma'\ROrdCtx\Gamma,\Theta'$ and $\Delta'=\Delta_2,\Theta'\ROrdCtx\Delta,\Theta'$, as $\Delta_2\ROrdCtx\Delta_1\SumCtx\Delta_2 = \Delta$. 
By \cref{lem:subst}, we have $\IsWFExpA{\Delta_2,\VarTypeCoeffect\y\D\sGrad}{\Subst{\e_2}\y\x}{\typeGrad\C\rGrad}$. 
Finally,
since $\Theta\ROrdCtx\Gamma$ and $\y\notin\dom\Gamma$ imply $\y\notin\dom\Theta$, 
by \cref{lem:sum-comma}, we get 
$\Delta'\SumCtx\Theta = (\Delta_2\SumCtx\Theta),\Theta' \ROrdCtx (\Delta\SumCtx\Theta),\Theta' \ROrdCtx \Gamma,\Theta' = \Gamma'$. 
\item Otherwise, 
let us set $\widehat\Theta = \Theta \SumCtx \Delta_2$.
Since $\dom{\Delta_1}\subseteq\dom{\Delta}\subseteq\dom\Theta$, we have $\dom{\Delta_1} \subseteq \dom{\widehat\Theta}$ and, by construction, 
we also have $\Delta_1\SumCtx\widehat\Theta = \Delta\SumCtx\Theta \ROrdCtx\Gamma$. 
By induction hypothesis, we have 
\begin{enumerate}
\item \label{itm:block:1} $\reducestack{\e_1}{\stack}{\sGrad}{\e_1'}{\stack'}$
\item \label{itm:block:2} $\IsWFExpA{\Delta_1'}{\e_1'}{\typeGrad\D\sGrad}$ with $\Delta_1'\ROrdCtx\Delta_1,\Theta'$ and 
\item \label{itm:block:3} $\IsWFEnvA{\Gamma'}{\stack'}$ with $\Gamma'\ROrdCtx\Gamma,\Theta'$ and 
\item \label{itm:block:4} $\Delta_1'\SumCtx \widehat\Theta \ROrdCtx \Gamma'$. 
\end{enumerate} 
Let us set $\e' = \Block{\D}{\x}{\annotated{\e_1'}{\sGrad}}{\e_2}$ and $\Delta' = \Delta'_1\SumCtx\Delta_2$. 
By \cref{itm:block:1} and the hypothesis, using rule \refToRule{block-ctx}, we derive $\reducestack{\e}{\stack}{\rGrad}{\e'}{\stack'}$. 
By \cref{itm:block:2} and the hypothesis, using rule \refToRule{t-block}, we have $\IsWFExpA{\Delta'}{\e'}{\typeGrad\C\rGrad}$. 
By \cref{itm:block:3}, we have $\dom{\Theta'}\cap\dom\Gamma = \emptyset$ and, since $\Delta_2 \ROrdCtx \Delta\SumCtx\Theta\ROrdCtx\Gamma$, we get $\dom{\Delta_2}\cap\dom{\Theta'} = \emptyset$. 
Hence, by monotonicity of $\SumCtx$ and \cref{lem:sum-comma}, we get 
$\Delta' \ROrdCtx (\Delta_1,\Theta')\SumCtx\Delta_2 = (\Delta_1\SumCtx\Delta_2),\Theta' = \Delta,\Theta'$. 
Finally, since 
$\Delta' \SumCtx\Theta = \Delta'_1\SumCtx\widehat\Theta$, by \cref{itm:block:4} we get the thesis. 
\end{itemize}

\qedhere 
\end{description}
\end{proof}

 We are now ready to prove \cref{thm:res-progress}. 
\begin{proof}[Proof of \cref{thm:res-progress}]
Inverting rule \refToRule{t-conf}, we get $\IsWFEnvA{\Gamma}{\stack}$ and $\IsWFExpA{\Delta}{\e}{\typeGrad\C\rGrad}$ with $\Delta\ROrdCtx\Gamma$. 
Applying \cref{thm:res-progress-gen} with $\Theta = \rzero\MulCtx\Delta$ we get 
\begin{enumerate}
\item\label{itm:res:1} $\reducestack{\e}{\stack}{\rGrad}{\e'}{\stack'}$ and 
\item\label{itm:res:2} $\IsWFExpA{\Delta'}{\e'}{\typeGrad\C\rGrad}$ with $\delta'\ROrdCtx\Delta,\Theta'$ and 
\item\label{itm:res:3} $\IsWFEnvA{\Gamma'}{\stack}$ with $\Gamma'\ROrdCtx \Gamma,\Theta'$ and 
\item\label{itm:res:4} $\Delta'\SumCtx\Theta\ROrdCtx\Gamma'$. 
\end{enumerate}
By \cref{itm:res:4}, we get $\Delta'\ROrdCtx\Gamma'$, hence by \refToRule{t-conf} and \cref{itm:res:2,itm:res:3}, we conclude  $\IsWFConfA{\Gamma'}{\e'}{\stack'}{\typeGrad\C\rGrad}$. 
Finally, by \cref{prop:step-env}, we have $\dom{\stack}\subseteq \dom{\stack'}$ and, 
since by rule \refToRule{t-env} and \cref{itm:res:2,itm:res:3} we know that $\dom\stack = \dom\Gamma$ and $\dom{\stack'} = \dom{\Gamma'}$, we get the thesis. 
\end{proof}

Using \cref{thm:res-progress} we can prove a resource-aware soundness  theorem. 
As already noticed, it is a form of soundness-may, that is, it states that a well-typed configuration either converges to a well-typed value or diverges. 
We write $\reducestackdiv\e\stack\rGrad$ when there exists an infinite sequence of steps in $\ev_\rGrad$ starting with $\ExpStack\e\stack$. 
Note that this judgement can be equivalently defined coinductively by the following rule: 
\[ \frac{ \reducestackdiv{\e'}{\stack'}{\rGrad} }{ \reducestackdiv{\e}{\stack}{\rGrad} }\ \reducestack\e\stack\rGrad{\e'}{\stack'} \] 
\begin{corollary}[Resource-aware soundness]\label{cor:res-sound}
If $\IsWFConfA{\Gamma}{\e}{\stack}{\typeGrad\C\rGrad}$ then 
either $\reducestackstar\e\stack\rGrad\val{\stack'}$ with $\IsWFConfA{\Gamma'}{\val}{\stack'}{\typeGrad\C\rGrad}$, 
or $\reducestackdiv\e\stack\rGrad$. 
\end{corollary}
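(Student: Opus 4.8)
The plan is to derive the corollary from \cref{thm:res-progress} by iterating it so as to build a single ``good'' reduction path, which matches the soundness-\emph{may} character of the statement. Note first that \cref{thm:res-progress} is simultaneously a progress and a preservation result: whenever $\IsWFConfA\Gamma\e\stack{\typeGrad\C\rGrad}$ and $\e$ is not a value, it exhibits \emph{a} step $\reducestack\e\stack\rGrad{\e'}{\stack'}$ together with a context $\Gamma'$ such that the reduct is again well typed at the \emph{same} type, $\IsWFConfA{\Gamma'}{\e'}{\stack'}{\typeGrad\C\rGrad}$. Hence well-typedness at $\typeGrad\C\rGrad$ is an invariant along the step selected by the theorem.

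Starting from $\ExpStack\e\stack$, I would construct a maximal sequence of well-typed configurations $\ExpStack{\e_0}{\stack_0} \ev_\rGrad \ExpStack{\e_1}{\stack_1} \ev_\rGrad \cdots$, with witnessing contexts $\Gamma_i$, as follows: set $\ExpStack{\e_0}{\stack_0} = \ExpStack\e\stack$; given a well-typed $\ExpStack{\e_i}{\stack_i}$, stop if $\e_i$ is a value, and otherwise take the step and the well-typed reduct provided by \cref{thm:res-progress} as $\ExpStack{\e_{i+1}}{\stack_{i+1}}$ and $\Gamma_{i+1}$. By construction every $\ExpStack{\e_i}{\stack_i}$ is well typed at $\typeGrad\C\rGrad$, and each transition is a genuine $\ev_\rGrad$-step. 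If the sequence is finite, it ends at some $\ExpStack{\e_n}{\stack_n}$ with $\e_n = \val$ a value; then $\reducestackstar\e\stack\rGrad\val{\stack_n}$ and $\IsWFConfA{\Gamma_n}{\val}{\stack_n}{\typeGrad\C\rGrad}$, which is exactly the first disjunct. If the sequence is infinite, it is an infinite chain of $\ev_\rGrad$-steps issuing from $\ExpStack\e\stack$, that is, a witness of $\reducestackdiv\e\stack\rGrad$: formally, the set of configurations occurring in the sequence is closed under the coinductive rule defining $\evdiv$ recalled just before the statement (each $\ExpStack{\e_i}{\stack_i}$ steps to $\ExpStack{\e_{i+1}}{\stack_{i+1}}$, still in the set), so the coinduction principle yields $\reducestackdiv\e\stack\rGrad$, the second disjunct.

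The delicate points are the following, and I expect the coinductive case to be the main one. First, it is essential to build the path from the step \emph{produced} by \cref{thm:res-progress} rather than to reason about an arbitrary reduction sequence: since reduction is nondeterministic (rule \refToRule{var} may consume a resource in several ways) and the theorem only guarantees that \emph{some} step preserves typing, preservation need not hold along every path, but soundness-may asks only for one well-behaved path. Second, the construction of the infinite sequence uses (dependent) choice to select, at each non-value configuration, one of the reducts guaranteed by the theorem; this is the only non-constructive ingredient. Finally, the passage from ``infinite chain of steps'' to $\reducestackdiv\e\stack\rGrad$ must be justified through the coinductive reading of $\evdiv$, which is exactly where one invokes coinduction rather than ordinary induction. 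The rest, namely the finite case and the bookkeeping of the contexts $\Gamma_i$, is routine given \cref{thm:res-progress}.
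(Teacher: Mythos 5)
Your proof is correct and rests on the same engine as the paper's---iterating \cref{thm:res-progress}---but the two arguments package the limit case differently, so a brief comparison is worthwhile. You build, by dependent choice, an explicit maximal reduction sequence and case-split on its finiteness; since the paper defines $\reducestackdiv\e\stack\rGrad$ primarily as the \emph{existence} of an infinite step sequence, your infinite case is in fact a direct witness of the definition, and the coinductive detour you sketch at the end is optional rather than essential. The paper instead introduces the auxiliary predicate of a \emph{well-converging} configuration and proves, by coinduction on the divergence judgment, that every well-typed configuration that is \emph{not} well-converging diverges; the crux there is the contradiction step showing that non-well-convergence is preserved along the step produced by \cref{thm:res-progress}, and this is precisely why the paper bakes the extra invariants $\Gamma'\ROrdCtx\Gamma,\Delta$ and $\dom\Gamma\subseteq\dom{\Gamma'}$ into the notion of well-convergence: they are needed to compose two converging segments when deriving the contradiction. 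Your route avoids that bookkeeping entirely and is the more elementary argument given the sequence-based definition of $\evdiv$, at the price of making the appeal to choice explicit (a cost the paper also pays, just implicitly, in the equivalence between the sequence-based and coinductive readings of $\evdiv$); the paper's route stays within the coinductive characterization, which is the form one would mechanize in a proof assistant where divergence is a coinductive predicate rather than an infinite sequence. Your observation that the path must be assembled from the steps \emph{produced} by the theorem, because preservation is only a may-property under the nondeterminism of rule \refToRule{var}, is exactly the right point and is shared by both proofs.
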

\begin{proof}
We say that a well-typed configuration $\IsWFConfA{\Gamma}{\e}{\stack}{\typeGrad\C\rGrad}$ is well-converging if 
there are $\val$, $\stack'$, $\Gamma'$ and $\Delta$  such that 
$\reducestackstar\e\stack\rGrad\val{\stack'}$ and 
$\IsWFConfA{\Gamma'}{\val}{\stack'}{\typeGrad\C\rGrad}$ with $\Gamma'\ROrdCtx\Gamma,\Delta$ and $\dom\Gamma\subseteq\dom{\Gamma'}$. 
The statement is equivalent to the following: 
if $\IsWFConfA{\Gamma}{\e}{\stack}{\typeGrad\C\rGrad}$ and it is not well-converging, then  $\reducestackdiv\e\stack\rGrad$. 
We prove this by coinduction. 
Let us consider a well-typed configuration $\IsWFConfA{\Gamma}{\e}{\stack}{\typeGrad\C\rGrad}$ which is not well-converging. 
Then, by \cref{thm:res-progress}, we get $\reducestackstar\e\stack\rGrad{\e'}{\stack'}$  where 
$\IsWFConfA{\Gamma'}{\e'}{\stack'}{\typeGrad\C\rGrad}$ with 
$\dom\Gamma\subseteq{\Gamma'}$ and $\Gamma'\ROrdCtx\Gamma,\Delta$. 
To conclude the proof by coinduction, we just have to check that $\IsWFConfA{\Gamma'}{\e'}{\stack'}{\typeGrad\C\rGrad}$ is not well-converging. 
Suppose it is well-converging, then $\reducestackstar{\e'}{\stack'}{\rGrad}{\val}{\stack''}$  where 
$\IsWFConfA{\Gamma''}{\val}{\stack''}{\typeGrad\C\rGrad}$ with $\dom{\Gamma'}\subseteq \dom{\Gamma''}$ and $\Gamma'' \ROrdCtx\Gamma',\Delta'$. 
Therefore, we have 
$\ExpStack\e\stack \ev_\rGrad \reducestackstar{\e'}{\stack'}{\rGrad}\val{\stack''}$ and 
$\dom{\Gamma}\subseteq\dom{\Gamma'}\subseteq\dom{\Gamma''}$ and 
$\Gamma'' \ROrdCtx\Gamma',\Delta' \ROrdCtx\Gamma,\Delta,\Delta'$, 
proving that $\IsWFConfA{\Gamma}{\e}{\stack}{\typeGrad\C\rGrad}$ is well-converging, which is a contradiction, as needed. 
\end{proof}
 
Finally, the following corollary states both subject-reduction for the standard semantics, that is, type and coeffects are preserved, and completeness of the instrumented semantics, that is, for well-typed configurations, every reduction step in the usual semantics can be simulated by an appropriate step in the instrumented semantics.
 
\begin{corollary}[Subject reduction]\label{cor:sr}
If $\IsWFConfI{\Gamma_1}{\e_1}{\stack_1}{\typeGrad\C\rGrad}{\e'_1}{\stack'_1}$ and $\ExpStack{\e_1}{\stack_1} \ev \ExpStack{\e_2}{\stack_2}$, then 
$\IsWFConfI{\Gamma_2}{\e_2}{\stack_2}{\typeGrad\C\rGrad}{\e'_2}{\stack'_2}$ with $\dom{\Gamma_1}\subseteq\dom{\Gamma_2}$ and $\Gamma_2\ROrdCtx\Gamma_1,\Delta$, and $\reducestack{\e'_1}{\stack'_1}{\rGrad}{\e'_2}{\stack'_2}$. 
\end{corollary}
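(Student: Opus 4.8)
The plan is to obtain Corollary~\ref{cor:sr} as a direct composition of \cref{prop:wt-ann}, \cref{thm:res-progress}, and \cref{prop:gr-sem-sound}, with determinism of the standard reduction used to glue the pieces together. First I would use \cref{prop:wt-ann} (in its version for configurations) to convert the annotation-generating hypothesis $\IsWFConfI{\Gamma_1}{\e_1}{\stack_1}{\typeGrad\C\rGrad}{\e'_1}{\stack'_1}$ into the plain annotated typing $\IsWFConfA{\Gamma_1}{\e'_1}{\stack'_1}{\typeGrad\C\rGrad}$ together with the erasure equalities $\erase{\e'_1} = \e_1$ and $\erase{\stack'_1} = \stack_1$. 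Since $(\e_1 \mid \stack_1)$ takes a standard reduction step, $\e_1$ is not a value; and as the erasure of an annotated value is again a value, $\e'_1$ is not a value either.

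Next I would apply \cref{thm:res-progress} to the well-typed annotated configuration $(\e'_1 \mid \stack'_1)$. Because $\e'_1$ is not a value, this produces an instrumented step $\reducestack{\e'_1}{\stack'_1}{\rGrad}{\e'_2}{\stack'_2}$ together with $\IsWFConfA{\Gamma_2}{\e'_2}{\stack'_2}{\typeGrad\C\rGrad}$, where $\dom{\Gamma_1}\subseteq\dom{\Gamma_2}$ and $\Gamma_2 \ROrdCtx \Gamma_1,\Delta$. This already delivers the required instrumented step and the two context conditions; what remains is to identify the reduct $(\e'_2 \mid \stack'_2)$ as an annotation of the given target $(\e_2 \mid \stack_2)$.

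For that identification I would invoke \cref{prop:gr-sem-sound}: erasing annotations from the instrumented step yields a standard step $\reduce{\e_1}{\stack_1}{\erase{\e'_2}}{\erase{\stack'_2}}$. Since the standard reduction of \cref{fig:calculus} fixes a unique redex position through its contextual rules, it is deterministic up to the renaming of the fresh variables introduced by rules \refToRule{invk} and \refToRule{block}; hence this derived step must coincide with the assumed $\reduce{\e_1}{\stack_1}{\e_2}{\stack_2}$, giving $\erase{\e'_2} = \e_2$ and $\erase{\stack'_2} = \stack_2$ (after an $\alpha$-renaming of the fresh variable to match the one used by the assumed step). Applying the backward direction of \cref{prop:wt-ann} to $\IsWFConfA{\Gamma_2}{\e'_2}{\stack'_2}{\typeGrad\C\rGrad}$ with these erasure equalities then yields $\IsWFConfI{\Gamma_2}{\e_2}{\stack_2}{\typeGrad\C\rGrad}{\e'_2}{\stack'_2}$, which, combined with $\dom{\Gamma_1}\subseteq\dom{\Gamma_2}$, $\Gamma_2\ROrdCtx\Gamma_1,\Delta$, and the instrumented step above, is exactly the thesis.

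The main obstacle I anticipate is precisely the determinism step: one must ensure that the step reconstructed via progress and instrumented-soundness is \emph{forced} to agree with the assumed step, rather than being merely some other valid standard step. This hinges on the fact that the evaluation contexts built from the contextual rules pick out a unique redex in any non-value configuration and that the only genuine nondeterminism is the choice of fresh names in \refToRule{invk} and \refToRule{block}, which is harmless up to $\alpha$-equivalence. I would therefore either record this determinism (up to $\alpha$-renaming of fresh variables) as an auxiliary lemma about the standard semantics, or argue it inline from the shape of the contextual rules, before concluding.
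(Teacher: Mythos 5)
Your proposal matches the paper's proof essentially step for step: both convert the hypothesis via \cref{prop:wt-ann}, apply \cref{thm:res-progress} to obtain the instrumented step together with the context conditions $\dom{\Gamma_1}\subseteq\dom{\Gamma_2}$ and $\Gamma_2\ROrdCtx\Gamma_1,\Delta$, erase the step via \cref{prop:gr-sem-sound}, and conclude by determinism of the standard semantics followed by the converse direction of \cref{prop:wt-ann}. Your added care about determinism holding only up to the choice of fresh names in \refToRule{invk} and \refToRule{block} is a point the paper glosses over with a bare appeal to determinism, but it refines rather than changes the argument.
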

\begin{proof}
By \cref{prop:wt-ann} we get $\IsWFConfA{\Gamma_1}{\e'_1}{\stack'_1}{\typeGrad\C\rGrad}$ and, by \cref{thm:res-progress}, 
$\reducestack{\e'_1}{\stack'_1}{\rGrad}{\e'_2}{\stack'_2}$ and $\IsWFConfA{\Gamma_2}{\e'_2}{\stack'_2}{\typeGrad\C\rGrad}$ with $\dom{\Gamma_1}\subseteq \dom{\Gamma_2}$ and $\Gamma_2\ROrdCtx\Gamma_1,\Delta$. 
By \cref{prop:wt-ann}, we get $\IsWFConfI{\Gamma_2}{\erase{\e'_2}}{\erase{\stack'_2}}{\typeGrad\C\rGrad}{\e'_2}{\stack'_2}$ and 
by \cref{prop:gr-sem-sound}, we get $\reduce{\e_1}{\stack_1}{\erase{\e'_2}}{\erase{\stack'_2}}$. 
By the determinism of the standard semantics we have $\erase{\e'_2}= \e_2$ and $\erase{\stack'_2}= \stack_2$, hence the thesis. 
\end{proof}


\section{Combining grades}\label{sect:combining}

As we have seen, each grade algebra encodes a specific notion of resource usage. 
However, in a program one may need different notions of usage for different kinds of resources or different pieces of code (e.g., different classes). 
This means that one needs to use several grade algebras at the same time, that is, a family $(\Diag_\kind)_{k\in\KSet}$ of grade algebras\footnote{$\Diag$ stands for ``heterogeneous''.} indexed over a set $\KSet$ of \emph{grade kinds}. We assume grade kinds to always include $\NKind$ and $\TKind$, with $\Diag_\NKind$ and $\Diag_\TKind$ the grade algebras of natural numbers and trivial, respectively, as in \cref{ex:gr-alg}, since they play a special role, as will be shown.

\begin{example}\label{ex:kinds}
Assume to use, in a program, grade kinds $\NKind$, $\AKind$, $\PKind$, $\PPKind$, $\APKind$, and $\TKind$, where:
\begin{itemize}
\item $\Diag_\AKind$ is the affinity grade algebra, as in \refItem{ex:gr-alg}{2}.
\item $\Diag_\PKind$ and $\Diag_\PPKind$ are two different instantiations of the grade algebra of privacy levels, as in \cref{ex:privacy}; namely, in $\Diag_\PKind$ there are only two privacy levels $\public$ and $\private$, whereas in $\Diag_\PPKind$ we have privacy levels $\privA$, $\privB$, $\privC$, $\privD$, with $\privA\rord\privB\rord\privD$ and $\privA\rord\privC\rord\privD$. 
\item Finally,  $\Diag_\APKind$ is $\Diag_\AKind\times\Diag_\PKind$,  as in \refItem{ex:gr-alg}{7},  tracking  simultaneously affinity and privacy.
\end{itemize}
\end{example}
%
%

 We want to make grades of all such kinds simultaneously available to the programmer.  In order to achieve this, we should specify how to \emph{combine} grades of different kinds through their distinctive operators; for instance, an object with grade of kind $\kind$ could have a field with grade of kind $\idxb$, hence a field access should be graded by their multiplication. 

In other words, we need to construct, starting from the family $(\Diag_\kind)_{k\in K}$, a single grade algebra of \emph{heterogeneous grades}. In this way, the meta-theory developed in previous sections for an arbitrary grade algebra applies also  to the case when several grade algebras are used at the same time. Note that this construction is necessary since we do not want available grades to be fixed, as in \cite{OrchardLE19}; rather, the programmer should be allowed to define grades for a specific application, using some linguistic support which could be the language itself, as will be described in \cref{sect:java}.

\subsection{Direct refinement}
The obvious approach is to define heterogeneous grades as pairs $\Pair{\kind}{\rGrad}$ where $\kind\in\KSet$, and $\rGrad\in\Diag_\kind$. Concerning the definition of the operators, in previous work, handling coeffects rather than grades, \cite{BianchiniDGZ22} we took the simplest choice, that is,  combining (by either sum or product) grades of different kinds always returns $\Pair{\infty}{\TKind}$, meaning, in a sense, that we ``do not know'' how the combination should be done. The only exception are grades of kind $\NKind$; indeed, since the corresponding grade algebra is initial, we know that, for any kind $\kind$, there is a unique grade homomorphism $\iota_{\kind}$ from  $\NN$ to $\Diag_\kind$, hence, to combine $\Pair{n}{\NKind}$ with $\Pair{\rGrad}{\kind}$, we can map $n$ into a grade of kind $\kind$ through such homomorphism, and then use the operator of kind $\kind$. In this paper, we generalize this idea, by allowing the programmer to specify, for each pair of kinds $\kind$ and $\idxb$, a uniquely determined kind $\kind\isum\idxb$ and two uniquely determined grade homomorphisms  $\fun{\injl[\idxa,\idxb]}{\Diag_\idxa}{\Diag_{\idxa\isum\idxb}}$,  and $\fun{\injr[\idxa,\idxb]}{\Diag_\idxb}{\Diag_{\idxa\isum\idxb}}$. In this way, to combine $\Pair{\idxa}{\rGrad}$ and $\Pair{\idxb}{\sGrad}$, we can map both in grades of kind $\kind\isum\idxb$, and then use the operator of kind $\kind\isum\idxb$.

The operator $\isum$ and the family of unique homomorphisms, one for each pair of kinds, can be specified by the programmer, in a minimal and easy to check way, by defininig a \emph{(direct) refinement relation} $\refines$, as defined below, and a family of grade homomorphisms $\fun{\Diag_{\idxa,\idxb}}{\Diag_\idxa}{\Diag_\idxb}$, indexed over pairs $\idxa\refines\idxb$.

Given a relation $\Rightarrow$ on kinds, a \emph{path} from $\kind_0$ to $\kind_n$ is a sequence $\kind_0\ldots\kind_n$ such as 
$\kind_i\Rightarrow\kind_{i+1}$, for all $i\in 1..n-1$. We say that $\idxb$ is an \emph{ancestor} of $\idxa$ if there is a path from $\idxa$ to $\idxb$.

Then, a \emph{(direct) refinement relation} is a relation $\refines$ on $\KSet\setminus\{\NKind,\TKind\}$ such as the following conditions hold:
\begin{enumerate}
\item for each $\idxa,\idxb$, there exists at most one path from $\idxa$ to $\idxb$
\item for each $\idxa,\idxb$ with a common ancestor, there is a \emph{least} common ancestor, denoted $\idxa\isum\idxb$; that is, such that, for any common ancestor $\idxc$, $\idxc$ is an ancestor of $\kind\isum\idxb$.
\end{enumerate}
Note that, thanks to requirement (1), requirement (2) means that the unique path, e.g.,  from $\idxa$ to $\idxc$, consists of a unique path from $\idxa$ to $\kind\isum\idxb$, and then a unique path from $\kind\isum\idxb$ to $\idxc$.

Given a direct refinement relation $\refines$, we can derive the following structure on $\KSet$:
\begin{itemize}
\item $\refines$ can be extended to a partial order $\pord$ on $\KSet$, by taking the reflexive and transitive closure of $\refines$ and adding $\NKind\pord\idxa\pord\TKind$ for all $\idxa\in\KSet$.
\item $\isum$ can be extended to all pairs, by defining $\idxa\isum\idxb=\TKind$ if $\idxa$ and $\idxb$ have no common ancestor.
\end{itemize}
Altogether, we obtain an instance of a structure called \emph{grade signature}, as will be  detailed  in \cref{def:gradesig}. Moreover, given a $\refines$-family of homomorphisms:
\begin{itemize}
\item they can be extended, by composition\footnote{ Note that in this way we obtain, in particular, all the identities.}, to all pairs of grades $\Pair{\idxa}{\idxb}\in\KSet\setminus\{\NKind,\TKind\}$ such that there is a path from $\idxa$ to $\idxb$; since this path is unique, the resulting homomorphism is uniquely defined
\item for each kind $\idxa$, we add the unique homomorphisms from $\NN$ and to $\Triv$. 
\end{itemize}
Altogether, besides a grade algebra for each kind, we get a grade homomophism for each pair $\Pair{\idxa}{\idxb}$ such that $\idxa\pord\idxb$. That is,
we obtain an instance of a structure called \emph{heterogeneous grade algebra}, as will be   detailed  in \cref{def:hgradealg}.

 Thus, as desired,  combining grades of kinds $\Pair{\idxa}{\rGrad}$ and $\Pair{\idxb}{\sGrad}$ can be defined by mapping both $\rGrad$ and $\sGrad$ into grades of kind $\idxa\isum\idxb$, and then the operator of kind $\idxa\isum\idxb$ is applied.

The fact that in this way we actually obtain a grade algebra, that is, all required axioms are satisfied, is proved in the next subsection on the more general case of an arbitrary grade signature and heterogeneous grade algebra.

 Note the special role played by the grade kinds $\NKind$ and $\TKind$, with their corresponding grade algebras. The former turns out to be the minimal kind required in a grade signature (\cref{def:gradesig}); this is important since the zero and one of the resulting grade algebra (hence the zero and one used in the type system) will be those of this kind. The latter, as shown above, is used as default common ancestor for pairs of kinds which do not have one. 

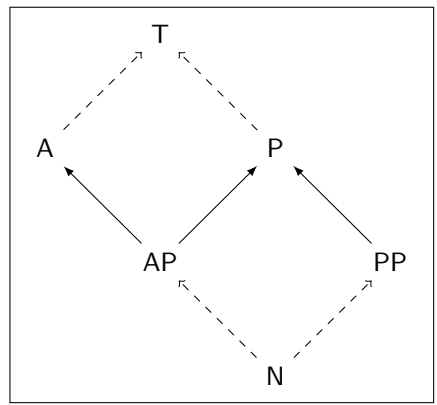
\begin{wrapfigure}[15]{r}{0.5\textwidth}
\centering
\framebox{
\begin{tikzpicture}
\node (triv) at (10ex,30ex) {$\TKind$};
\node (A) at (0ex,20ex) {$\AKind$};
\node (P) at (20ex,20ex) {$\PKind$};
\node (AP) at (10ex,10ex) {$\APKind$};
\node (PP) at (30ex,10ex) {$\PPKind$};
\node (nat) at (20ex,0ex) {$\NKind$};
\draw [arrows={-latex}] (AP) -- (A);
\draw [arrows={-latex}] (AP) -- (P);
\draw [arrows={-latex}] (PP) -- (P);
\draw [dashed,->] (nat) -- (AP);
\draw [dashed,->] (nat) -- (PP);
\draw [dashed,->] (A) -- (triv);
\draw [dashed,->] (P) -- (triv);
\end{tikzpicture}
}
\caption{Direct refinement diagram}
\label{fig:refinement}
\end{wrapfigure}

\begin{example}\label{ex:homo}
Coming back to our example, a programmer could define the direct refinement relation and the corresponding homomorphisms as follows:
\begin{itemize}
\item $\PPKind\refines\PKind$, and the corresponding homomorphism maps, e.g., $\privA$, $\privB$, and $\privC$ into $\private$ and $\privD$ into $\public$
\item $\APKind\refines\AKind$, and $\APKind\refines\PKind$, and the corresponding  homomorphisms are the projections.
\end{itemize}
Thus, for instance, the  grade  $\Pair{\APKind}{\Pair{\omega}{\private}}$, meaning that we can use the resource an arbitrary number of times in $\private$ mode,  and $\Pair{\PPKind}{\privD}$, meaning that we can use the resource in $\privD$ mode, gives $\private$. Indeed, both grades are mapped into the grade algebra of privacy levels $0\rord\private\rord\public$; for the former, the information about the affinity is lost, whereas for the second the privacy level $\privD$ is mapped into $\public$; finally, we get $\private = \private\rmul\public$.

 The direct refinement relation is pictorially shown in \cref{fig:refinement}. Dotted arrows denote (some of) the order relations added for $\NKind$ and $\TKind$.
\end{example}
Note that specifying the grade signature and the heterogeneous grade algebra indirectly, by means of the direct refinement relation and the corresponding homomorphisms, has a fundamental advantage: the semantic check that, for each $\idxa$, $\idxb$, we can map grades of grade $\idxa$ into grades of kind $\idxb$ in a unique way (that is, there is at most one homomorphism), which would require checking the equivalence of function definitions, is replaced by the checks (1) and (2) in the definition of direct refinement, which are purely syntactic and can be easily implemented in a type system (a simple stronger condition is to impose that each kind has a unique parent in the  direct refinement relation, as it is for single inheritance).

In \cref{sect:java}, we will see how to express both grade algebras and homomorphisms in Java; roughly, both will be represented by classes implementing a suitable generic interface.

\subsection{A general construction}
We provide a construction that, starting from a family of grade algebras with  a suitable structure, yields a unique grade algebra summarising the whole family. 
As a consequence, the meta-theory developed in previous sections for a single grade algebra applies also  to the case when several grade algebras are used at the same time. 

To develop this construction, we use simple and standard categorical tools, referring to \cite{MacLane13,Riehl07} for more details. 
Given a category \ct{C}, we denote by $\ct{C}_0$ the collection of objects in \ct{C} and we say that \ct{C} is \emph{small} when $\ct{C}_0$ is a set. 
Recall that any partially ordered set $\PP = \ple{\PP_0,\pord}$ can be seen as a small category where objects are the elements of $\PP_0$ and, for all $x,y \in \PP_0$, there is an arrow $x\to y$ iff $x\pord y$; 
hence, for every pair of objects in $\PP_0$, there is at most one arrow between them, and the only isomorphisms are the identities. 
\begin{definition}\label{def:gradesig}
A \emph{grade signature} \Sh is a partially ordered set with finite suprema, that is, 
it consists of the following data: 
\begin{itemize}
\item a partially ordered set \ple{\Sh_0,\pord}; 
\item a function \fun{\isum}{\Sh_0\times\Sh_0}{\Sh_0} monotone in both arguments and such that 
for all $\idxa,\idxb,\idxc\in\Sh_0$, 
$\idxa\isum\idxb\pord\idxc$ iff $\idxa\pord\idxc$ and $\idxb\pord\idxc$; 
\item a distinguished object $\izero \in \Sh_0$ such that $\izero\pord\idxa$, for all $\idxa\in\Sh_0$. 
\end{itemize}
\end{definition}
Intuitively, objects in \Sh represent the \emph{kinds} of grades one wants to work with, while the arrows, namely, the order relation, model a refinement between such kinds: $\idxa\pord\idxb$ means that the kind $\idxa$ is \emph{more specific} than the kind $\idxb$. 
The operation $\isum$ combines two kinds to  produce the most specific kind generalising both. 
Finally, the kind $\izero$ is the most specific one. 
Reading a grade signature \Sh as a category, being a grade signature means having finite coproducts. 

It is easy to check that the following properties hold for all $\idxa,\idxb,\idxc\in\Sh_0$: 
\begin{align*}
(\idxa\isum\idxb)\isum\idxc &= \idxa \isum(\idxb\isum\idxc) 
  &
\idxa\isum\idxa &=\idxa 
  \\
\idxa\isum\idxb &=\idxb\isum\idxa 
  & 
\idxa\isum\izero &= \idxa 
\end{align*}
namely, \ple{\Sh_0,\isum,\izero} is a commutative idempotent monoid. 

\begin{definition}\label{def:hgradealg}
A \emph{heterogeneous grade algebra} over the grade signature $\Sh$ is just  a functor \fun{\Diag}{\Sh}{\GrAlg}. 
This means that 
it consists of a grade algebra
$\Diag(\idxa)$, written also $\Diag_\idxa$, for every kind $\idxa \in \Sh_0$, and 
a grade algebra homomorphism \fun{\Diag_{\idxa,\idxb}}{\Diag_\idxa}{\Diag_\idxb} for every arrow $\idxa \pord \idxb$, respecting composition and identities\footnote{ The notation $\Diag_{\idxa,\idxb}$ makes sense, since between $\idxa$ and $\idxb$ there is at most one arrow.} 
, that is, 
$\idxa\pord\idxb\pord\idxc$ implies $\Diag_{\idxa,\idxc} = \Diag_{\idxb,\idxc}\circ \Diag_{\idxa,\idxb}$ and 
$\Diag_{\idxa,\idxa} = \id_{\Diag_\idxa}$. 
\end{definition}
Essentially, the homomorphisms $\Diag_{\idxa,\idxb}$ realise the refinement $\idxa\pord\idxb$, transforming grades of kind $\idxa$ into grades of kind $\idxb$, preserving the grade algebra structure. 

Observe that the arrows $\izero\pord\idxa$ and $\idxa\pord\idxa\isum\idxb$ and $\idxb\pord\idxa\isum\idxb$ in $\Sh$ give rise to the following grade algebra homomorphisms: 
\[
\fun{\injz[\idxa] = \Diag_{\izero,\idxa}}{\Diag_\izero}{\Diag_\idxa}
\quad 
\fun{\injl[\idxa,\idxb] = \Diag_{\idxa,\idxa\isum\idxb}}{\Diag_\idxa}{\Diag_{\idxa\isum\idxb}} 
\quad 
\fun{\injr[\idxa,\idxb] = \Diag_{\idxb,\idxa\isum\idxb}}{\Diag_\idxb}{\Diag_{\idxa\isum\idxb}} 
\]
which provide us with a way to map 
grades of kind $\izero$ into grades of any other kind, and 
grades of kind $\idxa$ and $\idxb$ into grades of their composition $\idxa\isum\idxb$. 
By functoriality of $\Diag$ and using the commutative idempotent monoid structure of $\Sh$, we get the following equalities hold in the category \GrAlg, ensuring consistency of such transformations: 
\begin{align}
\injl[\idxa\isum\idxb,\idxc] \circ \injl[\idxa,\idxb] &= \injl[\idxa,\idxb\isum\idxc] 
\label{eq:inj-assoc1} \\ 
\injr[\idxa\isum\idxb,\idxc] \circ \injl[\idxa,\idxb] &= \injr[\idxa,\idxb\isum\idxc]\circ \injl[\idxb,\idxc] 
\label{eq:inj-assoc2}  \\ 
\injl[\idxa,\idxb] &= \injr[\idxb,\idxa] 
\label{eq:inj-comm}  \\ 
\injl[\idxa,\idxa] &= \id_{\Diag_\idxa} 
\label{eq:inj-idm} \\ 
\injl[\idxa,\izero] &= \id_{\Diag_\idxa} 
\label{eq:inj-id1} \\
\injr[\idxa,\izero] &= \injz[\idxa]
\label{eq:inj-id2} 
\end{align}

In the following, we will show how to turn a heterogeneous grade algebra into a single grade algebra. 
The procedure we will describe is based on a general construction due to Grothendieck \cite{Grothendieck71} defined on indexed categories.  
 
Let us assume a grade signature \Sh  and a heterogeneous grade algebra \fun{\Diag}{\Sh}{\GrAlg}. 
We consider the following set: 
\[ 
\RC{\GrD\Diag} = \{ \ple{\idxa,\rgr} \mid \idxa\in\Sh_0,\ \rgr \in \RC{\Diag_\idxa} \} 
\]
That is, elements of $\GrD\Diag$ will be \emph{kinded grades}, namely, pairs of a kind $\idxa$ and a grade of that kind. 
Note that this is indeed a set because $\Sh$ is small, that is, $\Sh_0$ is a set. 
Then, we define a binary relation $\rord_\Diag$ on $\RC{\GrD\Diag}$ as follows: 
\[
\ple{\idxa,\rgr} \rord_\Diag \ple{\idxb,\sgr} \quad \text{iff}\quad \idxa\pord\idxb \text{ and } \Diag_{\idxa,\idxb}(\rgr)\rord_\idxb \sgr  
\]
that is, the kind $\idxa$ must be more specific than the kind $\idxb$ and, transforming $\rgr$ by $\Diag_{\idxa,\idxb}$, we obtain a grade of kind $\idxb$ which is smaller than $\sgr$. 
These data define a partially ordered set as the following proposition shows. 

\begin{proposition}\label{prop:grd-preord}
\ple{\RC{\GrD\Diag}, \rord_\Diag} is a partially ordered set. 
\end{proposition}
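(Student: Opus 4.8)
The plan is to verify directly the three defining properties of a partial order — reflexivity, antisymmetry, and transitivity — for the relation $\rord_\Diag$, reducing each one to the corresponding property of the order $\pord$ on the signature $\Sh$ and of the orders $\rord_\idxa$ on the individual grade algebras $\Diag_\idxa$. The two ingredients that make this reduction go through are the functoriality of $\Diag$, which gives $\Diag_{\idxa,\idxa}=\id_{\Diag_\idxa}$ and $\Diag_{\idxa,\idxc}=\Diag_{\idxb,\idxc}\circ\Diag_{\idxa,\idxb}$ whenever $\idxa\pord\idxb\pord\idxc$, and the fact that each $\Diag_{\idxa,\idxb}$, being a grade homomorphism, is monotone with respect to the underlying orders.

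For reflexivity, I would observe that $\ple{\idxa,\rgr}\rord_\Diag\ple{\idxa,\rgr}$ amounts to $\idxa\pord\idxa$, which holds by reflexivity of $\pord$, together with $\Diag_{\idxa,\idxa}(\rgr)\rord_\idxa\rgr$; since $\Diag_{\idxa,\idxa}=\id_{\Diag_\idxa}$, this last condition is just $\rgr\rord_\idxa\rgr$, true by reflexivity of $\rord_\idxa$. For antisymmetry, from $\ple{\idxa,\rgr}\rord_\Diag\ple{\idxb,\sgr}$ and the reverse inequality I get $\idxa\pord\idxb$ and $\idxb\pord\idxa$, hence $\idxa=\idxb$ by antisymmetry of $\pord$; the two transition homomorphisms then collapse to the identity, leaving $\rgr\rord_\idxa\sgr$ and $\sgr\rord_\idxa\rgr$, so $\rgr=\sgr$ by antisymmetry of $\rord_\idxa$, and the two pairs coincide.

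The main step is transitivity. Assuming $\ple{\idxa,\rgr}\rord_\Diag\ple{\idxb,\sgr}$ and $\ple{\idxb,\sgr}\rord_\Diag\ple{\idxc,\tgr}$, I obtain $\idxa\pord\idxb\pord\idxc$, so $\idxa\pord\idxc$ by transitivity of $\pord$, together with the two inequalities $\Diag_{\idxa,\idxb}(\rgr)\rord_\idxb\sgr$ and $\Diag_{\idxb,\idxc}(\sgr)\rord_\idxc\tgr$. Here I would use functoriality to rewrite $\Diag_{\idxa,\idxc}(\rgr)=\Diag_{\idxb,\idxc}(\Diag_{\idxa,\idxb}(\rgr))$, then apply the monotone map $\Diag_{\idxb,\idxc}$ to the first inequality to get $\Diag_{\idxb,\idxc}(\Diag_{\idxa,\idxb}(\rgr))\rord_\idxc\Diag_{\idxb,\idxc}(\sgr)$, and finally chain with the second inequality by transitivity of $\rord_\idxc$ to conclude $\Diag_{\idxa,\idxc}(\rgr)\rord_\idxc\tgr$, i.e. $\ple{\idxa,\rgr}\rord_\Diag\ple{\idxc,\tgr}$. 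I expect no genuine obstacle in this proof; the only point requiring care is keeping track of which order — $\pord$ on $\Sh$, or $\rord_\idxb$, or $\rord_\idxc$ — is in play at each stage, and invoking monotonicity of the transition homomorphism exactly where the grade must be pushed forward along the composite arrow witnessing $\idxa\pord\idxc$.
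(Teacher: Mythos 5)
Your proof is correct and follows essentially the same route as the paper's: all three axioms are reduced to the corresponding properties of $\pord$ and of the orders $\rord_\idxa$, using functoriality of $\Diag$ (identities and composition of the transition homomorphisms) and monotonicity of $\Diag_{\idxb,\idxc}$ in exactly the places the paper uses them. No gaps to report.
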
 

The additive structure is given by a binary operation 
\fun{\rsum_\Diag}{\RC{\GrD\Diag}\times\RC{\GrD\Diag}}{\RC{\GrD\Diag}} and an element $\rzero_\Diag$ in $\RC{\GrD\Diag}$ defined as follows: 
\begin{align*}
\ple{\idxa,\rgr}\rsum_\Diag\ple{\idxb,\sgr} &= \ple{\idxa\isum\idxb,\injl[\idxa,\idxb](\rgr) \rsum_{\idxa\isum\idxb} \injr[\idxa,\idxb](\sgr)}
  &
\rzero_\Diag &= \ple{\izero,\rzero_\izero} 
\end{align*} 
This means that, the addition of two elements \ple{\idxa,\rgr} and \ple{\idxb,\sgr} is performed by first mapping $\rgr$ and $\sgr$ in the most specific kind generalising both $\idxa$ and $\idxb$, namely $\idxa\isum\idxb$, and then by summing them in the grade algebra over that kind. 
The zero element is just the zero of the most specific kind.
 
\begin{proposition}\label{prop:sum-mon}
\ple{\RC{\GrD\Diag},\rord_\Diag,\rsum_\Diag,\rzero_\Diag} is an ordered commutative monoid. 
\end{proposition}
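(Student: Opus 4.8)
Since \cref{prop:grd-preord} already establishes that $\ple{\RC{\GrD\Diag},\rord_\Diag}$ is a partial order, the plan is to verify the three commutative-monoid laws for $\rsum_\Diag$ with unit $\rzero_\Diag$, and then the monotonicity of $\rsum_\Diag$ with respect to $\rord_\Diag$. The single guiding observation I would exploit throughout is that $\injl[\idxa,\idxb]$, $\injr[\idxa,\idxb]$ and $\injz[\idxa]$ are by definition instances of the transition homomorphisms $\Diag_{-,-}$, and that, $\Sh$ being a poset, functoriality of $\Diag$ forces every composite of such homomorphisms along a chain of kinds to be the unique canonical one. Thus each computation reduces to: replacing nested injections by a single homomorphism $\Diag_{\idxa,\idxb}$ via functoriality; using that homomorphisms preserve $\rsum$, preserve $\rzero$ and are monotone; and invoking the commutative-monoid axioms already available in each fibre $\Diag_\idxa$.

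For the unit law I would compute $\ple{\idxa,\rgr}\rsum_\Diag\rzero_\Diag$, whose first component is $\idxa\isum\izero=\idxa$; by \eqref{eq:inj-id1} and \eqref{eq:inj-id2} the two injections specialise to $\id_{\Diag_\idxa}$ and $\injz[\idxa]$, and since the latter is a homomorphism it sends $\rzero_\izero$ to $\rzero_\idxa$, leaving the grade component $\rgr\rsum_\idxa\rzero_\idxa=\rgr$; the other unit law then follows by commutativity. Commutativity itself comes from $\idxa\isum\idxb=\idxb\isum\idxa$ together with \eqref{eq:inj-comm}, which swaps the roles of $\injl$ and $\injr$, combined with commutativity of $\rsum$ in the fibre $\Diag_{\idxa\isum\idxb}$.

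The associativity law is the step I expect to require the most care. Writing $\kind=\idxa\isum\idxb\isum\idxc$ (well defined by associativity of $\isum$), both bracketings have $\kind$ as first component, so it suffices to equate the grade components. Here I would push the outer injection inside the inner sum using that it preserves $\rsum$, and then collapse the two-step composites by functoriality, e.g.\ $\Diag_{\idxa\isum\idxb,\kind}\circ\injl[\idxa,\idxb]=\Diag_{\idxa,\kind}$; this shows the left bracketing has grade component $\Diag_{\idxa,\kind}(\rgr)\rsum_\kind\Diag_{\idxb,\kind}(\sgr)\rsum_\kind\Diag_{\idxc,\kind}(\tgr)$, while a symmetric computation gives the same expression for the right bracketing, the two agreeing by associativity of $\rsum$ in $\Diag_\kind$. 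The only genuine obstacle is the bookkeeping of which injection lands where, but functoriality of $\Diag$ makes it routine, since \eqref{eq:inj-assoc1} and \eqref{eq:inj-assoc2} are themselves just instances of it.

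Finally, for monotonicity I would assume $\ple{\idxa,\rgr}\rord_\Diag\ple{\idxa',\rgr'}$ and $\ple{\idxb,\sgr}\rord_\Diag\ple{\idxb',\sgr'}$. The kind components satisfy $\idxa\isum\idxb\pord\idxa'\isum\idxb'$ by monotonicity of $\isum$. For the grade components I would apply $\Diag_{\idxa\isum\idxb,\idxa'\isum\idxb'}$ to the left-hand sum, distribute it over $\rsum$, and collapse via functoriality to $\Diag_{\idxa,\idxa'\isum\idxb'}(\rgr)\rsum_{\idxa'\isum\idxb'}\Diag_{\idxb,\idxa'\isum\idxb'}(\sgr)$; factoring $\Diag_{\idxa,\idxa'\isum\idxb'}=\Diag_{\idxa',\idxa'\isum\idxb'}\circ\Diag_{\idxa,\idxa'}$ and using the hypothesis $\Diag_{\idxa,\idxa'}(\rgr)\rord_{\idxa'}\rgr'$ with monotonicity of the homomorphism $\Diag_{\idxa',\idxa'\isum\idxb'}$ (and symmetrically for $\sgr$), monotonicity of $\rsum$ in $\Diag_{\idxa'\isum\idxb'}$ yields exactly $\ple{\idxa,\rgr}\rsum_\Diag\ple{\idxb,\sgr}\rord_\Diag\ple{\idxa',\rgr'}\rsum_\Diag\ple{\idxb',\sgr'}$. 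Note that monotonicity of $\rsum$ and monotonicity of homomorphisms are both among the grade-algebra axioms, so no further hypotheses are needed.
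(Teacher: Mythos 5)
Your proposal is correct and takes essentially the same route as the paper's proof: both verify the four laws (unit, commutativity, associativity, monotonicity) componentwise, using functoriality of $\Diag$ over the poset $\Sh$ to collapse composites of injections, the fact that homomorphisms preserve $\rsum$ and $\rzero$ and are monotone, and the monoid axioms in each fibre. Your normal-form reduction of both bracketings in the associativity step and your factorisation $\Diag_{\idxa,\idxa'\isum\idxb'} = \Diag_{\idxa',\idxa'\isum\idxb'}\circ\Diag_{\idxa,\idxa'}$ in the monotonicity step are only cosmetic variants of the paper's chaining of \cref{eq:inj-assoc1,eq:inj-assoc2,eq:inj-comm,eq:inj-id1,eq:inj-id2}, which, as you observe, are themselves instances of functoriality.
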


\begin{proposition}\label{prop:zero-min}
$\rzero_\Diag\rord_\Diag\ple{\idxa,\rgr}$ for every $\ple{\idxa,\rgr} \in \RC{\GrD\Diag}$. 
\end{proposition}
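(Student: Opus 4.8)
The plan is to unfold the definition of $\rord_\Diag$ on the specific pair consisting of $\rzero_\Diag = \ple{\izero,\rzero_\izero}$ and an arbitrary kinded grade $\ple{\idxa,\rgr}$, and then discharge the two resulting conditions separately, one using the structure of the grade signature and the other using the fact that the transition maps are grade algebra homomorphisms.

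First I would observe that, by the definition of $\rord_\Diag$, proving $\ple{\izero,\rzero_\izero} \rord_\Diag \ple{\idxa,\rgr}$ amounts to establishing (i) $\izero \pord \idxa$ and (ii) $\Diag_{\izero,\idxa}(\rzero_\izero) \rord_\idxa \rgr$. Condition (i) is immediate from the defining property of a grade signature (\cref{def:gradesig}), namely that $\izero$ is the least element of $\ple{\Sh_0,\pord}$, so $\izero \pord \idxa$ holds for every $\idxa \in \Sh_0$. For condition (ii), I would use that $\Diag_{\izero,\idxa} = \injz[\idxa]$ is a grade algebra homomorphism, hence preserves the additive unit, giving $\Diag_{\izero,\idxa}(\rzero_\izero) = \rzero_\idxa$. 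It then remains to show $\rzero_\idxa \rord_\idxa \rgr$, which is exactly the last axiom of \cref{def:gr-alg} instantiated for the grade algebra $\Diag_\idxa$, stating that the zero is the least element of the order. Combining these observations yields the claim.

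There is essentially no hard step here: the statement reduces to two facts already built into the definitions, the minimality of $\izero$ in the signature and the minimality of $\rzero$ in each grade algebra, glued together by the fact that homomorphisms preserve zero. The only point to handle carefully is not to conflate the order $\rord_\Diag$ on kinded grades with the orders $\rord_\idxa$ internal to each $\Diag_\idxa$; once the definition of $\rord_\Diag$ is unwound, both required conditions fall out immediately.
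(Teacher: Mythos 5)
Your proof is correct and follows essentially the same route as the paper's: unfold $\rord_\Diag$, use that $\injz[\idxa]$ is a grade algebra homomorphism to get $\injz[\idxa](\rzero_\izero) = \rzero_\idxa$, and conclude by minimality of the zero in $\Diag_\idxa$. Your version is in fact slightly more careful than the paper's, since you explicitly discharge the kind-level condition $\izero \pord \idxa$ from the grade signature axioms, which the paper leaves implicit.
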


Similarly, the multiplicative structure is given by a binary operation 
\fun{\rmul_\Diag}{\RC{\GrD\Diag}\times\RC{\GrD\Diag}}{\RC{\GrD\Diag}} and an element $\rone_\Diag$ in $\RC{\GrD\Diag}$ defined as follows: 
\begin{small}
\begin{align*} 
\ple{\idxa,\rgr}\rmul_\Diag\ple{\idxb,\sgr} &= \begin{cases}
 \ple{\idxa\isum\idxb,\injl[\idxa,\idxb](\rgr) \rmul_{\idxa\isum\idxb} \injr[\idxa,\idxb](\sgr)}  & \ple{\idxa,\rgr}\ne\rzero_\Diag\text{ and }\ple{\idxb,\sgr}\ne\rzero_\Diag \\ 
\rzero_\Diag  & \text{otherwise} 
\end{cases} 
  & 
\rone_\Diag &= \ple{\izero, \rone_{\izero}} 
\end{align*}
\end{small}
Notice that the definitions above follow almost the same pattern as additive operations, 
but we force that multiplying by $\rzero_\Diag$ we get again $\rzero_\Diag$, which is a key property of grade algebras. 

\begin{proposition}\label{prop:mul-mon}
\ple{\RC{\GrD\Diag},\rord_\Diag,\rmul_\Diag,\rone_\Diag} is an ordered monoid. 
\end{proposition}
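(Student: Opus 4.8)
The plan is to verify the three ingredients of an ordered monoid---associativity of $\rmul_\Diag$, neutrality of $\rone_\Diag$, and monotonicity---by reducing every computation to a single grade algebra through the structure maps. The engine of the whole argument is that, since $\Sh$ is a partial order regarded as a thin category, there is at most one arrow between any two kinds; as $\Diag$ is a functor, any two composites of the homomorphisms $\Diag_{-,-}$ sharing the same source and target coincide, each being the canonical map $\Diag_{\idxa,\idxb}$. In particular the injections $\injl,\injr,\injz$ are instances of these canonical maps, the coherence equations for the injections become automatic, and whenever $\idxa\pord\idxc$ a grade $\rgr\in\RC{\Diag_\idxa}$ has a \emph{unique} image in $\Diag_\idxc$. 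I would also use repeatedly that each $\Diag_{\idxa,\idxb}$, being a grade algebra homomorphism, preserves $\rmul$, $\rone$ and $\rzero$. Every axiom is then proved by a case analysis on whether the operands equal $\rzero_\Diag$, dictated by the conditional definition of $\rmul_\Diag$.

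For \emph{neutrality}, I would first dispatch the case $\ple{\idxa,\rgr}=\rzero_\Diag$, where the absorbing clause gives $\rone_\Diag\rmul_\Diag\rzero_\Diag=\rzero_\Diag$ outright. Otherwise both operands are non-zero, so the first clause applies; since $\izero\isum\idxa=\idxa$ one has $\injr[\izero,\idxa]=\id_{\Diag_\idxa}$ and $\injl[\izero,\idxa]=\Diag_{\izero,\idxa}$, and as the latter is a homomorphism it sends $\rone_\izero$ to $\rone_\idxa$; the grade component therefore collapses to $\rone_\idxa\rmul\rgr=\rgr$, giving $\ple{\idxa,\rgr}$, and symmetrically on the right. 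For the \emph{generic} core of associativity, assume all three operands differ from $\rzero_\Diag$ and that the intermediate products do too. By associativity of $\isum$, both $(\ple{\idxa,\rgr}\rmul_\Diag\ple{\idxb,\sgr})\rmul_\Diag\ple{\idxc,\tgr}$ and its re-bracketing carry the kind $\idxa\isum\idxb\isum\idxc$; expanding the injections, pushing the outer homomorphism through $\rmul$, and collapsing each composite of structure maps to the canonical one, the grade component of both sides becomes $\Diag_{\idxa,\idxa\isum\idxb\isum\idxc}(\rgr)\rmul\Diag_{\idxb,\idxa\isum\idxb\isum\idxc}(\sgr)\rmul\Diag_{\idxc,\idxa\isum\idxb\isum\idxc}(\tgr)$, which is unambiguous by associativity of multiplication in $\Diag_{\idxa\isum\idxb\isum\idxc}$.

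It remains to reconcile the core computation with the zero-absorption clause, and this is where I expect the main obstacle. When some operand is $\rzero_\Diag$, both bracketings must be shown to equal $\rzero_\Diag$; this holds because $\rzero_\Diag$ is absorbing ($\rzero_\Diag\rmul_\Diag x=x\rmul_\Diag\rzero_\Diag=\rzero_\Diag$ by the second clause), so the zero factor propagates outwards independently of the bracketing. The delicate point is that an \emph{intermediate} product of two non-zero operands can itself equal $\rzero_\Diag$: as the result kind of the first clause is $\idxa\isum\idxb$, this can only happen when $\idxa=\idxb=\izero$ (since $\izero$ is the least kind) and the grade component is $\rzero_\izero$, that is, within the sub-structure of kind $\izero$, where $\rmul_\Diag$ restricts to the genuine multiplication of $\Diag_\izero$ and $\rzero_\Diag$ is its true zero. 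One checks that on this sub-structure the conditional definition agrees with $\Diag_\izero$, so associativity there is inherited from $\Diag_\izero$; together with the absorbing behaviour above this settles every branch of the case split. This mirrors the additive argument of \cref{prop:sum-mon}, the only extra bookkeeping being the conditional clause.

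Finally, for \emph{monotonicity} suppose $\ple{\idxa,\rgr}\rord_\Diag\ple{\idxa',\rgr'}$ and $\ple{\idxb,\sgr}\rord_\Diag\ple{\idxb',\sgr'}$. If either smaller operand is $\rzero_\Diag$ the product is $\rzero_\Diag$, least by \cref{prop:zero-min}; otherwise both smaller operands are non-zero, and since $\rzero_\Diag$ is the least element the greater operands $\ple{\idxa',\rgr'},\ple{\idxb',\sgr'}$ are non-zero as well (a non-zero element cannot lie $\rord_\Diag$-below $\rzero_\Diag$), so both products use the first clause. The kind inequality $\idxa\isum\idxb\pord\idxa'\isum\idxb'$ is monotonicity of $\isum$, while for the grade component I would rewrite, by functoriality, $\Diag_{\idxa\isum\idxb,\idxa'\isum\idxb'}\circ\injl[\idxa,\idxb]=\Diag_{\idxa,\idxa'\isum\idxb'}=\Diag_{\idxa',\idxa'\isum\idxb'}\circ\Diag_{\idxa,\idxa'}$, and likewise for $\injr$, and then conclude from $\Diag_{\idxa,\idxa'}(\rgr)\rord_{\idxa'}\rgr'$ together with the monotonicity of the homomorphism $\Diag_{\idxa',\idxa'\isum\idxb'}$ and of $\rmul_{\idxa'\isum\idxb'}$. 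This yields $\rmul_\Diag$ monotone in both arguments, completing the proof that $\ple{\RC{\GrD\Diag},\rord_\Diag,\rmul_\Diag,\rone_\Diag}$ is an ordered monoid.
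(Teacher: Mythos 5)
Your neutrality and monotonicity arguments, and the all-non-zero core of associativity, match the paper's proof in substance: the paper likewise reduces these to the additive computations of \cref{prop:sum-mon} (your thin-category observation that all composites of the maps $\Diag_{\idxa,\idxb}$ with common source and target coincide is exactly what the paper packages as the coherence equations for the injections), and its monotonicity case analysis proves precisely your parenthetical claim that no non-zero element lies $\rord_\Diag$-below $\rzero_\Diag$: from $\idxa\pord\izero$ and $\izero\pord\idxa$ one gets $\idxa=\izero$, hence $\Diag_{\idxa,\izero}=\id_{\Diag_\izero}$, so $\rgr\rord_\izero\rzero_\izero$ and thus $\rgr=\rzero_\izero$.

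The genuine gap is in your treatment of the ``delicate point''. You correctly locate where an intermediate product of non-zero operands can equal $\rzero_\Diag$: only when $\idxa=\idxb=\izero$ and $\rgr\rmul_\izero\sgr=\rzero_\izero$, i.e.\ only if $\Diag_\izero$ has zero divisors. But your resolution---that the problem is confined to the kind-$\izero$ sub-structure, where associativity is inherited from $\Diag_\izero$---fails, because the \emph{third} factor need not have kind $\izero$. Take non-zero $\rgr,\sgr\in\RC{\Diag_\izero}$ with $\rgr\rmul_\izero\sgr=\rzero_\izero$, and $\ple{\idxc,\tgr}$ with $\idxc\ne\izero$. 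Then $(\ple{\izero,\rgr}\rmul_\Diag\ple{\izero,\sgr})\rmul_\Diag\ple{\idxc,\tgr}=\rzero_\Diag\rmul_\Diag\ple{\idxc,\tgr}=\rzero_\Diag$, whereas $\ple{\izero,\sgr}\rmul_\Diag\ple{\idxc,\tgr}$ has kind $\izero\isum\idxc=\idxc\ne\izero$, hence is \emph{not} $\rzero_\Diag$, so the other bracketing goes through the first clause twice and evaluates, pushing the homomorphism through the product, to $\ple{\idxc,\Diag_{\izero,\idxc}(\rgr\rmul_\izero\sgr)\rmul_\idxc\tgr}=\ple{\idxc,\rzero_\idxc}$---a different element of $\RC{\GrD\Diag}$, not even $\rord_\Diag$-equivalent to $\rzero_\Diag$. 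So in this branch the two bracketings genuinely disagree, and no bookkeeping can settle it; moreover such data exist within \cref{def:gr-alg} (e.g.\ the chain $\{0,\epsilon,1\}$ with join as sum and $\epsilon\rmul\epsilon=0$ placed at kind $\izero$, mapped by the terminal homomorphism into $\Triv$ at a kind $\idxc$). The case is vacuous only under the extra hypothesis that $\Diag_\izero$ has no zero divisors---true in the paper's intended instantiation, where $\izero=\NKind$ and $\Diag_\izero=\NN$---since then a product of non-zero elements is never $\rzero_\Diag$, the conditional clause fires only on literal $\rzero_\Diag$ operands, and absorption propagates uniformly. For fairness: the paper's own proof never confronts this branch either (it declares the non-zero case ``the same as'' \cref{prop:sum-mon} and the rest trivial), so your instinct in isolating it was sound, but your patch does not close it; as written, the step ``this settles every branch of the case split'' is false without the no-zero-divisors assumption on $\Diag_\izero$.
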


Altogether, we finally get the following result. 

\begin{theorem}\label{thm:multigrade}
$\GrD\Diag = \ple{\RC{\GrD\Diag},\rord_\Diag,\rsum_\Diag,\rmul_\Diag,\rzero_\Diag,\rone_\Diag}$ is a grade algebra. 
\end{theorem}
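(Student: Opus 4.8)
The plan is to assemble the structure from the propositions already proved and then reduce the theorem to the two axioms of \cref{def:gr-alg} that they do not yet cover, namely distributivity and the annihilation of $\rzero_\Diag$. By \cref{prop:grd-preord}, \cref{prop:sum-mon}, \cref{prop:mul-mon} and \cref{prop:zero-min} we already know that $\ple{\RC{\GrD\Diag},\rord_\Diag}$ is a partial order, that $\ple{\RC{\GrD\Diag},\rsum_\Diag,\rzero_\Diag}$ is a commutative monoid and $\ple{\RC{\GrD\Diag},\rmul_\Diag,\rone_\Diag}$ a monoid, that both operations are monotone (this is exactly the content of being \emph{ordered} monoids, which gives both monotonicity requirements at once), and that $\rzero_\Diag$ is the least element. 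Hence only the two distributive laws and the two annihilation laws remain. Annihilation is immediate: by the very definition of $\rmul_\Diag$, whenever one factor equals $\rzero_\Diag$ the product is $\rzero_\Diag$.

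For distributivity I would first record a lemma about zero: $\ple{\idxb,\sgr}\rsum_\Diag\ple{\idxc,\tgr}=\rzero_\Diag$ holds iff both summands equal $\rzero_\Diag$. Since $\rzero_\Diag=\ple{\izero,\rzero_\izero}$ and the sum has kind $\idxb\isum\idxc$, and since $\izero$ is least with $\idxb,\idxc\pord\idxb\isum\idxc$, the condition $\idxb\isum\idxc=\izero$ forces $\idxb=\idxc=\izero$; then the injections are identities and the grade component is $\sgr\rsum_\izero\tgr$, which is non-zero as soon as $\sgr$ or $\tgr$ is, by minimality of $\rzero_\izero$ in the grade algebra $\Diag_\izero$. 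Equivalently: if either summand differs from $\rzero_\Diag$, so does the sum. This is precisely what is needed to make the literal zero-test in the definition of $\rmul_\Diag$ compatible with the additive structure.

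I would then prove $\ple{\idxa,\rgr}\rmul_\Diag(\ple{\idxb,\sgr}\rsum_\Diag\ple{\idxc,\tgr}) = (\ple{\idxa,\rgr}\rmul_\Diag\ple{\idxb,\sgr})\rsum_\Diag(\ple{\idxa,\rgr}\rmul_\Diag\ple{\idxc,\tgr})$ by case analysis. If $\ple{\idxa,\rgr}=\rzero_\Diag$, or one of $\ple{\idxb,\sgr},\ple{\idxc,\tgr}$ equals $\rzero_\Diag$, both sides collapse to the same value using annihilation and the fact that $\rzero_\Diag$ is the additive unit. In the main case all three are distinct from $\rzero_\Diag$, so by the lemma $\ple{\idxb,\sgr}\rsum_\Diag\ple{\idxc,\tgr}\ne\rzero_\Diag$ and every product on either side is computed by the non-degenerate clause. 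Writing $\idxd=\idxa\isum\idxb\isum\idxc$, I would push all grades into kind $\idxd$: by functoriality of $\Diag$ (\cref{def:hgradealg}), each composite of refinement homomorphisms that arises collapses to a single one — for instance $\Diag_{\idxa\isum\idxb,\idxd}\circ\injl[\idxa,\idxb]=\Diag_{\idxa,\idxd}$ — and the kind components of the two sides coincide because $\isum$ is associative, commutative and idempotent, so $(\idxa\isum\idxb)\isum(\idxa\isum\idxc)=\idxd=\idxa\isum(\idxb\isum\idxc)$. Since the $\Diag_{\idxa,\idxd}$, $\Diag_{\idxb,\idxd}$, $\Diag_{\idxc,\idxd}$ are grade-algebra homomorphisms, they commute with $\rsum_\idxd$ and $\rmul_\idxd$, so both sides reduce to $\ple{\idxd,\,\Diag_{\idxa,\idxd}(\rgr)\rmul_\idxd(\Diag_{\idxb,\idxd}(\sgr)\rsum_\idxd\Diag_{\idxc,\idxd}(\tgr))}$; the grade components then agree by distributivity in the grade algebra $\Diag_\idxd$. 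The right distributive law is entirely symmetric.

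The main obstacle I anticipate is the bookkeeping in this last case: tracking which refinement homomorphism is applied where and verifying that the two sides really produce the \emph{same} kinded grade — the same kind and the same carrier element — given that $\rmul_\Diag$ tests literal equality with $\rzero_\Diag$ rather than order-equivalence to it. The zero-lemma above is exactly what reconciles these two notions in all the relevant places, so establishing it cleanly up front is the crux of the argument.
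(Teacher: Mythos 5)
Your proposal is correct, and its skeleton matches the paper's: assemble the order and the two monoid structures from \cref{prop:grd-preord,prop:sum-mon,prop:mul-mon,prop:zero-min}, note that annihilation by $\rzero_\Diag$ holds by the very definition of $\rmul_\Diag$, and then dispatch distributivity by cases on which of the three elements equal $\rzero_\Diag$ (your degenerate cases coincide with the paper's). Where you genuinely diverge is in the engine for the main case. The paper first derives four coherence equations for the injections, such as $\injl[\idxa\isum\idxb,\idxa\isum\idxc]\circ\injl[\idxa,\idxb]=\injl[\idxa,\idxb\isum\idxc]$, purely from the abstract identities \cref{eq:inj-assoc1,eq:inj-assoc2,eq:inj-comm,eq:inj-idm}, and then grinds through the nested injection expressions. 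You instead collapse everything into the common kind $\idxa\isum(\idxb\isum\idxc)$ and observe that, since $\Sh$ is a poset, any two parallel composites of refinement maps agree by functoriality (each equals the unique $\Diag_{-,\,\idxa\isum(\idxb\isum\idxc)}$), so the coherence bookkeeping is automatic and distributivity reduces in one step to distributivity in the grade algebra over that kind. This is shorter and arguably cleaner, at the cost of invoking the posetal structure directly rather than the self-contained equational calculus the paper sets up (which the paper itself derived from functoriality anyway, so nothing is lost). Your explicit zero-sum lemma --- $\ple{\idxb,\sgr}\rsum_\Diag\ple{\idxc,\tgr}=\rzero_\Diag$ iff both summands are $\rzero_\Diag$, proved via antisymmetry on kinds and positivity of sum in a grade algebra --- is a genuine improvement in rigor: the paper's final step in the main case applies the non-degenerate clause of $\rmul_\Diag$ to $\ple{\idxa,\rgr}\rmul_\Diag(\ple{\idxb,\sgr}\rsum_\Diag\ple{\idxc,\tgr})$ without ever justifying that this sum differs from $\rzero_\Diag$, which is exactly what your lemma supplies. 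You were right to identify that reconciliation as the crux.
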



\section{Grades as Java expressions}\label{sect:java}
In \cref{sect:GrFJ} we  described how a Java-like language could be equipped with \emph{grades} decorating types, taken in an arbitrary grade algebra. Moreover, in \cref{sect:combining}, we have shown that such grade algebra could have been obtained by composing, in a specific way determined by providing a minimal collection of grade homomorphisms, different grade algebras corresponding to different ways to track usage of resources.
In this way, \emph{heterogeneous} grades can coexist in the same program, and do not need to  be  fixed once and for all, but can be enriched by user-defined grade algebras and homomorphisms.
In this section, we consider the issue of providing a linguistic support to this aim. This could be done by using an ad-hoc configuration  language, however,   we believe an interesting solution is that the grade annotations in types could be written themselves in Java. 

The key idea is that such grade annotations are Java expressions of (classes implementing) a predefined interface \lstinline{Grade}, analogously to Java exceptions which are expressions of
(subclasses of) \lstinline{Exception}. Moreover, grade homomorphisms are user-defined as well.
 Namely,  a user program can include:
\begin{itemize}
\item pairs of \emph{grade classes} and \emph{grade factory classes}, each one modeling a grade algebra desired for the specific application, with the factory class providing  its  constants
\item \emph{grade homomorphism classes}, each one modeling a homomorphism from a grade algebra (class) to another.
\end{itemize}

When typechecking code with grade annotations, the grades internally used by the type system are those obtained by combining all the declared grade algebras (classes) by means of the declared grade homomorphism classes,  with  the construction described in \cref{sect:combining}. Operations on grades in the same grade algebra (class) are derived from user-defined methods, as discussed more in detail below, whereas operations on heterogeneous grades are derived as in the construction in \cref{sect:combining}.

In the following we describe each point. 

\subsection{Grade and grade factory classes}
 They  are classes implementing the following generic interfaces, respectively:\footnote{\bez We omit access modifiers to make the code lighter.\eez}
\begin{lstlisting}
interface Grade<T extends Grade<T>> {
  boolean leq(T x);
  T sum(T x); 
  T mult(T x); 
}
interface GradeFactory<T extends Grade<T>>{
  T zero();
  T one();
}
\end{lstlisting}
For instance, considering \cref{ex:kinds}, we show below the complete implementation of affinity grades, and only the skeletons of the other classes, for brevity:

\begin{lstlisting}
abstract class Affinity implements Grade<Affinity>{}

class AffinityFactory implements GradeFactory<Affinity>{
  Affinity zero(){return new ZeroA();}
  Affinity one(){return new One();}
}

class ZeroA extends Affinity{
  boolean leq(Affinity x){return true;}
  Affinity sum(Affinity x){return x;}
  Affinity mult(Affinity x){return this;}
}

class One extends Affinity{
  boolean leq(Affinity x){return !(x instanceof ZeroA);}
  Affinity sum(Affinity x){
    if (x instanceof ZeroA) return this; 
    return new Omega();
  }
  Affinity mult(Affinity x){return x;}
}
class Omega extends Affinity{
  boolean leq(Affinity x){return (x instanceof Omega);}
  Affinity sum(Affinity x){return this;}
  Affinity mult(Affinity x){
    if (x instanceof ZeroA) return x; 
    return this;
  }
}

abstract class Privacy implements Grade<Privacy> {...}

class ZeroP extends Privacy{...}
class Private extends Privacy{...}
class Public extends Privacy{...}
class PrivacyFactory implements GradeFactory<Privacy>{...}

abstract class PPrivacy implements Grade<PPrivacy> {...}

class ZeroPP extends PPrivacy{...}
class LevelA extends PPrivacy{...}
...
class LevelD extends PPrivacy{...}

class PPrivacyFactory implements GradeFactory<PPrivacy>{...}

class AffinityPrivacy implements Grade<AffinityPrivacy>{
  Affinity first; Privacy second;
  ...
}
  
class AffinityPrivacyFactory implements GradeFactory<AffinityPrivacy>{
...
}
\end{lstlisting}
 Note that for \lstinline{Affinity}, \lstinline{Privacy}, and \lstinline{PPrivacy} we could have used an \lstinline{enum} class with constants as well; here we preferred to use a hierarchy of classes,  which is a core object-oriented feature provided in the $\FJ$ calculus. More in general, note that the implementation, as expected, depends very much on the features of the target language; for instance, in Haskell we could express \lstinline{Grade} as a typeclass, and grade algebras as its instances, and the same for \lstinline{GradeHom} below.   

\subsection{Grade homomorphism classes}
 They  are classes implementing the following generic interface:
\begin{lstlisting}
interface GradeHom<T extends Grade<T>, R extends Grade<R>> {
  R apply(T x); 
}
\end{lstlisting}
For instance, the grade homomorphisms of \cref{ex:homo} can be implemented as follows:
\begin{lstlisting}
class PPtoPGradeHom implements GradeHom<PPrivacy,Privacy>{
   Privacy apply (PPrivacy p){
     if (p instanceof LevelD) return new Public();
     return new Private();
   } 
}

class APtoAGradeHom implements GradeHom<AffinityPrivacy,Affinity>{
   Affinity apply (AffinityPrivacy ap){return ap.first;}
}

class APtoPGradeHom implements GradeHom<AffinityPrivacy,Privacy>{
   Privacy apply (AffinityPrivacy ap){return ap.second;}
}
\end{lstlisting}

\EZComm{notare che bisogna usare nomi diversi per i diversi zeri}

\subsection{Predefined grade classes} The grade algebra $\NN$ can be implemented by the following classes, assumed to be predefined.
\begin{lstlisting}
abstract class Nat implements Grade<Nat> { 
  abstract <R extends Grade<R>>  R transform(GradeFactory<R> factory); 
  } 

class Zero extends Nat {
  <R extends Grade<R>> R transform(GradeFactory<R> factory){
    return factory.zero();
  }
  boolean leq(Nat x){return true;}
  Nat sum(Nat x){return x;}
  Nat mult(Nat x){return this;}
}

class Succ extends Nat {
  Nat pred; 
  Succ(Nat pred){this.pred=pred;}

  <R extends Grade<R>> R transform(GradeFactory<R> factory) { 
    return factory.one().sum(pred.transform(factory));
  }
  boolean leq(Nat x){
  if (x instanceof Zero){ return false;}
  return pred.leq(((Succ) x).pred);
  }
  Nat sum(Nat x){return new Succ(pred.sum(x));}
  Nat mult(Nat x){return pred.mult(x).sum(x);}
}

class NatFactory implements GradeFactory<Nat>{
  Nat zero(){return new Zero();}
  Nat one(){return new Succ(zero());}
}

class NtoGradeHom<R extends Grade<R>> implements GradeHom<Nat,R> {
  GradeFactory<R> factory; 
  NtoGradeHom(GradeFactory<R> factory){this.factory=factory;}
  R apply(Nat n) {return n.transform(this.factory);}
}
\end{lstlisting}
Order, sum, and multiplication on natural numbers are defined in the expected inductive way. 
An object of class \lstinline{NtoGradeHom<R>} models the unique homomorphism $\iota_\RR$ from $\NN$ to the grade algebra (modeled by) \lstinline{R}, which maps a natural number $n$ to the sum in \RR of $n$ copies of $\rone_\RR$, as defined at page \pageref{nat-hom}. To achieve this, the object is created specifying a constructor argument of the corresponding factory class.\footnote{ To this end, we need a convention to construct names of factory classes.} When the homomorphism is applied to a natural number \lstinline{n} (invoking \lstinline{apply} with argument \lstinline{n}), the auxiliary method \lstinline{transform} constructs the object of class \lstinline{R} corresponding to \lstinline{n}, using the \lstinline{zero} and \lstinline{one} method provided by the factory. 

Analogously, for the grade algebra $\Triv$ we have:
\begin{lstlisting}
class Triv implements Grade<Triv>{
   boolean leq(Triv t){return true;}
   Triv sum(Triv t){return this;}
   Triv mult(Triv t){return this;}
}
class TrivFactory implements GradeFactory<Triv>{
  Triv zero(){return new Triv();}
  Triv one(){return new Triv();}
}

class toTrivGradeHom<R extends Grade<R>>implements GradeHom<R,Triv>{
  Triv apply(R r){return new Triv();}
}
\end{lstlisting}

\subsection{Grade annotations as Java expressions}
Assuming to have the classes decribed above, the programmer could write code with grade annotations being Java expressions. For instance,  \cref{ex:ex1} could be written as follows, where annotations are in square brackets: 

\begin{lstlisting}
class Pair { A$\annotation{new One()}$ first; A$\annotation{new One()}$ second; 
  A$\annotation{new ZeroA()}$ getFirstZero() $\annotation{new One()}${this.first}
  A$\annotation{new One()}$ getFirstAffine() $\annotation{new One()}${this.first}
  A$\annotation{new Omega()}$ getFirst() $\annotation{new One()}${this.first}
}
\end{lstlisting}
Typechecking could then be performed in two steps:
\begin{enumerate}
\item Code defining grades, which is assumed to be standard (that is, non-graded) Java code, is typechecked by the standard compiler.
\item Graded code (containing grade annotations written in Java) is typechecked accordingly to the graded type system in \cref{fig:typing}, where the underlying grade algebra is obtained by composing, by the construction described in \cref{sect:combining}, the user-defined grade algebras through the user-defined grade homomorphisms.
Each user-defined algebra has as carrier (set of grades) the Java values which are instances of the corresponding class, and the operations are computed by executing user-defined methods in such class. For instance, to compute the sum $\val_1\rsum\val_2$ of two grades which are values of a grade class, we evaluate $\SumMethod{\val_1}{\val_2}$. Analogously to compute the result of a grade homomorphism.
\end{enumerate}
 For  the whole process to work correctly, the following are responsabilities of \mbox{the programmer:}
\begin{itemize}
\item Grade classes, grade factory classes, and grade homomorphism classes should satisfy the axioms required for the structures they model, e.g., that the sum derived from \lstinline{sum} methods is commutative and associative. The same happens, for instance, in Haskell, when one defines instances of \lstinline{Functor} or \lstinline{Monad}.
\item Code defining grades should be \emph{terminating}, since, as described above, the second typechecking step requires to \emph{execute} code typechecked in the first step. 
\item Finally, the relation among grade classes implicitly defined by declaring grade homomorphism classes should actually be a direct refinement relation, that is, should satisfy the two requirements: (1) there exists at most one path between two grade classes, and (2) each two grade classes with a common ancestor have a \emph{least} common ancestor. These are requirements easy to check, similarly to the check that inheritance is acyclic, or that there are no diamonds in multiple inheritance. 
\end{itemize}
An interesting
point is that implementations could use in a parametric way auxiliary tools,
notably a termination checker to prevent divergence in methods implementing grade operations, and/or a verifer to ensure that they provide the
required properties.

\section{Related work}
\label{sect:related} 

The two contributions which have been more inspiring for the work in this paper are the instrumented semantics proposed in \cite{ChoudhuryEEW21} and the Granule language \cite{OrchardLE19}.
In \cite{ChoudhuryEEW21}, the authors develop \textsc{GraD}, a graded dependent type system that includes functions,
tensor products, additive sums, and a unit type. \bez Moreover, they define an instrumented operational semantics which tracks usage of resources, and prove that the graded type system is sound with respect to such instrumented semantics. \eez In this paper, we take the same approach to define a resource-aware semantics, parametric on an arbitrary grade algebra. However, differently from \cite{ChoudhuryEEW21}, where such semantics is defined on typed terms, with the only aim to show the role of the type system, the definition of our semantics is given \emph{independently} from the  type system, as is the standard approach in calculi. That is, the aim is also to provide a simple purely semantic model which takes into account usage of resources.

Granule \cite{OrchardLE19} is a functional language equipped with graded modal types, where different kinds of grades can be used at the same time, including naturals for exact usage, security levels, intervals, infinity, and products of coeffects. We owe to Granule the idea of allowing different kinds of grades to coexist, and the overall objective to exploit graded modal types in a programming language.
Concerning heterogeneous grades, in this paper we push forward the Granule approach, since we do not  want  this grade algebra  to be fixed, but extendable by the programmer with user-defined grades. To this aim we define the construction in \cref{sect:combining}. Concerning the design of a  graded programming language, here we investigate the object-oriented rather than functional paradigm, taking some solutions which seem more adequate in that context, e.g., to have once-graded types and no boxing/unboxing. The design and implementation of a real Java-like language are not objectives of the current paper; however, we outline in \cref{sect:java} a possible interesting solution, where grade annotations are written in the language itself.  

Coming more in general to resource-aware type systems, coeffects were  first  introduced by \cite{PetricekOM13} and further analyzed by \cite{PetricekOM14}. 
In particular, \cite{PetricekOM14} develops a  generic  coeffect system which augments the simply-typed $\lambda$-calculus with context annotations indexed by \emph{coeffect shapes}. 
The proposed framework is very abstract, and the authors focus only on two opposite instances: 
structural (per-variable) and flat (whole context) coeffects, identified by specific choices of context shapes.

Most of the  subsequent  literature on coeffects focuses on structural ones, for which there is a clear algebraic description in terms of semirings. 
This was first noticed by \cite{BrunelGMZ14}, who developed a framework for structural coeffects 
for a functional language. 
This approach is inspired by a generalization of the exponential modality of linear logic, see, e.g., \cite{BreuvartP15}. 
That is, the distinction between linear and  unrestricted  variables of linear systems is generalized to have variables  
decorated by coeffects, or {\em grades}, that determine how much they can be used. 
In this setting, many advances have been made to combine coeffects with other programming features, such as 
computational effects \cite{GaboardiKOBU16,OrchardLE19,DalLagoG22}, 
dependent types \cite{Atkey18,ChoudhuryEEW21,McBride16}, and 
polymorphism \cite{AbelB20}. 
Other graded type systems are explored in \cite{Atkey18,GhicaS14,AbelB20},  also combining effects and coeffects \cite{GaboardiKOBU16,OrchardLE19}.  
In all these papers, the process of
tracking usage through grades is a powerful method of instrumenting
type systems with analyses of irrelevance and linearity that have practical benefits like erasure
of irrelevant terms (resulting in speed-up) and compiler optimizations (such as in-place update). 

As already mentioned, \cite{McBride16} and \cite{WoodA22}  observed that contexts in a structural coeffect system form a module over the semiring of grades, event though they do not use this structure in its full generality, restricting themselves to free modules, that is, to structural coeffect systems. Recently, \cite{BianchiniDGZS22} shows a significant non-structural instance, namely, a coeffect system to track sharing in the imperative paradigm.


\section{Conclusion}\label{sect:conclu}
The contributions of the paper can be summarized as follows:
\begin{itemize}
\item Resource-aware extension of $\FJ$ reduction, parametric on an arbitrary grade algebra.
\item Resource-aware extension of the type system, proved to ensure soundness-may of the resource-aware semantics.
\item Formal construction which, given grades of different kinds and grade transformations corresponding to a refinement relation among kinds (formally, a functor over a grade signature), provides a grade algebra of \emph{heterogeneous grades}.
\item Notion of \emph{direct refinement} allowing a minimal and easy to check way to specify the above functor.  
\item Outline of a Java extension where grades are user-defined, and grade annotations are written in the language itself.
\end{itemize}
As already noted, the key novel ideas in the contributions above are mostly independent from the language. So, a first natural direction for future work is to explore their incarnation in another paradigm, e.g., the functional one. That would include the definition of a parametric resource-aware reduction independent from types, the design of a type system with once-graded types, and possibly the design of user-defined grades in a functional language, e.g., in Haskell by relying on the typeclass feature. Though the overall approach should still apply, we  expect the investigation to be significant due to the specific features of the paradigm. 

The resource-aware operational semantics defined in this paper requires \emph{annotations} in subterms, with the only aim to fix their reduction grade in the reduction of the enclosing term.
As mentioned in \cref{sect:resource-aware}, adopting a big-step style would clearly remove the  need of such technical artifice; only annotations in constructor subterms should be kept, since they express a true constraint on the semantics. Thus, a very interesting alternative to be studied is a big-step version of resource-aware semantics, allowing a more abstract and clean presentation. With this choice, we should employ, to prove soundness-may, the techniques recently introduced in \cite{DagninoBZD20,Dagnino22}. 

Coming back to Java-like languages, the $\FJ$ language considered in the paper does not include imperative features. Adding mutable memory leads to many significant research directions.
First, besides the model presented in this paper, and in general in literature, where ``using a resource'' means ``replacing a variable with its value'', another view is possible where the resource is the memory and ``using'' means ``interacting with the memory''.  
Moreover, we would like to investigate more in detail how to express by grade algebras forms of usages which are typical of the imperative paradigm, such as the $\readonly$ modifier, and, more in general, \emph{capabilities} \cite{HallerOdersky10,Gordon20}.


\bibliography{main}

\appendix 

\section{Proofs of \cref{sect:combining}} 

\begin{proof}[Proof of \cref{prop:ini-fin-gralg}] 
\cref{prop:ini-fin-gralg:2} is straightforward as the singleton set is a terminal object in  the category of sets and functions. 
Towards a proof of \cref{prop:ini-fin-gralg:1}, 
let \mbox{\fun{f}{\NN}{\RR}} be a grade algebra homomorphism and note that, 
since $n = 1 + \cdots + 1$ ($n$ times), for all $n \in \N$, and $f$ preserves sums and the unit, 
we get $f(n) = f(1) \rsum_\RR \cdots \rsum_\RR f(1) = \rone_\RR \rsum_\RR\cdots\rsum_\RR\rone_\RR$ ($n$ times). 
That is, we have $f(n) = \iota_\RR(n)$, for all $n \in \N$. 
Therefore, to conclude, we just have to show that the map 
$\iota_\RR$ is a grade algebra homomorphism. 
The fact that $\iota_\RR(0) = \rzero_\RR$ and $\iota_\RR(1) = \rone_\RR$ is immediate. 
The fact that $\iota_\RR(n + m) = \iota_\RR(n) \rsum_\RR \iota_\RR(m)$ and ${\iota_\RR(n\cdot m)} = \iota_\RR(n)\rmul_\RR\iota_\RR(m)$ follows from a straightforward induction on $n$, using distributivity and nullity properties of the grade algebra \RR. 
Finally, to prove monotonicity, consider $n \le m$ and proceed by induction on $m - n$. 
If $m - n = 0$, then $n = m$ and so the thesis is trivial. 
If $m - n = k + 1$, we have $m - (n + 1) = k$, then by induction hypothesis we get $\iota_\RR(n + 1) \rord_\RR(\iota_\RR(m)$. 
Since $\iota_\RR(n+1) = \iota_\RR(n) \rsum_\RR \iota_\RR(1)$ and $\rzero_\RR \rord_\RR \iota_\RR(1)$, we get 
$\iota_\RR(n) = \iota_\RR(n) \rsum_\RR \rzero_\RR \rord_\RR \iota_\RR(n) \rsum_\RR \iota_\RR(1) \rord_\RR \iota_\RR(m)$, as needed. 
\end{proof}

\begin{proof}[Proof of \cref{prop:grd-preord}] 
We have to prove that $\rord_\Diag$ is reflexive, transitive and antisymmetric. 
 
Given an element $\ple{\idxa,\rgr}\in \RC{\GrD\Diag}$, since $\idxa\pord\idxa$ and $\Diag_{\idxa,\idxa} = \id_{\Diag_\idxa}$, by functoriality of $\Diag$, we have 
\[ \Diag_{\idxa,\idxa}(\rgr) = \id_{\Diag_\idxa}(\rgr) = \rgr \]
hence $\ple{\idxa,\rgr}\rord_\Diag\ple{\idxa,\rgr}$, which proves reflexivity. 

Given $\ple{\idxa,\rgr}\rord_\Diag\ple{\idxb,\sgr}\rord_\Diag\ple{\idxc,\tgr}$, we know that $\idxa\pord\idxb\pord\idxc$ and 
$\Diag_{\idxa,\idxb}(\rgr)\rord_\idxb \sgr$ and $\Diag_{\idxb,\idxc}(\sgr)\rord_\idxc\tgr$ and, by functoriality of $\Diag$, 
$\Diag_{\idxa,\idxc} = \Diag_{\idxb,\idxc}\circ\Diag_{\idxa,\idxb}$. 
Therefore, we get 
\begin{align*}
\Diag_{\idxa,\idxc}(\rgr) 
  &= \Diag_{\idxb,\idxc}(\Diag_{\idxa,\idxb}(\rgr)) 
   \rord_\idxc \Diag_{\idxb,\idxc}(\sgr) 
   \rord_\idxc \tgr 
\end{align*}
hence $\ple{\idxa,\rgr} \rord_\Diag \ple{\idxc,\tgr}$, which proves transitivity. 

Given $\ple{\idxa,\rgr}\rord_\Diag\ple{\idxb,\sgr}\rord_\Diag\ple{\idxa,\rgr}$, we know that 
$\idxa\pord\idxb\pord\idxa$ and 
$\Diag_{\idxa,\idxb}(\rgr)\rord_\idxb\sgr$ and $\Diag_{\idxb,\idxa}(\sgr)\rord_\idxa\rgr$. 
Since $\pord$ is antisymmetric, we get $\idxa = \idxb$, 
hence $\Diag_{\idxa,\idxb} = \Diag_{\idxb,\idxa} = \Diag_{\idxa,\idxa}$, which, 
and, by functoriality of $\Diag$, is equal to the identity $\id_{\Diag_\idxa}$. 
This implies that $\rgr\rord_\idxa \sgr$ and $\sgr\rord_\idxa\rgr$, which implies $\rgr = \sgr$ by antisymmetry of $\rord_\idxa$. 
\end{proof}

\begin{proof}[Proof of \cref{prop:sum-mon}] 
We check the four properties. 
\begin{description}
\item [Monotonicity of $\rsum_\Diag$]
Consider $\ple{\idxa,\rgr}\rord_\Diag\ple{\idxb,\sgr}$ and $\ple{\idxa',\rgr'}\rord_\Diag\ple{\idxb',\sgr'}$, then we know that 
$\idxa\pord\idxb$ and $\idxa'\pord\idxb'$ and 
$\Diag_{\idxa,\idxb}(\rgr) \rord_\idxb \sgr$ and $\Diag_{\idxa',\idxb'}(\rgr')\rord_{\idxb'}\sgr'$. 
By monotonicity of $\injl[\idxb,\idxb']$ and $\injr[\idxb,\idxb']$ and $\rsum_{\idxb\isum\idxb'}$, we get 
\[ 
\injl[\idxb,\idxb'](\Diag_{\idxa,\idxb}(\rgr)) \rsum_{\idxb\isum\idxb'} \injr[\idxb,\idxb'](\Diag_{\idxa',\idxb'}(\rgr')) 
\rord_{\idxb\isum\idxb'}
\injl[\idxb,\idxb'](\sgr) \rsum_{\idxb\isum\idxb'} \injr[\idxb,\idxb'](\sgr') 
\]
By monotonicity of $\isum$, we get $\idxa\isum\idxa'\pord\idxb\isum\idxb'$, then, by functoriality of $\Diag$ and by definition of $\injl$ and $injr$, we have 
\[
\injl[\idxb,\idxb'] \circ \Diag_{\idxa,\idxb} = \Diag_{\idxa\isum\idxa',\idxb\isum\idxb'} \circ \injl[\idxa,\idxa']
\qquad 
\injr[\idxb,\idxb'] \circ \Diag_{\idxa',\idxb'} = \Diag_{\idxa\isum\idxa',\idxb\isum\idxb'} \circ \injr[\idxa,\idxa']
\]
therefore, we get 
\[
\begin{array}{l}
\Diag_{\idxa\isum\idxa',\idxb\isum\idxb'}(\injl[\idxa,\idxa'](\rgr)\rsum_{\idxa\isum\idxa'} \injr[\idxa,\idxa'](\rgr'))
\rord_{\idxb\isum\idxb'} \\
\injl[\idxb,\idxb'](\Diag_{\idxa,\idxb}(\rgr)) \rsum_{\idxb\isum\idxb'} \injr[\idxb,\idxb'](\Diag_{\idxa',\idxb'}(\rgr'))
\rord_{\idxb\isum\idxb'}\\
\injl[\idxb,\idxb'](\sgr) \rsum_{\idxb\isum\idxb'} \injr[\idxb,\idxb'](\sgr') 
\end{array}
\]
This proves 
$\ple{\idxa,\rgr}\rsum_\Diag\ple{\idxa',\rgr'} \rord_\Diag \ple{\idxb,\sgr}\rsum_\Diag\ple{\idxb',\sgr'}$, as needed. 
\item [Associativity of $\rsum_\Diag$]
Consider elements \ple{\idxa,\rgr}, \ple{\idxb,\sgr} and \ple{\idxc,\tgr} in $\RC{\GrD\Diag}$. Using \cref{eq:inj-assoc1,eq:inj-assoc2}, we have the following
\begin{align*}
(\ple{\idxa,\rgr}\rsum_\Diag\ple{\idxb,\sgr})\rsum_\Diag\ple{\idxc,\tgr} 
  &= \ple{\idxa\isum\idxb, \injl[\idxa,\idxb](\rgr)\rsum\injr[\idxa,\idxb](\sgr)}\rsum_\Diag\ple{\idxc,\tgr} \\
  &= \ple{(\idxa\isum\idxb)\isum\idxc, \injl[\idxa\isum\idxb,\idxc](\injl[\idxa,\idxb](\rgr)\rsum\injr[\idxa,\idxb](\sgr))\rsum\injr[\idxa\isum\idxb,\idxc](\tgr)} \\ 
  &= \ple{\idxa\isum(\idxb\isum\idxc), \injl[\idxa,\idxb\isum\idxc](\rgr)\rsum(\injr[\idxa,\idxb\isum\idxc](\injl[\idxb,\idxc](\sgr)\rsum\injr[\idxb,\idxc](\tgr))} \\
  &= \ple{\idxa,\rgr}\rsum_\Diag(\ple{\idxb,\sgr}\rsum_\Diag\ple{\idxc,\tgr}) 
\end{align*}
\item [Commutativity of $\rsum_\Diag$]
Consider elements $\ple{\idxa,\rgr}$ and $\ple{\idxb,\sgr}$ in $\RC{\GrD\Diag}$. 
Using \cref{eq:inj-comm}, we get the following 
\begin{align*}
\ple{\idxa,\rgr}\rsum_\Diag\ple{\idxb,\sgr} 
  &= \ple{\idxa\isum\idxb, \injl[\idxa,\idxb](\rgr) \rsum_{\idxa\isum\idxb} \injr[\idxa,\idxb](\sgr)}
   = \ple{\idxa\isum\idxb, \injr[\idxa,\idxb](\sgr)\rsum_{\idxa\isum\idxb} \injl[\idxa,\idxb](\rgr)} \\
  &= \ple{\idxb\isum\idxa, \injl[\idxb,\idxa](\sgr) \rsum_{\idxb\isum\idxa} \injr[\idxb,\idxa](\rgr)} 
   = \ple{\idxb,\sgr} \rsum_\Diag \ple{\idxa,\rgr} 
\end{align*}
\item [Neutrality of $\rzero_\Diag$]
Consider $\ple{\idxa,\rgr}$ in $\RC{\GrD\Diag}$. 
Using \cref{eq:inj-id1,eq:inj-id2,eq:inj-comm}, we the the following
\begin{align*}
\ple{\idxa,\rgr} \rsum_\Diag \rzero_\Diag 
  &= \ple{\idxa\isum\izero, \injl[\idxa,\izero](\rgr)\rsum_{\idxa\isum\izero} \injr[\idxa,\izero](\rzero_\izero)}
   = \ple{\idxa,\rgr\rsum_\idxa \injz[\idxa](\rzero_\izero)} \\ 
  &= \ple{\idxa,\rgr\rsum_\idxa\rzero_\idxa} 
   = \ple{\idxa,\rgr} 
\end{align*}
\end{description}
\end{proof}

\begin{proof}[Proof of \cref{prop:zero-min}] 
Since $\injz[\idxa]$ is a grade algebra homomorphism, we have $\injz[\idxa](\rzero_\izero) = \rzero_\idxa$ and by definition of grade algebra we have $\rzero_\idxa \rord_\idxa \rgr$. 
Therefore, $\injz[\idxa](\rzero_\izero \rord_\idxa\rgr$, which proves the thesis by definition of $\rord_\Diag$. 
\end{proof}

\begin{proof}[Proof of \cref{prop:mul-mon}] 
To prove the equational axioms of monoid (associativity, and neutrality) the proof is the same as \cref{prop:sum-mon} when all the involved elements are different from $\rzero_\izero$, and it is trivial otherwise. 
Indeed, if one of such elements is $\rzero_\izero$, then the whole multiplication gives $\rzero_\izero$ by definition. 

To prove monotonicity of $\rmul_\Diag$, consider $\ple{\idxa,\rgr}\rord_\Diag\ple{\idxb,\sgr}$ and $\ple{\idxa',\rgr'}\rord_\Diag\ple{\idxb',\sgr'}$ in $\RC{\GrD\Diag}$. 
If they are all different from $\rzero_\izero$, the proof goes as in \cref{prop:sum-mon}. 
If either $\ple{\idxa,\rgr} = \rzero_\izero$ or $\ple{\idxa',\rgr'} = \rzero_\izero$, then 
$\ple{\idxa,\rgr}\rmul_\Diag\ple{\idxa',\rgr'} = \rzero_\izero$ and so the thesis follows by \cref{prop:zero-min}. 
If $\ple{\idxb,\sgr} = \rzero_\izero$, then $\ple{\idxa,\rgr}\rord_\Diag\ple{\idxb,\sgr} = \ple{\izero,\rzero_\izero}$ implies 
$\idxa\pord\izero$ and $\Diag_{\idxa,\izero}(\rgr) \rord_\izero \rzero_\izero$  and, since $\izero\pord\idxa$ by definition of grade signature, we get $\idxa = \izero$. 
Therefore, by functoriality of $\Diag$, we have $\Diag_{\idxa,\izero} = \id_{\Diag_\izero}$, hence $\rgr\rord_\izero \rzero_\izero$ which implies $\rgr = \rzero_\izero$, since $\Diag_\izero$ is a grade algebra. 
This proves that $\ple{\idxa,\rgr} = \ple{\izero,\rzero_\izero} = \rzero_\Diag$ and so we get 
$\ple{\idxa,\rgr}\rmul_\Diag\ple{\idxa',\rgr'} = \rzero_\Diag = \ple{\idxb,\sgr}\rmul_\Diag\ple{\idxb',\sgr'}$, as needed. 
Finally, the case $\ple{\idxb',\sgr'} = \ple{\izero,\rzero_\izero}$ is analogous, hence we get the thesis. 
\end{proof}

\begin{proof}[Proof of \cref{thm:multigrade}] 
By \cref{prop:sum-mon,prop:mul-mon} we have both the additive and multiplicative monoid structures. 
\cref{prop:zero-min} proves that $\rzero_\Diag$ is the least element of the order $\rord_\Diag$. 
The fact that multiplying by $\rzero_\izero$ we get again $\rzero_\izero$ holds by definition. 
Hence, it remains to prove that $\rmul_\Diag$ distributes over $\rsum_\Diag$. 
To this end, 
consider $\ple{\idxa,\rgr}$, \ple{\idxb,\sgr} and \ple{\idxc,\tgr} in $\RC{\GrD\Diag}$ and assume they are all different from $\rzero_\izero$. 
Using \cref{eq:inj-assoc1,eq:inj-assoc2,eq:inj-comm,eq:inj-idm}, we get the following  equations: 
\begin{align*}
\injl[\idxa\isum\idxb,\idxa\isum\idxc] \circ \injl[\idxa,\idxb] 
  &= \injl[(\idxa\isum\idxb)\isum\idxa,\idxc] \circ \injl[\idxa\isum\idxb,\idxa] \circ \injl[\idxa,\idxb] \\
  &= \injl[\idxa\isum(\idxa\isum\idxb),\idxc] \circ \injr[\idxa,\idxa\isum\idxb] \circ \injl[\idxa,\idxb] \\
  &= \injl[(\idxa\isum\idxa)\isum\idxb,\idxc] \circ \injl[\idxa\isum\idxa,\idxb] \circ \injl[\idxa,\idxa] \\ 
  &= \injl[\idxa\isum\idxb,\idxc] \circ \injl[\idxa,\idxb] 
   = \injl[\idxa,\idxb\isum\idxc] \\
\injr[\idxa\isum\idxb,\idxa\isum\idxc] \circ \injl[\idxa,\idxc] 
  &= \injl[(\idxa\isum\idxb)\isum\idxa,\idxc] \circ \injr[\idxa\isum\idxb,\idxa] 
   = \injl[\idxa\isum(\idxa\isum\idxb),\idxc] \circ \injl[\idxa,\idxa\isum\idxb] \\
  &= \injl[(\idxa\isum\idxa)\isum\idxb,\idxc] \circ \injl[\idxa\isum\idxa,\idxb] \circ \injl[\idxa,\idxa] \\
  &= \injl[\idxa\isum\idxb,\idxc] \circ \injl[\idxa,\idxb] \\
   &= \injl[\idxa,\idxb\isum\idxc] \\
\injl[\idxa\isum\idxb,\idxa\isum\idxc] \circ \injr[\idxa,\idxb] 
  &= \injl[(\idxa\isum\idxb)\isum\idxa,\idxc] \circ \injl[\idxa\isum\idxb,\idxa] \circ \injr[\idxa,\idxb] \\ 
  &= \injl[\idxa\isum(\idxa\isum\idxb),\idxc] \circ \injr[\idxa,\idxa\isum\idxb] \circ \injr[\idxa,\idxb] 
   = \injl[(\idxa\isum\idxa)\isum\idxb,\idxc] \circ \injr[\idxa\isum\idxa,\idxb] \\
  &= \injr[\idxa,\idxb\isum\idxc]\circ\injl[\idxb,\idxc] \\
\injr[\idxa\isum\idxb,\idxa\isum\idxc]\circ \injr[\idxa,\idxc] 
  &= \injr[(\idxa\isum\idxb)\isum\idxa,\idxc] 
   = \injr[\idxa\isum\idxb,\idxc] \\
  &= \injr[\idxa,\idxb\isum\idxc] \circ \injr[\idxb,\idxc] 
\end{align*}
which imply the following
\begin{align*}
(\ple{\idxa,\rgr} & \rmul_\Diag\ple{\idxb,\sgr}) \rsum_\Diag (\ple{\idxa,\rgr} \rmul_\Diag\ple{\idxc,\tgr}) =  \\ 
  &= \ple{(\idxa\isum\idxb)\isum(\idxa\isum\idxc), \injl[\idxa\isum\idxb,\idxa\isum\idxc](\injl[\idxa,\idxb](\rgr)\rsum\injr[\idxa,\idxb](\sgr)) \rsum \injr[\idxa\isum\idxb,\idxa\isum\idxc](\injl[\idxa,\idxc](\rgr)\rsum\injr[\idxa,\idxc](\tgr))} \\
  &= \ple{\idxa\isum(\idxb\isum\idxc), (\injl[\idxa,\idxb\isum\idxc](\rgr)\rmul\injr[\idxa,\idxb\isum\idxc](\injl[\idxb,\idxc](\sgr))) \rsum (\injl[\idxa,\idxb\isum\idxc](\rgr)\rmul\injr[\idxa,\idxb\isum\idxc](\injr[\idxb,\idxc](\tgr)))} \\
  &= \ple{\idxa\isum(\idxb\isum\idxc), \injl[\idxa,\idxb\isum\idxc](\rgr)\rmul \injr[\idxa,\idxb\isum\idxc](\injl[\idxb,\idxc](\sgr)\rsum\injr[\idxb,\idxc](\tgr)) }  \\ 
  &= \ple{\idxa,\rgr}\rmul_\Diag(\ple{\idxb,\sgr}\rsum_\Diag\ple{\idxc,\tgr}) 
\end{align*}
which proves distributivity when all the elements are different from $\rzero_\Diag$. 

Now, suppose that $\ple{\idxa,\rgr} = \rzero_\Diag$,  then we have 
$\ple{\idxa,\rgr}\rmul_\Diag(\ple{\idxb,\sgr}\rsum_\Diag\ple{\idxc,\tgr}) = \ple{\idxa,\rgr}\rmul_\Diag\ple{\idxb,\sgr} = \ple{\idxa,\rgr}\rmul_\Diag\ple{\idxc,\tgr} = \rzero_\Diag$, 
hence distributivity trivialy holds. 
Finally, suppose $\ple{\idxb,\sgr} = \rzero_\Diag$ (the case $\ple{\idxc,\tgr} = \rzero_\Diag$ is similar), then we have 
$\ple{\idxb,\sgr}\rsum_\Diag\ple{\idxc,\tgr} = \ple{\idxc,\tgr}$ and $\ple{\idxa,\rgr}\rmul_\Diag\ple{\idxb,\sgr} = \rzero_\Diag$. 
Therefore, we get 
\begin{align*}
\ple{\idxa,\rgr} \rmul_\Diag (\ple{\idxb,\sgr}\rsum_\Diag\ple{\idxc,\tgr}) 
  &= \ple{\idxa,\rgr}\rmul_\Diag\ple{\idxc,\tgr} 
   = \rzero_\Diag\rsum_\Diag \ple{\idxa,\rgr}\rmul_\Diag\ple{\idxc,\tgr}  \\ 
  &= \ple{\idxa,\rgr}\rmul_\Diag\ple{\idxb,\sgr}  \rsum_\Diag \ple{\idxa,\rgr}\rmul_\Diag\ple{\idxc,\tgr} 
\end{align*}
\end{proof}

\end{document}